\documentclass[11pt]{article}

\usepackage[top=1.15in,bottom=1.15in,left=1.15in,right=1.15in]{geometry}
\usepackage{amsmath,amssymb,amsthm}
\usepackage{graphicx}
\usepackage{booktabs}
\usepackage{enumerate}
\usepackage{setspace}
\usepackage{xcolor}
\usepackage{natbib}
\usepackage{authblk}
\usepackage{hyperref}
\hypersetup{
  colorlinks=true,
  linkcolor=blue,
  citecolor=blue,
  urlcolor=blue
}

\theoremstyle{plain}
\newtheorem{theorem}{Theorem}

\theoremstyle{remark}
\newtheorem{Remark}{Remark}
\theoremstyle{definition}

\newtheorem{Condition}{Condition}
\newtheorem{Example}{Example}

\def\proof{\par\noindent\textit{Proof.}\ }

\def\epf{\quad $\blacksquare$}

\def\hat{\widehat}

\def\mathX{\mathcal{X}}
\def\mathY{\mathcal{Y}}
\def\mathZ{\mathcal{Z}}
\def\mathH{\mathcal{H}}

\def\mathA{\mathcal{A}}
\def\mathB{\mathcal{B}}
\def\mathV{\mathcal{V}}
\def\FDP{\text{FDP}}
\def\FDR{\text{FDR}}
\def\FF{\mathbb{F}}

\title{Revisiting dependence in multiple testing: empirical distribution approaches for FDP control}

\author[1]{Fangyong Zheng}
\author[2]{Pengfei Li}
\author[3]{Yuan Jiang}
\author[1]{Tao Yu}

\affil[1]{Department of Statistics and Data Science, National University of Singapore, 117546, Singapore}
\affil[2]{Department of Statistics and Actuarial Sciences, University of Waterloo, N2L 3G1, ON, Canada}
\affil[3]{Department of Statistics, Oregon State University, Corvallis, 97331, OR, United States}

\date{\small
ORCID: Pengfei Li 0000-0003-2165-9157; Yuan Jiang 0000-0001-6409-9159; Tao Yu 0000-0002-7921-1403\\
\texttt{yu.tao@nus.edu.sg}}

\begin{document}
\maketitle

\begin{abstract}
Large-scale multiple hypothesis testing is central to the analysis of high-throughput data, where controlling false discoveries is critical. Classical procedures typically rely on theoretical null distributions and often adjust for dependence among test statistics, but these approaches may be misleading when the empirical distribution of null statistics deviates from theoretical assumptions. Motivated by this observation, we investigate the role of the empirical cumulative distribution function (c.d.f.) of null test statistics in controlling the false discovery proportion (FDP). We first show that, under an oracle scenario where the empirical c.d.f. of the test statistics for all null hypotheses is known, FDP control can be achieved optimally regardless of the dependence structure, highlighting that explicit modeling of dependence may be unnecessary. Building on this insight, we propose an empirical c.d.f.-based FDP control (eFDP) method, implemented via a multivariate mixture model framework and a nonparametric estimation procedure for the empirical c.d.f.s, establish its asymptotic convergence, and construct an FDP control procedure that achieves asymptotic FDP control. Extensive simulations demonstrate that eFDP attains more accurate FDP control and higher power than existing approaches, particularly under strong dependence, and analysis of a high-dimensional breast cancer gene expression dataset confirms its practical utility.
\end{abstract}

\noindent\textbf{Keywords:} Empirical null distribution, False discovery proportion, Multivariate mixture models, Multiple testing with dependence


\section{Introduction}
The problem of large-scale multiple hypothesis testing—where hundreds or thousands of tests are conducted simultaneously—commonly arises in the statistical analysis of high-throughput data. Early work in this area focused on controlling the familywise error rate (FWER) \citep{bonferroni1936teoria, vsidak1967rectangular, holm1979simple, simes1986improved, holland1987improved, hochberg1988sharper, rom1990sequentially} and the generalized familywise error rate (gFWER) \citep{dudoit2004multiple, pollard2004choice, lehmann2012generalizations}, which limit the number of allowable false discoveries to one or a small fixed number. However, as the number of simultaneous tests increases, controlling FWER or gFWER becomes overly conservative and substantially reduces statistical power.

\par
To better address the multiplicity problem, \citet{benjamini1995controlling} introduced the false discovery rate (FDR) for large-scale multiple hypothesis testing. The FDR is defined as the expected value of the FDP, where the FDP is the proportion of false discoveries among all discoveries, given by 
\[\FDR = E(\FDP) = E\left( \frac{V}{R} \right),
\]
with the convention that $\FDP = 0$ when $R = 0$, and where $R$ and $V$ denote the numbers of discoveries and false discoveries, respectively. Since its introduction, a wide range of procedures have been developed to control the FDR, including the Benjamini--Hochberg (BH) procedure \citep{benjamini1995controlling}, the Benjamini--Yekutieli (BY) procedure \citep{benjamini2001control}, the adaptive $p$-value procedure \citep{benjamini2000adaptive}, the $q$-value procedure \citep{storey2002direct}, and the local false discovery rate (Lfdr) approach \citep{efron2001empirical}, among others. These procedures are known to control the FDR under independence of the test statistics (or equivalently, the $p$-values), and many also remain effective under certain forms of dependence. For example, \citet{benjamini2001control} showed that the BH procedure controls the FDR under positive regression dependency on a subset (PRDS) of the true null hypotheses, while \citet{StoreyTaylorSiegmund2004} established FDR control of the $q$-value procedure under weak dependence. The BY procedure, in the same spirit as the BH procedure, is able to control the FDR under general dependence structures; however, it is often overly conservative in practice, resulting in a substantial loss of testing power.

In practice, the dependence structure among test statistics can be arbitrary and often more complex, thus causing additional complications for controlling or estimating FDR. \citet{efron2007correlation} showed that the correlation structure plays a critical role in multiple hypothesis testing procedures. \citet{schwartzman2011effect} found the explicit impact of correlations on the mean and variance of the FDR estimator and argued that the correlations may increase the bias and variance of the estimator substantially. \citet{heesen2015inequalities} studied upper and lower bounds for FDR under various dependence structures of $p$-values. \citet{mei2024asymptotic} further showed how weak dependence can influence the variance of FDP even though it does not affect FDR asymptotically.

A series of large-scale multiple hypothesis testing methods have been developed to address the dependence structures among test statistics. For example, \citet{sun2009large} used a Markov chain to model local correlations among tests and introduced a multiple testing procedure based on the hidden Markov model. \citet{friguet2009factor}, \citet{fan2012estimating}, and \citet{FanHan2017FDP} considered a multivariate Gaussian distribution for the test statistics and proposed to estimate FDR by estimating and subtracting off the principal component structure in the correlations among test statistics. \citet{wei2008incorporating} and \citet{tansey2018false} incorporated spatial structure into the probability components of the ``two groups model'' (a.k.a., ``mixture model'') for large multiple-testing problems. \citet{ghosal2011predicting} modeled the distribution of the probit transform of the $p$-values by mixtures of multivariate skew-normal distributions. \citet{fithian2022conditional} introduced the dependence-adjusted Benjamini--Hochberg (dBH) procedure to achieve finite-sample FDR control for dependent tests by adaptively calibrating a separate BH-adjusted $p$-value cutoff for each hypothesis. Other works such as \citet{xie2011optimal}, \citet{sun2015false}, and \citet{heller2021optimal} obtained precise results by finding optimal methods for FDR control with dependent test statistics. More recently, \citet{DuGuoSunZou2023SDA} proposed a symmetrized data aggregation approach that achieves FDR control under general dependence without explicitly modeling the dependence structure. Their method constructs test statistics with a symmetry property under the null, enabling valid FDR control and improved power in complex dependence settings.

Most of the above works derived multiple testing procedures by directly modeling or adjusting for the dependence structure among test statistics. Alternatively, multiple testing procedures can be derived using the distribution of the test statistics under the null hypotheses \citep{efron2004large, efron2007correlation, efron2007size}. In this line of research, \citet{efron2004large} advocated the use of the empirical null distribution of the test statistics instead of its theoretical counterpart. Likewise, the test-statistic dependence affects the empirical distribution of the test statistics under the null hypotheses, for example, shifts in the center or changes in the dispersion. To address this issue, \citet{efron2007size} proposed estimating an empirical null distribution using either ``central matching'' or ``MLE fitting'' within a parametric framework. \citet{jin2007estimating} targeted this null distribution with a frequency-domain approach based on the empirical characteristic function to estimate the same parameters. \citet{schwartzman2008empirical} proposed a ``mode matching'' method for fitting an empirical null distribution when the theoretical null distribution belongs to any exponential family. Other parametric estimation methods include \citet{park2011estimation} and \citet{gauran2018empirical} that used mixture of normal distributions and zero-inflated discrete mixture distributions, respectively. Although differing in methodology, these works share the common feature of relying on a parametric model that the empirical null distribution follows, and together they highlight the importance of explicitly accounting for the empirical distribution of null test statistics in large-scale inference, rather than relying solely on theoretical null distributions derived for individual tests.

Motivated by the observation that reliance on the theoretical null may be misleading in large-scale multiple testing, we investigate the role of the empirical distributions of test statistics for the null hypotheses in controlling the FDP. We first show that, in an idealized oracle scenario where the empirical distribution of the null test statistics is fully known, FDP control can be achieved with optimal efficiency regardless of the dependence structure among the test statistics and without requiring knowledge of the theoretical null distribution of the test statistics. This suggests that explicitly modeling or adjusting for dependence, while commonly pursued in the literature, may not be necessary when the empirical distribution of the test statistics under null is accurately estimated. Building on this insight, we propose an eFDP method. Specifically, we develop a working model and an estimation procedure for the empirical c.d.f.s of test statistics for both null and alternative hypotheses within the framework of multivariate mixture models. We establish the convergence of our estimators to the corresponding true empirical c.d.f.s, which in turn enables rigorous control of the FDP. While the model formulation and estimation procedure are inspired by \citet{YuQinLi2026}, whose theoretical results rely on independence of observations, our theoretical development is substantially different and more challenging. In particular, we verify that the required technical conditions are mild and illustrate through various examples that they accommodate a broad range of dependence structures as special cases. Finally, we construct an FDP control procedure based on the estimated empirical $p$-values and demonstrate that it asymptotically achieves FDP control, providing a practical approach for large-scale multiple testing that leverages the empirical c.d.f.s rather than dependence modeling.

To evaluate the performance of the proposed eFDP method, we have conducted extensive simulation studies, which show that it achieves more accurate FDP control than existing approaches in most scenarios while maintaining higher power. Notably, under strong dependence structures such as compound symmetry, methods that explicitly model dependence may fail to control the FDP, whereas the eFDP approach remains reliable. We also apply the eFDP approach to the real genomic dataset GSE25066, where it identifies a reasonable proportion of significant genes, capturing the key discoveries made by conservative methods while avoiding the liberal behavior of procedures such as BH. These results suggest that our method provides a stable and interpretable set of discoveries in high-dimensional biological data, complementing the theoretical guarantees established in our study.

The rest of the paper is organized as follows. Section~\ref{section-motivation} shows that, under an oracle setting where the empirical c.d.f.\ of the test statistics for all null hypotheses is known, FDP control can be achieved efficiently regardless of dependence of test statistics and without requiring knowledge of their theoretical null distribution, highlighting the importance of accurately estimating this function. Sections~\ref{section-estimate-empirical-cdf} and~\ref{section-FDP-control} together develop the proposed eFDP method: Section~\ref{section-estimate-empirical-cdf} constructs a nonparametric estimator for the empirical c.d.f.s of null test statistics within a multivariate mixture model framework and establishes its asymptotic convergence, while Section~\ref{section-FDP-control} introduces the resulting eFDP procedure and proves its asymptotic validity. Section~\ref{section-simulation} presents simulation results demonstrating improved FDP control and power, while Section~\ref{section-real-data} illustrates the practical performance of eFDP on a genomic dataset. Section~\ref{section-discussion} concludes the paper. Technical details of Sections~\ref{section-estimate-empirical-cdf} and~\ref{section-FDP-control} are deferred to the supplementary material.

\section{FDP Control with Empirical Distribution} \label{section-motivation}

Let $\mathX = \{X_1, \ldots, X_{p_0}\}$ denote a collection of random variables from Group 0 (the null group), and $\mathY = \{Y_1, \ldots, Y_{p_1}\}$ denote a collection from Group 1 (the alternative group). Let $p = p_0 + p_1$, and define $\mathZ = \{Z_1, \ldots, Z_p\} = \mathX \cup \mathY$. In the context of multiple testing, $Z_i$ represents the test statistic for testing the hypothesis
\begin{eqnarray*}
H_{0,i}: Z_i \in \mathX \quad \text{versus} \quad H_{1,i}: Z_i \in \mathY.
\end{eqnarray*}
Accordingly, we define the null and alternative index sets as
\begin{eqnarray} \label{def-H-0-H-1}
\mathH_0 = \{i: Z_i \in \mathX\}, \quad \mathH_1 = \{i: Z_i \in \mathY\}.
\end{eqnarray}
Throughout this paper, we assume that the test statistics $Z_1, \ldots, Z_p$ contain no ties.

In practice, we observe only $\mathZ$; that is, $\mathX$ and $\mathY$ are not available. In this section, however, we assume that the empirical distribution based on $\mathX$ is accessible:
\[
\FF_0(t) = \frac{1}{p_0} \sum_{i=1}^{p_0} I(X_i \leq t),
\]
and the method for estimating $\FF_0(t)$ will be discussed in the next section.

We emphasize that $\FF_0(t)$ plays a pivotal role in our subsequent development. With its availability, we can show that FDP control can be carried out optimally without imposing any distributional assumptions on the test statistics in $\mathZ$; specifically, the $Z_1, \ldots, Z_p$ may be arbitrarily dependent and have arbitrary distributions.

We note that $\FF_0(t)$ captures the full empirical distributional information of the observed statistics in $\mathX$ and can be used to quantify how closely an observed $Z_i$ aligns with or deviates from the null group. Without loss of generality, we assume that smaller values of $Z_i$ provide stronger evidence against the null; for presentational convenience, this convention is adopted throughout. Accordingly, we define the empirical $p$-value of $Z_i$ as $S_i = \FF_0(Z_i)$. These empirical $p$-values resemble conventional $p$-values commonly used in practice:
\begin{eqnarray}
\{ S_i : i \in \mathH_0 \} = \left\{ \frac{1}{p_0}, \ldots, \frac{p_0}{p_0} \right\}, \label{eq-S-i-H-0}
\end{eqnarray}
which closely mimics the uniform distribution under the null when the test statistics are independent and identically distributed (i.i.d.) and drawn from the theoretical null distribution. In contrast, the set $\{ S_i : i \in \mathH_1 \}$ has a different empirical distribution. Thus, for a given threshold $t \in (0,1)$, the $i$th test can be conducted as $I(S_i \leq t)$, and, following the strategy of \citet{StoreyTaylorSiegmund2004}, we can estimate $\FDP(t)$ by
\begin{eqnarray}
\widehat{\FDP}(t)=\frac{\frac{p_0}{p}\cdot t}{\frac{1}{p}\cdot \max\left\{\sum\limits_{i=1}^pI(S_i\leq t),1\right\}}. \label{eq-def-hat-FDP-t} 
\end{eqnarray}
To control $\FDP(t)$ at pre-specificied level $\alpha \in (0,1)$, let 
\begin{eqnarray}
t_\alpha=\sup\limits_{t}\left\{t:\widehat{\FDP}(t)\leq \alpha\right\}, \label{eq-t-alpha}
\end{eqnarray}
and the corresponding tests are given by $t_i = I(S_i \leq t_\alpha)$, $i=1,\ldots, p$. Furthermore, for each $t$, the corresponding true FDP is given by 
\begin{eqnarray*}
    \FDP(t)= \frac{\sum\limits_{i \in \mathcal{H}_0}I\left(S_i \leq t \right)}{\max \left\{\sum \limits_{i=1}^p I\left(S_i \leq t \right),1\right\}}.
\end{eqnarray*}

We have the following theorem, which formalizes the advantage of using the empirical null distribution for FDP control. Part (a) guarantees that the FDP is controlled at the target level $\alpha$ when the threshold $t_\alpha$ is applied. Part (b) shows that increasing the threshold beyond $t_\alpha$, i.e., adopting a more aggressive rule, either leads to exceeding the nominal FDP level $\alpha$ or does not increase the number of rejections. Taken together, the theorem implies that, with the empirical c.d.f.\ available, the standard FDP control procedure using $t_\alpha$ achieves essentially optimal performance. In practice, this suggests that explicitly modeling dependence among test statistics may be unnecessary with the availability of the empirical null.

\begin{theorem} \label{theorem-1}
With $t_\alpha$ defined by \eqref{eq-t-alpha}, we have 
\begin{enumerate}[(a). ]
    \item  ${\rm{FDP}}(t_\alpha)$ is controlled by $\alpha$: 
    ${\rm{FDP}}(t_\alpha) \leq \alpha$;
    \item  For any $t>t_\alpha$, we have either
    ${\rm{FDP}}(t) >  \alpha$ or $R(t) = R(t_\alpha)$, where $R(t) = \sum \limits_{i=1}^p I\left(S_i \leq t \right)$.  
\end{enumerate}
\end{theorem}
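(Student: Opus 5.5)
The whole argument rests on the combinatorial identity \eqref{eq-S-i-H-0}. Since the null empirical $p$-values are exactly $\{1/p_0,\ldots,1\}$, the number of false rejections at any threshold $t\ge 0$ is deterministic:
\[
V(t)=\sum_{i\in\mathH_0}I(S_i\le t)=\min\{\lfloor p_0 t\rfloor ,p_0\}\le p_0 t .
\]
This holds with no assumption on dependence or on the distribution of $\mathZ$. With $R(t)=\sum_{i=1}^p I(S_i\le t)$, it gives the pathwise domination
\[
\FDP(t)=\frac{V(t)}{\max\{R(t),1\}}\le \frac{p_0 t}{\max\{R(t),1\}}=\widehat{\FDP}(t)\quad\text{for every }t .
\]
So both parts reduce to deterministic statements about the two step/linear functions $t\mapsto p_0t$ and $t\mapsto R(t)$.

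\textbf{Part (a).} By the domination, it suffices to show $\widehat{\FDP}(t_\alpha)\le\alpha$, i.e.\ that the supremum in \eqref{eq-t-alpha} is attained. I would take $t_n\uparrow t_\alpha$ with $\widehat{\FDP}(t_n)\le\alpha$. Since $R$ is nondecreasing, $R(t_\alpha)\ge R(t_n)$, so
\[
\frac{p_0t_n}{\max\{R(t_\alpha),1\}}\le\frac{p_0t_n}{\max\{R(t_n),1\}}\le\alpha .
\]
Letting $n\to\infty$ and using continuity of the numerator gives $\widehat{\FDP}(t_\alpha)\le\alpha$. I also need finiteness of $t_\alpha$: for $t\ge1$ we have $R(t)=p$, so $\widehat{\FDP}(t)=p_0t/p\to\infty$, and hence the set in \eqref{eq-t-alpha} is bounded.

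\textbf{Part (b).} Fix $t>t_\alpha$ with $R(t)>R(t_\alpha)$, and suppose for contradiction that $\FDP(t)\le\alpha$. By definition of $t_\alpha$, $\widehat{\FDP}(s)>\alpha$ for every $s>t_\alpha$. Let $s^*$ be the largest $S_i$ not exceeding $t$. Because $R(t)>R(t_\alpha)$, some $S_i$ lies in $(t_\alpha,t]$, so $s^*>t_\alpha$ and $R(s^*)=R(t)$. Let $u=V(t)/p_0$ be the largest null point not exceeding $t$; then $u\le s^*$.

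If $s^*=u$, then
\[
\widehat{\FDP}(u)=\frac{p_0u}{R(t)}=\frac{V(t)}{R(t)}=\FDP(t)\le\alpha ,
\]
which contradicts $u>t_\alpha$. This settles the case where the last rejected statistic is a null one.

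\textbf{Main obstacle.} The hard case is $s^*>u$, where the last rejection below $t$ comes from $\mathH_1$. Here $\widehat{\FDP}(s^*)$ exceeds $\FDP(t)$ by the slack $p_0(s^*-u)<1$ in the numerator, so the argument above yields only $V(t)+1>\alpha R(t)$ rather than $V(t)>\alpha R(t)$. My first attempt would exploit that $\widehat{\FDP}(s)>\alpha$ must hold for \emph{all} $s>t_\alpha$, not just up to $t$. In particular it holds just below the next null point $u+1/p_0$ and on $[1,\infty)$, where $R=p$, which forces $p_0/p>\alpha$ whenever $t_\alpha<1$. I would then try to chain these inequalities across the alternative points in $(u,u+1/p_0)$ to exclude $V(t)\le\alpha R(t)$.

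If this chaining does not close the one-unit gap, part (b) should be restated in a slightly weaker form: either $V(t)+1>\alpha R(t)$, or $\FDP(s)>\alpha$ for the null point $s=u$ or $s=u+1/p_0$ adjacent to $t$. This would still express that $t_\alpha$ is optimal up to the discreteness of $\FF_0$, but it would need to be checked against the exact intended statement.
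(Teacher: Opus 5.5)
Your part (a) is correct, and it is more careful than the paper, which does not show that the supremum defining $t_\alpha$ is attained. Part (b), however, is not proved: you leave the case $s^*>u$ open and propose weakening the statement.

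\textbf{The missing fact.} The case $s^*>u$ cannot occur. Because $\FF_0$ is the empirical c.d.f.\ of $X_1,\ldots,X_{p_0}$, every empirical $p$-value, alternatives included, lies on the same lattice:
\[
S_i=\FF_0(Z_i)=\frac{1}{p_0}\sum_{j=1}^{p_0}I(X_j\le Z_i)\in\{0,1/p_0,\ldots,p_0/p_0\}.
\]
So an alternative $S_i$ can never fall strictly between two consecutive grid points. Every nonzero grid point $k/p_0$ is already occupied by a null statistic.

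\textbf{Finishing (b).} Take $t\in(t_\alpha,1)$ with $R(t)>R(t_\alpha)$, and set $u=\lfloor p_0t\rfloor/p_0$.
\begin{enumerate}
\item Some $S_i$ lies in $(t_\alpha,t]$ and is a grid point, so $u\ge S_i>t_\alpha$. In particular $u\ge 1/p_0$, so $u$ is a null value.
\item No grid point lies in $(u,t]$, so the largest $S_i\le t$ is exactly $u$. Hence $s^*=u$, $R(u)=R(t)$, and $V(t)=p_0u$.
\item Therefore $\FDP(t)=\widehat{\FDP}(u)>\alpha$, since $u>t_\alpha$.
\end{enumerate}
For $t\ge 1$, take $u=1$: then $R(1)=R(t)=p$ and $V(t)=p_0$, and the same argument applies.

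This is exactly your easy case, and it is also the paper's argument. The paper partitions $(t_\alpha,1)$ into the piece $\bigl(t_\alpha,(\lfloor t_\alpha p_0\rfloor+1)/p_0\bigr)$, which contains no grid point so $R(t)=R(t_\alpha)$ there. The remaining pieces are grid cells $[k/p_0,(k+1)/p_0)$, on each of which both $R$ and $V$ are constant, so $\FDP(t)=\widehat{\FDP}(k/p_0)>\alpha$.

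The one-unit slack you were worried about would be real if the $S_i$ were formed from a null c.d.f.\ that is not this step function, such as the estimated $\widehat{\FF}_{0,k}$ of Section 4. In this oracle setting the discreteness of $\FF_0$ works in your favour. The chaining argument and the observation $p_0/p>\alpha$ are not needed.
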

\proof \ Without loss of generality, throughout we assume that $t_\alpha<1$. We show Part (a) first. Observing \eqref{eq-S-i-H-0}, for each $t \in (0,1)$, we have 
\begin{eqnarray*}
\FDP(t) &=& \frac{\sum\limits_{i \in \mathcal{H}_0}I\left(S_i \leq t \right)}{\max \left\{\sum \limits_{i=1}^p I\left(S_i \leq t \right),1\right\}}=\frac{\frac{1}{p}\sum\limits_{i \in \mathcal{H}_0}I\left(S_i \leq t \right)}{\frac{1}{p}\cdot \max \left\{\sum \limits_{i=1}^p I\left(S_i \leq t \right),1\right\}} \\
&=& \frac{\lfloor p_0 t \rfloor/p}{\frac{1}{p}\cdot \max \left\{\sum \limits_{i=1}^p I\left(S_i \leq t \right),1\right\}}. 
\end{eqnarray*}
Thus, based on the definition of $\widehat{\FDP}(t)$ in \eqref{eq-def-hat-FDP-t}, we have 
\begin{eqnarray}
    \FDP(t) = \widehat{\FDP}(t), && \quad \text{if } t = i/p_0, \text{ for some } i\in \{1,\ldots, p_0\}; \nonumber \\
    \FDP(t) < \widehat{\FDP}(t), && \quad \text{otherwise}. \label{eq-thm-1-1}
\end{eqnarray}
In particular, setting $t = t_\alpha$ in \eqref{eq-thm-1-1} and observing the definition of $t_\alpha$ given by \eqref{eq-t-alpha}, Part (a) is valid. 

We proceed to show Part (b). By the definition of $S_i$, we have
\begin{eqnarray}
    S_i = \frac{1}{p_0} \sum \limits_{j=1}^{p_0} I(X_j \leq Z_i) \in \{0/p_0, 1/p_0, \ldots, p_0/p_0\}.  \label{eq-thm-1-2}
\end{eqnarray}
Note that $t_\alpha< t < 1$ can be partitioned to be 
\begin{eqnarray*}
    (t_\alpha, 1) = \left(t_\alpha, \frac{\lfloor t_\alpha p_0 \rfloor + 1)}{p_0}\right) \bigcup \left\{\bigcup_{i=1}^{p_0 - \lfloor t_\alpha p_0 \rfloor-1}\left[\frac{\lfloor t_\alpha p_0 \rfloor + i}{p_0}, \frac{\lfloor t_\alpha p_0 \rfloor + i+1}{p_0}\right)\right\}. 
\end{eqnarray*}
We consider the above sets one by one. 
If $t\in (t_\alpha, (\lfloor t_\alpha p_0 \rfloor + 1)/p_0)$, we have
\[\left(t_\alpha, \frac{\lfloor t_\alpha p_0 \rfloor + 1}{p_0}\right)\subset \left(\frac{\lfloor t_\alpha p_0 \rfloor}{p_0}, \frac{\lfloor t_\alpha p_0 \rfloor + 1}{p_0} \right),\]
which together with \eqref{eq-thm-1-2} implies that $S_i \notin (t_\alpha, (\lfloor t_\alpha p_0 \rfloor + 1)/p_0)$ for every $i\in\{1,\ldots, p\}$. Thus, $R(t) = R(t_\alpha)$. 

If $t\in\left[\frac{\lfloor t_\alpha p_0 \rfloor + i}{p_0}, \frac{\lfloor t_\alpha p_0 \rfloor + i+1}{p_0}\right)$, for an $i\in \{1,\ldots, p_0 - \lfloor t_\alpha p_0 \rfloor-1\}$, we have 
\[
\left[\frac{\lfloor t_\alpha p_0 \rfloor + i}{p_0}, \frac{\lfloor t_\alpha p_0 \rfloor + i+1}{p_0}\right) \cap \{0/p_0, 1/p_0, \ldots, p_0/p_0\} = \frac{\lfloor t_\alpha p_0 \rfloor + i}{p_0},
\]
which together with \eqref{eq-thm-1-2}, the definition of $\FDP(t)$, and \eqref{eq-thm-1-1} leads to
\begin{eqnarray*}
    \FDP(t) = \FDP\left(\frac{\lfloor t_\alpha p_0 \rfloor + i}{p_0}\right) = \widehat{\FDP}\left(\frac{\lfloor t_\alpha p_0 \rfloor + i}{p_0}\right). 
\end{eqnarray*}
Consequently, since $\frac{\lfloor t_\alpha p_0 \rfloor + i}{p_0}> t_\alpha$, and by the definition of $t_\alpha$ given in \eqref{eq-t-alpha}, we have 
\begin{eqnarray*}
    \FDP(t) = \widehat{\FDP}\left(\frac{\lfloor t_\alpha p_0 \rfloor + i}{p_0}\right) > \alpha. 
\end{eqnarray*}
We have verified Part (b) and completed the proof of this theorem. \epf

\section{Empirical Distribution Estimation with Multivariate Mixture Models} \label{section-estimate-empirical-cdf}

\subsection{Working model for the empirical c.d.f.s} \label{Section-Working-Model}

Following the discussion in Section~\ref{section-motivation}, for data with unknown dependence structure, the key challenge in achieving reliable FDP control lies in knowledge of the empirical distribution of the test statistics under the null. We propose an estimation method for these empirical distributions in a framework where the test statistics are multivariate random vectors. With a slight abuse of notation, we denote the observed test statistics for the $i$th hypothesis as $Z_i = (Z_{i,1}, \ldots, Z_{i,K})^T$, a $K$-dimensional vector with $K \geq 3$. As in Section~\ref{section-motivation}, let $\mathZ = \{Z_1, \ldots, Z_p\} = \mathX \cup \mathY$, where $\mathX = \{X_1, \ldots, X_{p_0}\}$ and $\mathY = \{Y_1, \ldots, Y_{p_1}\}$ correspond to the collections of vectors from the null group (Group 0) and alternative group (Group 1), respectively. For each dimension $k = 1, \ldots, K$, we denote by $\mathZ_k = \{Z_{1,k}, \ldots, Z_{p,k}\}$ the observed statistics on the $k$th dimension, and similarly define $\mathX_k$ and $\mathY_k$ so that $\mathZ_k = \mathX_k \cup \mathY_k$. 

In practice, the multivariate test statistics $Z_i$ can be obtained by splitting the available raw data across dimensions. For example, in our simulation study (Section~\ref{Sim-Data-Processing}), we describe an approach to generate such test statistics for two-group gene expression data. Similar data-splitting strategies can be applied in other practical settings to construct multidimensional test statistics that satisfy the structure as needed.

Let $\FF(\cdot)$ denote the joint empirical c.d.f.\ based on $\mathZ$, and let $\FF_0(\cdot)$ and $\FF_1(\cdot)$ denote the joint empirical c.d.f.s based on $\mathX$ and $\mathY$, respectively. For each dimension $k$, let $\FF_{0,k}(\cdot)$ and $\FF_{1,k}(\cdot)$ denote the corresponding marginal empirical c.d.f.s based on $\mathX_k$ and $\mathY_k$, respectively. We propose a working model to guide the estimation of $\FF_{0,k}(\cdot)$ and $\FF_{1,k}(\cdot)$ within the framework of multivariate mixture models. Specifically, for each $k = 1, \ldots, K$, we pretend that $X_{i,k}, i = 1, \ldots, p_0$, are i.i.d. draws from $\FF_{0,k}(\cdot)$, and $Y_{i,k}, i = 1, \ldots, p_1$, are i.i.d. draws from $\FF_{1,k}(\cdot)$. Furthermore, for each $i$, we pretend that conditional independence across dimensions is satisfied: given that $i \in \mathH_m$ for $m = 0$ or $1$, the components $Z_{i,1}, \ldots, Z_{i,K}$ are independent. Under this working model, the vectors $Z_i$, $i = 1, \ldots, p$, are treated as i.i.d.\ random vectors with c.d.f.
\begin{eqnarray}
     \FF_{\theta}(z) = \lambda_0  \FF_{0,\theta}(z) + \lambda_1 \FF_{1,\theta}(z), \label{eq-Emp-Dist-Z}
\end{eqnarray}
where
\[
\FF_{0,\theta}(z) = \prod_{k=1}^K F_{0,k}(z_k), \qquad  
\FF_{1,\theta}(z) = \prod_{k=1}^K F_{1,k}(z_k),
\]
and $\lambda_1 = 1 - \lambda_0$. The parameter is given by
\begin{eqnarray*}
    \theta = \left\{\lambda_0, F_{0,1}, \ldots, F_{0,K}, F_{1,1}, \ldots, F_{1,K} \right\},
\end{eqnarray*}
which belongs to the parameter space
\begin{eqnarray}
\Theta = \left\{\theta: \lambda_0 \in (0,1),\; F_{m,k}(\cdot) \text{ are c.d.f.s},\; m = 0,1,\; k = 1, \ldots, K \right\}. \label{eq-def-Theta}
\end{eqnarray}
The corresponding true parameter value is denoted by
\begin{eqnarray*}
    \theta_0 = \left\{\lambda_{0,0}, \FF_{0,1}, \ldots, \FF_{0,K}, \FF_{1,1}, \ldots, \FF_{1,K} \right\},
\end{eqnarray*}
where $\lambda_{0,0} = p_0/p$.

\subsection{Binomial likelihood estimation of parameters} \label{Section-Bin-Estimation}

We consider a binomial likelihood approach for estimating $\theta$ \citep{Qin2014,YU2023, YuQinLi2026}. For each $i,j=1,\ldots, p, i\neq j$, let $I_{i,j} = I(Z_i \leq Z_j) = (I_{i,j,1}, \ldots, I_{i,j,K})^T$, where $I_{i,j,k} = I(Z_{i,k} \leq Z_{j,k})$. Then, if we pretend that $Z_i$ are i.i.d. and follow the model \eqref{eq-Emp-Dist-Z}, we establish a binomial likelihood objective function, with a similar strategy as that in \citet{YuQinLi2026}. In particular, for any $z = (z_1, \ldots, z_K)^T \in \mathbb{R}^{K}$, we have 
\begin{eqnarray*}
    P\left(I(Z_i \leq Z_j) = I(z\leq Z_j) \Big| Z_j \right) = \sum_{m=0}^1\lambda_m \prod_{k=1}^K F_{m,k}^{I(z_k \leq Z_{j,k})}(Z_{j,k})\cdot \bar F_{m,k}^{I(z_k > Z_{j,k})}(Z_{j,k}),
\end{eqnarray*}
where $\bar F_{m,k}(\cdot) = 1-F_{m,k}(\cdot)$.
Thus, by treating $I(Z_i \leq Z_j)$ as the multivariate binary response data, and borrow the idea of the composite likelihood \citep{Cristiano2011, Kwonsang2023}, we establish the log-likelihood: 
\begin{eqnarray*}
    \ell(\theta) = \sum_{j=1}^p \sum_{i=1}^p \log \left\{ \sum_{m=0}^1\lambda_m \prod_{k=1}^K F_{m,k}^{I_{i,j,k}}(Z_{j,k})\cdot \bar F_{m,k}^{1-I_{i,j,k}}(Z_{j,k}) \right\}. 
\end{eqnarray*}
The estimator for $\theta$ is defined as
\begin{eqnarray}
    \widehat \theta = \left\{\widehat \lambda_0, \widehat \FF_{0,1}(\cdot), \ldots, \widehat \FF_{1,K}(\cdot) \right\} = \arg\max_{\theta \in \Theta} \ell(\theta), \label{eq-theta-hat-def}
\end{eqnarray}
where $\Theta$ is defined by \eqref{eq-def-Theta}.

The optimization problem \eqref{eq-theta-hat-def} can be solved numerically using the following algorithm. For $s = 1,2,\ldots$, let $\theta^{(s)} = \left\{\lambda_m^{(s)}, \FF_{m,k}^{(s)}, m=0,1; k = 1,\ldots, K\right\}$ be the estimate in the $s$th step. Let 
\begin{eqnarray*}
    \lambda_{i,j,m}^{(s)} = \frac{\lambda_m^{(s)}\prod_{k=1}^K \left[ \left\{ \FF_{m,k}^{(s)}(Z_{j,k}) \right\}^{I_{i,j,k}} \left\{ \bar\FF_{m,k}^{(s)}(Z_{j,k}) \right\}^{1-I_{i,j,k}}  \right]}{\sum_{m_1=1}^2 \lambda_{m_1}^{(s)}\prod_{k=1}^K \left[ \left\{ \FF_{m_1,k}^{(s)}(Z_{j,k}) \right\}^{I_{i,j,k}} \left\{ \bar\FF_{m_1,k}^{(s)}(Z_{j,k}) \right\}^{1-I_{i,j,k}}  \right]}. 
\end{eqnarray*}
Then, we obtain $\theta^{(s+1)}$ by 
\begin{eqnarray*}
    \lambda_0^{(s+1)} = \frac{1}{p^2}\sum_{j=1}^p\sum_{i=1}^p  \lambda_{i,j,0}^{(s)},
\end{eqnarray*}
$\lambda_1^{(s+1)} = 1 - \lambda_0^{(s+1)}$, and 
\begin{eqnarray*}
    \FF_{m,k}^{(s+1)}(\cdot) = \arg\max_{F_{m,k}} \sum_{j=1}^p w_{m,j}^{(s)} \left[ \xi_{m,k,j}^{(s)} \log F_{m,k}(Z_{j,k}) + \left\{1-\xi_{m,k,j}^{(s)}\right\} \log \bar F_{m,k}(Z_{j,k}) \right],
\end{eqnarray*}
subject to $F_{m,k}(\cdot)$ being a c.d.f., where
\begin{eqnarray*}
    \xi_{m,k,j}^{(s)} = \frac{\sum_{i=1}^p \lambda_{i,j,m}^{(s)} I_{i,j,k}}{\sum_{i=1}^p \lambda_{i,j,m}^{(s)}}, \qquad  w_{m,j}^{(s)} = \sum_{i=1}^p \lambda_{i,j,m}^{(s)}. 
\end{eqnarray*}

More details of the development and the theoretical results for the convergence of the algorithm are given in Section 2 of the supplementary material. 

\subsection{Asymptotic properties} \label{section-asymp}

We explore the asymptotic properties of $\widehat \theta$. Notably, the developments in Sections~\ref{Section-Working-Model} and~\ref{Section-Bin-Estimation} are motivated by the intuition that $Z_1, \ldots, Z_p$ can be pretended to be i.i.d.\ and follow \eqref{eq-Emp-Dist-Z}. Consequently, the methodology and algorithm presented in Section~\ref{Section-Bin-Estimation} are similar in strategy to those in \citet{YuQinLi2026}. However, due to the dependence among test statistics, the theoretical analysis is substantially different and more challenging. To address this, we impose the following condition specifying the required dependence structure among the test statistics.
\begin{Condition}\label{Condition-1}
For $m=0, 1$, $\FF_m(\cdot)$ and $\FF_{m,\theta_0}(\cdot)$ satisfies, as $p\to \infty$,
\begin{eqnarray*}
\sup_{z\in \mathbb{R}^K}|\FF_m(z) - \FF_{m,\theta_0}(z)| = o\left\{ (\log p)^{-1} \right\}, \quad a.s.
\end{eqnarray*}
\end{Condition}
We observe that Condition~\ref{Condition-1} essentially imposes a requirement on the dependence structure of the test statistics under the empirical measure $\FF_m(\cdot)$. In particular, this condition can be expressed as
\begin{eqnarray*}
    \sup_{z \in \mathbb{R}^K} \left| E\left\{\prod_{k=1}^K I(Z_k^* \leq z_k)\right\} - \prod_{k=1}^K E\left\{I(Z_k^* \leq z_k)\right\} \right| = o\left\{ (\log p)^{-1} \right\}, \quad \text{a.s.},
\end{eqnarray*}
where $E$ is taken under $Z^* = (Z_1^*, \ldots, Z_K^*)^T \sim \FF_m$ for $m = 0$ or $1$. Intuitively, for $K=2$ and $m=0$, the difference
\[
E\left\{\prod_{k=1}^2 I(Z_k^* \leq z_k)\right\} - \prod_{k=1}^2 E\left\{I(Z_k^* \leq z_k)\right\}
\]
corresponds to the sample covariance between $\{I(X_{i,1} \leq z_1), i=1,\ldots, p_0\}$ and $\{I(X_{i,2} \leq z_2), i=1,\ldots, p_0\}$. Likewise, for $K=2$ and $m=1$, it corresponds to the sample covariance between $\{I(Y_{i,1} \leq z_1), i=1,\ldots, p_1\}$ and $\{I(Y_{i,2} \leq z_2), i=1,\ldots, p_1\}$.

In multiple testing research, it is common to assume specific dependence structures; see, for example, \citet{benjamini2001control, sun2009large, fan2012estimating}. Condition~\ref{Condition-1} is relatively mild and can be satisfied by many of the dependence structures frequently assumed in the literature and encountered in practice. In the following, we provide several examples. For continuity of presentation, the proofs of these examples are deferred to the supplementary material.
\begin{Example} \label{Example-1}
    Assume that there exist multivariate c.d.f.s $F_0^*(z)$ and $F_1^*(z)$ (possibly random) defined on $z\in \mathbb{R}^K$, such that:
\begin{equation}
    \sup_{z} |\FF_0(z) - F_0^*(z)| = o\left\{ (\log p)^{-1} \right\} \quad \text{and} \quad \sup_{z} |\FF_1(z) - F_1^*(z)| = o\left\{ (\log p)^{-1} \right\}, \quad a.s., \label{eq-example-1-1}
\end{equation}
as $p\to \infty$. Furthermore, $F_0^*(z)$ and $F_1^*(z)$ have the form: 
\begin{eqnarray*}
    F_0^*(z) = \prod_{k=1}^K F_{0,k}^*(z_k) \quad \text{and} \quad  F_1^*(z) = \prod_{k=1}^K F_{1,k}^*(z_k),
\end{eqnarray*}
for $F_{m,k}^*(\cdot), m=0,1; k=1,\ldots, K$ being c.d.f.s defined on $\mathbb{R}$. Then, Condition \ref{Condition-1} is satisfied. 
\end{Example}

\begin{Example} \label{Example-2}
    Let $U_1,\ldots, U_K$ be i.i.d. random variables, such that $E(U_k) = \mu_k, \text{var}(U_k) = 1$; let $\epsilon_{i,k}, i=1,\ldots,\widetilde p; k=1,\ldots, K$ be i.i.d. random variables such that $E(\epsilon_{i,k}) = 0, \text{var}(\epsilon_{i,k}) = 1$. Suppose that $\{U_k, k=1,\ldots, K\}$ and $\{\epsilon_{i,k}, i=1,\ldots, \widetilde p; k=1,\ldots, K\}$ are independent. Let $V_i = (V_{i,1},\ldots, V_{i,K})^T, i=1,\ldots, \widetilde p$, where 
    \begin{eqnarray}
    V_{i,k} = \sqrt{\rho} U_k + \sqrt{1-\rho} \epsilon_{i,k}, \label{eq-Example-2-1}
    \end{eqnarray}
    with $\rho\in [0,1)$ being a given constant. Then for each dimension $k = 1,\ldots, K$, $V_{1,k},\ldots, V_{\widetilde p, k}$ are dependent random variables, with the variance-covariance matrix having the {\it compound symmetry structure}. Denote by $\FF_V(\cdot)$ and $\FF_{V,k}(\cdot)$ the empirical c.d.f.s based on $\{V_i, i=1,\ldots, \widetilde p\}$ and $\{V_{1,k},\ldots, V_{\widetilde p,k}\}$ respectively. We have, as $\widetilde p \to \infty$,
    \begin{eqnarray}
        \sup_{z\in \mathbb{R}^K} \left| \FF_V(z) - \prod_{k=1}^K \FF_{V,k}(z_k)\right| = O\left\{\sqrt{\frac{\log(\widetilde p)}{\widetilde p}} \right\}, \quad a.s. \label{eq-Example-2-1-1}
    \end{eqnarray}
The above conclusion implies that if the test statistics in $\mathX$ and in $\mathY$ respectively satisfy the compound symmetry structure \eqref{eq-Example-2-1}, then Condition \ref{Condition-1} is satisfied. 
\end{Example}

\begin{Example}  \label{Example-3}
The $\alpha$--mixing coefficient of two $\sigma$--algebras $\mathA$ and $\mathB$ is defined as
\begin{eqnarray}
    \alpha(\mathA, \mathB) = \sup_{A\in \mathA, B\in \mathB}\left| P(A\cap B) - P(A)P(B)\right|. \nonumber 
\end{eqnarray}
Let $\mathV = \{V_t\}_{t=1,\ldots,\widetilde p}$ be a sequence of $K$-dimensional stationary time series with c.d.f. $F_V(\cdot)$ for $V_t$. Define the $\sigma$-algebras:
\[
\mathA_i = \sigma(V_t, t\leq i), \qquad 
\mathB_j = \sigma(V_t, t \geq j),
\]
and denote
\begin{eqnarray}
\alpha_V(n) = \sup_{k\geq 1}\alpha\left(\mathA_k, \mathB_{k+n} \right). \nonumber 
\end{eqnarray}
Let $\FF_V(v)$ be the empirical c.d.f. of $\{V_t\}_{t=1}^{\widetilde p}$.  
If $\mathV$ satisfies
\begin{eqnarray}
    \alpha_{V}(n) &\leq& \exp(-cn), \label{eq-Condition-3-1} \\
    \xi^2 &<&  \infty, \label{eq-Condition-3-2} \\
    V_{t,1},\ldots,V_{t,K} &&\text{are independent for each } t, \label{eq-Condition-3-3} 
\end{eqnarray}
where $c>0$ is a universal constant, 
\begin{eqnarray}
\xi^2 = \sup_{v\in \mathbb{R}^K} \sup_{t\geq 1}\left[\text{var}\{W_t(v)\} + 2 \sum_{j > t}\left|\text{cov}\{W_t(v), W_j(v)\}\right| \right], \nonumber 
\end{eqnarray}
and $W_t(v)=I(V_t\le v)-F_V(v)$, then
\begin{equation}
\sup_{v\in \mathbb{R}^K}
\left|\FF_{V}(v) - \prod_{k=1}^K \FF_{V,k}(v_k)\right|
=
O\!\left(\sqrt{\frac{\log \widetilde p}{\widetilde p}}\right),\quad a.s. \nonumber 
\end{equation}
Consequently, if both $\mathX$ and $\mathY$ satisfy \eqref{eq-Condition-3-1}--\eqref{eq-Condition-3-3}, then Condition \ref{Condition-1} holds.
\end{Example}

We observe that the main requirement in Example \ref{Example-3} is the $\alpha$-mixing condition. This is a standard and relatively weak notion of dependence, requiring that dependence between distant observations vanishes asymptotically. Among common mixing concepts, it is one of the weakest, as stronger notions (e.g., $\beta$- or $\phi$-mixing) imply $\alpha$-mixing but not conversely \citep[see, e.g., Section 2.1]{Doukhan1994} and the discussion in \citet{Bradley2005}. Consequently, $\alpha$-mixing accommodates a broad class of stochastic processes and is widely used in statistics and econometrics to establish asymptotic results for dependent data; in particular, many standard time series models, including ARMA processes, satisfy $\alpha$-mixing conditions under mild assumptions \citep[Chapter 2]{FanYao2003}. Building on Example \ref{Example-3}, we next consider a more concrete setting. In Example \ref{Example-4}, we show that Condition \ref{Condition-1} is satisfied when the test statistics from the null and alternative groups respectively follow ARMA-type dependence structures, thereby demonstrating that our framework applies to commonly used time series models.

\begin{Example} \label{Example-4}
    Let $V_t, t = 1,\ldots, p$, be a $K$-dimensional stationary $\text{ARMA}(c_1, c_2)$ time series, with error terms being i.i.d. continuous random vectors. For each $t$, $V_{t,1}, \ldots, V_{t,K}$ are independent. Recall $\mathH_0, \mathH_1$ defined in \eqref{def-H-0-H-1}. Let $Z_t = V_t$, if $t \in \mathH_0$, and let $Z_t = V_t + \mu$ if $t \in \mathH_1$, where $\mu \neq 0$ is a constant. Then Condition \ref{Condition-1} is satisfied.  
\end{Example}

Besides Condition \ref{Condition-1}, we need the following Conditions. 

\begin{Condition}\label{Condition-2}
There exist multivariate c.d.f.s $F_0^*(z)$ and $F_1^*(z)$ (possibly random; see Example \ref{Example-2}) defined on $z\in \mathbb{R}^K$, such that:
\begin{eqnarray*}
    \sup_{z} |\FF_0(z) - F_0^*(z)| = o(1) \quad \text{and} \quad \sup_{z} |\FF_1(z) - F_1^*(z)| = o(1), \quad a.s.,
\end{eqnarray*}
as $p\to \infty$.
\end{Condition}

\begin{Condition}\label{Condition-3}
    $\lambda_{0,0} = p_0/p$ satisfies
    \begin{eqnarray*}
        \lambda_{0,0} = \lambda_0^* + o(1), \quad \text{as }p\to \infty,
    \end{eqnarray*}
    with $0<\lambda_0^*<1$ being a constant. Denote $\lambda_1^* = 1- \lambda_0^*$.
\end{Condition}

\begin{Condition}\label{Condition-4}
Consider $F_m^*(z)$ given in Condition \ref{Condition-2}, and $\lambda_m^*$ given in Condition \ref{Condition-3}, $m=0,1$. Let 
\begin{eqnarray*}
F^*(z) = \lambda_0^* F_0^*(z) + \lambda_1^* F_1^*(z), 
\end{eqnarray*}
and for $m=0,1$, denote by $F_{m,k}^*(\cdot)$ the $k$th marginal c.d.f. of $F_m^*(\cdot)$. For any $\theta_1, \theta_2 \in \Theta$, where we denote $\theta_r = \{\lambda_{0,r}, F_{0,1,r}, \ldots, F_{1,K,r}\}$, $r = 1,2$. For each $\omega$ in the sample space, if 
\begin{eqnarray*}
    \int \left\{ \FF_{\theta_1}^\omega (z) - \FF_{\theta_2}^\omega(z) \right\}^2 d F^{*\omega} (z) =0, 
\end{eqnarray*}
then $\lambda_{0,1} = \lambda_{0,2}$ and $F_{m_1, k, 1}^{\omega}(\cdot) = F_{m_1,k,2}^{\omega}(\cdot)$ almost surely in $F_{m_2,k}^{*\omega}(\cdot)$ for every $m_1, m_2 \in \{0,1\}, k=1,\ldots, K$.
\end{Condition}

\begin{Remark}
Conditions \ref{Condition-1}--\ref{Condition-4} are mild. 
Condition \ref{Condition-1} imposes requirements on the dependence structure of the test statistics; as illustrated in Examples \ref{Example-1}--\ref{Example-4}, it is satisfied by a wide range of dependence settings. Conditions \ref{Condition-2} and \ref{Condition-3} essentially require the convergence of the joint empirical c.d.f.s under $\mathH_0$ and $\mathH_1$, as well as the proportion $\lambda_{0,0}$. Condition \ref{Condition-4} ensures the identifiability of the model.
\end{Remark}

We have the following theorem, which establishes the asymptotic properties of our estimators. These properties play pivotal roles in the subsequent development for the FDP control. 

\begin{theorem} \label{theorem-2}
Assume Conditions \ref{Condition-1}--\ref{Condition-4}. We have, as $p\to \infty$,
\begin{enumerate}[(a). ]
    \item $\widehat \lambda_0 = p_0/p + o(1)$, a.s. 

    \item For $m_1, m_2 \in \{0,1\}$, and $k=1,\ldots, K$, we have 
    \begin{eqnarray*}
        \int \left\{\widehat \FF_{m_1,k}(t) - \FF_{m_1, k}(t) \right\}^2 d \FF_{m_2,k}(t) = o(1), \quad a.s.
    \end{eqnarray*}
    
\end{enumerate}
    
\end{theorem}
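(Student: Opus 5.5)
The proof follows the standard consistency argument for maximum (composite) likelihood estimators, applied pathwise for each fixed $\omega$. The population objective is taken with respect to the empirical measure $\FF$ rather than a data-generating law. Write the normalized log-likelihood as
\[
p^{-2}\ell(\theta)=\int\!\!\int \log\Big\{\sum_{m=0}^1\lambda_m\prod_{k=1}^K F_{m,k}^{I(u_k\le z_k)}(z_k)\bar F_{m,k}^{I(u_k>z_k)}(z_k)\Big\}\,d\FF(u)\,d\FF(z)=:M_p(\theta).
\]
This is an exact identity, since $\ell$ is a double sum over $i,j$ and the $i=j$ terms contribute $O(1/p)$ after normalization. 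Define $M^*(\theta)$ in the same way with $\FF$ replaced by $F^*=\lambda_0^*F_0^*+\lambda_1^*F_1^*$.

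\textbf{Step 1: Compare $\theta_0$ with the working model.} By Conditions~\ref{Condition-1}--\ref{Condition-3}, $\FF=\lambda_{0,0}\FF_0+\lambda_{1,0}\FF_1$ is uniformly within $o\{(\log p)^{-1}\}$ of $\FF_{\theta_0}$. It is also within $o(1)$ of $F^*$, and $F^*$ is close to the product-form model $\FF_{\theta^*}$ with $\theta^*=\{\lambda_0^*,F^*_{m,k}\}$. For fixed $u$, the inner integral over $z$ is a cross-entropy between the binary vector law induced by $\FF$ and the model law $\FF_\theta$. Gibbs' inequality therefore shows that $M^*(\theta)\le M^*(\theta^*)$, with equality only when
\[
\int\{\FF_\theta(z)-\FF_{\theta^*}(z)\}^2\,dF^*(z)=0
\]
(after a Pinsker-type step). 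Condition~\ref{Condition-4} then identifies $\lambda_0=\lambda_0^*$ and the marginals $F^*_{m_1,k}$ a.s.\ in $F^*_{m_2,k}$.

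\textbf{Step 2: Argmax consistency.} The estimator satisfies $M_p(\widehat\theta)\ge M_p(\theta_0)$. The aim is to transfer this to
\[
M^*(\widehat\theta)\ge M^*(\theta^*)-o(1),
\]
and then to conclude, via a compactness argument (Helly selection on the c.d.f.\ components and $\lambda_0\in[0,1]$), that every subsequential limit of $\widehat\theta$ satisfies the equality case. This yields (a), since $\lambda_0^*=p_0/p+o(1)$. For (b), note that $\FF_{m,k}$ is within $o(1)$ of $F^*_{m,k}$ uniformly, so an $L^2(d\FF_{m_2,k})$ limit for $\widehat\FF_{m_1,k}-\FF_{m_1,k}$ follows from the corresponding $L^2(dF^*_{m_2,k})$ statement together with the triangle inequality and boundedness of c.d.f.s.

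\textbf{Main obstacle: the uniform comparison in Step 2.} The log terms are unbounded as $F_{m,k}(z_k)\to0$ or $1$, so $M_p$ does not converge to $M^*$ uniformly in $\theta$. The $o\{(\log p)^{-1}\}$ rate in Condition~\ref{Condition-1} is exactly what controls this. Because $\widehat\FF_{m,k}$ is a nonparametric MLE supported on the observed points, its values at the $Z_{j,k}$ lie in $[c/p,1-c/p]$ up to a constant (or can be truncated there without decreasing $\ell$). The integrand is then bounded by $O(\log p)$, and a uniform $o\{(\log p)^{-1}\}$ perturbation of the integrating measure changes $M_p$ by only $o(1)$. The plan therefore has three parts:
\begin{enumerate}[(i)]
\item establish this truncation;
\item bound $|M_p(\theta)-\widetilde M_p(\theta)|=o(1)$ uniformly over truncated $\theta$, where $\widetilde M_p$ integrates against $\FF_{\theta_0}$, using integration by parts in each coordinate together with Condition~\ref{Condition-1};
\item use the exact Gibbs inequality under $\FF_{\theta_0}$, where $\theta_0$ is the true working-model parameter, to obtain
\[
\widetilde M_p(\theta_0)-\widetilde M_p(\widehat\theta)\ge c\int(\FF_{\widehat\theta}-\FF_{\theta_0})^2\,d\FF_{\theta_0}.
\]
\end{enumerate}
Combining (ii) and (iii) gives $\int(\FF_{\widehat\theta}-\FF_{\theta_0})^2\,d\FF_{\theta_0}=o(1)$. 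Condition~\ref{Condition-2} then passes this to $F^*$, and Condition~\ref{Condition-4}, applied through the compactness argument, upgrades the joint closeness to the componentwise statements (a) and (b).
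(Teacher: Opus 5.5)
Your skeleton matches the paper's. You show $\int\{\FF_{\widehat\theta}-\FF_{\theta_0}\}^2\,d\FF=o(1)$ by swapping the inner measure $\FF$ for $\FF_{\theta_0}$, with Condition~1's $o\{(\log p)^{-1}\}$ rate on the orthant-cell masses beating an $O(\log p)$ bound on the logarithm. You then lower-bound what remains by a divergence, pass to $F^*$, and finish with Helly selection and Condition~4. The detour through $M^*$ is unnecessary. The gaps are in how you tame the logarithm.

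First, claim~(i) is false as stated. Each likelihood term is a mixture $\lambda_0G_0+\lambda_1G_1$, so a single component value $\widehat\FF_{m,k}(Z_{j,k})$ can equal $0$ or $1$ without making $\ell=-\infty$, and clipping to $[c/p,1-c/p]$ can lower $\ell$. What is true, and suffices, is that clipping multiplies every factor $F^{I}\bar F^{1-I}$ by at least $1-c/p$, hence every $f_\theta(Z_i,Z_j)$ by at least $(1-c/p)^K$. So $p^{-2}\ell$ drops by at most $O(K/p)$. You then work with the clipped estimator, which is within $c/p$ of $\widehat\theta$ at the data points, and that is all (b) sees.

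Second, (ii) is uniform only over clipped $\theta$, and $\theta_0$ is not clipped. Write $\FF_h(z)$ and $\FF_{\theta,h}(z)$ for the mass that $\FF$ and $\FF_\theta$ put on orthant cell $h$ relative to $z$. The empirical c.d.f.s $\FF_{m,k}$ do take the values $0$ and $1$ at data points, so $\FF_{\theta_0,h}(Z_j)$ can vanish. Without a further argument $M_p(\theta_0)$ could then be $-\infty$, making $M_p(\widehat\theta)\ge M_p(\theta_0)$ vacuous. The paper supplies the missing fact:
\begin{itemize}
\item $\FF_h(Z_i)>0$ forces $\FF_{\theta_0,h}(Z_i)>0$. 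If some $Z_j$ falls in cell $h$ relative to $Z_i$, every factor $\FF_{m,k}(Z_{i,k})$ or $1-\FF_{m,k}(Z_{i,k})$ of $Z_j$'s own group $m$ is positive, because $Z_{j,k}$ is counted in it.
\item Then $\FF_{\theta_0,h}(Z_i)\ge p^{-K}$, since empirical c.d.f.\ values are multiples of $1/p_m$.
\end{itemize}
So only cells with $|\log\FF_{\theta_0,h}|\le K\log p$ carry weight. Alternatively, stay in your framework and compare with the clipped $T\theta_0$, which can be taken in $\Theta$ since only values at the $Z_{j,k}$ enter $\ell$. Then $M_p(\widehat\theta)\ge M_p(T\theta_0)$ by global maximality, and $\widetilde M_p(\theta_0)-\widetilde M_p(T\theta_0)=O(K/p)$ by the same factor bound.

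Two smaller points. In (ii)--(iii), switch only the inner integral to $\FF_{\theta_0}$. Gibbs/Pinsker needs nothing more and gives a constant times $\int(\FF_{\widehat\theta}-\FF_{\theta_0})^2\,d\FF$. Switching the outer measure too, as your $d\FF_{\theta_0}$ suggests, would require bounding $\int g\,d(\FF-\FF_{\theta_0})$ for $g(z)=\sum_h\FF_{\theta_0,h}(z)\log\FF_{\theta,h}(z)$. That $g$ is not a combination of c.d.f.s, so uniform closeness gives no direct control when $K\ge 2$. Likewise, in (b) the transfer from $dF^*_{m_2,k}$ to $d\FF_{m_2,k}$ works because $(\widehat\FF_{m_1,k}-\FF_{m_1,k})^2$ expands into products of c.d.f.s, to which an integration-by-parts bound applies; boundedness alone is not enough.

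For comparison, the paper avoids truncation entirely. Concavity turns $\ell(\widehat\theta)\ge\ell(\theta_0)$ into $\iint\log\{(f_{\widehat\theta}+f_{\theta_0})/(2f_{\theta_0})\}\,d\FF\,d\FF\ge 0$. That integrand is $\ge-\log 2$ for every $\theta$ and $\le K\log p$ on the relevant cells, so the measure swap is uniform in $\theta$ with no control on $\widehat\theta$ needed, and a Hellinger bound replaces Pinsker. Your route works once repaired, but this midpoint trick removes the need for (i) altogether.
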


\begin{Remark}
Part (b) of Theorem~\ref{theorem-2} establishes the convergence of the empirical c.d.f. estimators to the true empirical c.d.f.s, rather than the theoretical c.d.f.s, the latter of which is often the focus in classical statistical theory. Importantly, this convergence holds both under the null and alternative hypotheses. In the mixture model framework, such results are useful beyond FDP control, for example, in inference on mixture components. Part (a) provides a useful byproduct, showing that the estimator for the proportion of null hypotheses converges to the true proportion. Estimating this proportion has been extensively studied in the literature \citep{storey2003statistical, langaas2005estimating, wang2010slim} and plays a critical role in many multiple testing procedures. Since our procedure provides a consistent estimate of the true proportion, it also allows for reliable intuition about the potential power of some multiple testing method: if the proportion of alternative hypotheses identified by the method is much smaller than the estimated proportion, the procedure may have limited power.
\end{Remark}

\section{FDP Control} \label{section-FDP-control}

In Section~\ref{section-estimate-empirical-cdf}, within the framework of multivariate mixture data, we developed estimators for the empirical c.d.f.s and established their asymptotic properties under dependence. Specifically, for $K$-dimensional mixture data $\mathZ = \{Z_1, \ldots, Z_p\}$, we constructed $\widehat \FF_{m,k}(\cdot)$ as an estimator of $\FF_{m,k}(\cdot)$, the empirical distribution of observations in group $m$ along dimension $k$. In this section, we build on these empirical c.d.f.\ estimators of the null distributions to develop the eFDP approach for controlling the FDP.

Based on the discussion in Section~\ref{section-motivation}, one could in principle select a single dimension $k$ and perform multiple testing using $\widehat \FF_{m,k}(\cdot)$. However, such an approach is inefficient from an information perspective, as it ignores the empirical c.d.f.\ estimates from other dimensions, which may also carry important group-related information.


On the other hand, following the empirical $p$-value strategy in Section~\ref{section-motivation}, for each $i = 1, \ldots, p$, we can obtain a set of empirical $p$-value estimates. With a slight abuse of notation, we denote this set by $\widehat S_i$:
\begin{eqnarray*}
    \widehat S_i = \left\{\widehat F_{0,1}(Z_{i,1}), \ldots, \widehat F_{0,K}(Z_{i,K})\right\}. 
\end{eqnarray*}
Following the strategy of \citet{zhang2011multiple}, to better utilize the available information, we consider performing the tests based on an aggregated version of $\widehat S_i$. In particular, we consider
\begin{eqnarray*}
    \widehat M_i = \text{median}\left(\widehat S_i\right). 
\end{eqnarray*}
In line with the approach in Section~\ref{section-motivation}, our FDP control procedure proceeds as follows. For a given threshold $t \in (0,1)$, the $i$th test is conducted as $I(\widehat M_i \leq t)$, and, adopting the approach of \citet{zhang2011multiple}, we estimate $\FDP(t)$ by
\begin{eqnarray}
\widehat{\FDP}(t)=\frac{\widehat \lambda_0\cdot M(t)}{\frac{1}{p}\cdot \max\left\{\sum\limits_{i=1}^pI(\widehat M_i\leq t),1\right\}}, \label{eq-def-hat-FDP-t-median} 
\end{eqnarray}
where $M(t)$ denotes the c.d.f. of the median of $K$ i.i.d. $\text{Uniform}(0,1)$ random variables. To control $\FDP(t)$ at a pre-specified level $\alpha \in (0,1)$, let 
\begin{eqnarray}
t_\alpha=\sup\limits_{t}\left\{t:\widehat{\FDP}(t)\leq \alpha\right\}, \label{eq-t-alpha-median}
\end{eqnarray}
and the corresponding tests are given by $t_i = I(\widehat M_i \leq t_\alpha)$, $i=1,\ldots, p$. Since for each $t$, the true value of $\FDP(t)$ is defined by 
\begin{eqnarray}
\text{FDP}(t)=\frac{\sum\limits_{i \in \mathcal{H}_0}I(\hat{M}_i\leq t)}{\max\left\{\sum\limits_{i=1}^p I(\widehat{M}_i\leq t),1\right\}}, \label{eq-def-FDP-t-median}
\end{eqnarray}
the true FDP based on our procedure is given by $\text{FDP}(t_\alpha)$. In addition to Conditions \ref{Condition-1}--\ref{Condition-4} given in Section \ref{Section-Bin-Estimation}, we need the following condition. 

\begin{Condition} \label{Condition-5}
    The empirical c.d.f.s of the empirical $p$-values $\FF_{0,j}(Y_{i,j}), i = 1, \ldots, p_1$ for Group 1, i.e., 
    \begin{eqnarray*}
        G_j(t) = \frac{1}{p_1} \sum_{i=1}^{p_1} I \left\{\FF_{0,j}(Y_{i,j}) \leq t \right\}, \quad j = 1,\ldots, K,
    \end{eqnarray*}
    satisfies, as $p\to \infty$,
    \begin{eqnarray}
        \sup_{t\in [0,1]} |G_j(t) - G_j^*(t)| = o(1),  \quad a.s., \label{eq-Condition-5-1}
    \end{eqnarray}
    for $G_j^*(t), j=1,\ldots, K$ being  Lipschitz continuous functions. Furthermore, for the specified $\alpha > 0$, there exists a $t_\alpha^\infty > 0$, such that 
    \begin{eqnarray}
        \frac{\lambda_0^*}{1-\lambda_0^*} \cdot \frac{1-\alpha}{\alpha} M(t_\alpha^\infty) <  G^*(t_\alpha^\infty), \label{eq-Condition-5-2}
    \end{eqnarray}
    where 
    \begin{eqnarray*}
        G^*(t) = \sum\limits_{k=\left\lceil \frac{K}{2} \right\rceil }^K\sum\limits_{|S|=k} \prod_{j\in S} G_j^*(t) \prod_{j \notin S}\left\{1-G_j^*(t) \right\},
    \end{eqnarray*}
    and $S$ denotes a subset of $\{1, \ldots, K\}$, $|S|$ is the number of elements in set $S$. 
    
\end{Condition}

\begin{Remark}
Condition~\ref{Condition-5} imposes two intuitive and, in fact, mild requirements on the empirical $p$-values. First, \eqref{eq-Condition-5-1} ensures that, as the number of tests $p$ grows, the empirical c.d.f.s of the $p$-values from the alternative group converge to well-behaved, Lipschitz continuous limit functions $G_j^*$. This convergence guarantees the stability of the empirical $p$-value distributions and supports reliable large-sample inference. Second, \eqref{eq-Condition-5-2} encodes a form of stochastic ordering between the limiting distributions under the null and alternative hypotheses, requiring that $p$-values corresponding to true alternatives are, on average, smaller than those under the null. This separation ensures that a threshold $t_\alpha^\infty$ exists such that the FDP control procedure can effectively distinguish true signals from nulls in the asymptotic limit. Together, these conditions formalize the intuition that stable and well-separated empirical $p$-value distributions are sufficient for the validity of FDP control in large-scale testing.
\end{Remark}

We have the following theorem. 

\begin{theorem}\label{median fdr theorem}
     Assume Conditions \ref{Condition-1}--\ref{Condition-5}. For any $\alpha\in(0,1)$,  $t_\alpha$ is given by \eqref{eq-t-alpha-median}. When $p\to \infty$, we have 
     \begin{eqnarray*}
     {\rm{FDP}}(t_\alpha) \leq \alpha +o(1),\quad a.s. 
     \end{eqnarray*}
\end{theorem}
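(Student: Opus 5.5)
The plan is to reduce the random objects $\widehat{\FDP}(t)$ and $\FDP(t)$ to deterministic limits that are uniform in $t$, and then carry the threshold $t_\alpha$ through these limits.

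\textbf{Step 1: the numerator.} Define the oracle quantities $S_{i,k}=\FF_{0,k}(Z_{i,k})$ and $M_i=\text{median}\{S_{i,1},\ldots,S_{i,K}\}$. For $i\in\mathH_0$, each coordinate $\{S_{i,k}:i\in\mathH_0\}$ is exactly $\{1/p_0,\ldots,1\}$, as in \eqref{eq-S-i-H-0}. By Condition~\ref{Condition-1}, the joint empirical law of $(X_{i,1},\ldots,X_{i,K})$ is within $o(1)$ of the product of its marginals. Hence the joint empirical law of $(S_{i,1},\ldots,S_{i,K})$, $i\in\mathH_0$, is uniformly $o(1)$-close to the product of $K$ discrete uniforms on $\{1/p_0,\ldots,1\}$, which is itself $O(1/p_0)$-close to $\text{Uniform}(0,1)^K$. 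The event $\{\text{median}\le t\}$ is a finite union of orthant-type events $\{S_{i,k}\le t,\ k\in S;\ S_{i,k}>t,\ k\notin S\}$ with $|S|\ge\lceil K/2\rceil$. Therefore
\[
\sup_t\Big|\frac1{p_0}\sum_{i\in\mathH_0}I(M_i\le t)-M(t)\Big|=o(1),\quad a.s.
\]

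\textbf{Step 2: the alternatives.} For the alternative group, Condition~\ref{Condition-1} (with $m=1$), Condition~\ref{Condition-5} and the same inclusion--exclusion argument give
\[
\sup_t\Big|\frac1{p_1}\sum_{i\in\mathH_1}I(M_i\le t)-G^*(t)\Big|=o(1).
\]
Here I use that each $\FF_{0,k}$ is monotone, so $\{\FF_{0,k}(Y_{i,k})\le t\}$ is a one-dimensional lower set in $Y_{i,k}$. Together with Condition~\ref{Condition-3}, this yields
\[
\frac1p\sum_{i=1}^p I(M_i\le t)\to \lambda_0^*M(t)+\lambda_1^*G^*(t)
\]
uniformly in $t$.

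\textbf{Step 3: replacing oracle by estimated $p$-values (main obstacle).} Theorem~\ref{theorem-2}(b) only gives $L^2$ convergence of $\widehat\FF_{0,k}$ to $\FF_{0,k}$ under $d\FF_{m,k}$, not sup-norm convergence. The idea is to bound
\[
\frac1p\sum_i\big|I(\widehat M_i\le t)-I(M_i\le t)\big|\le \frac1p\sum_i I\big(|\widehat M_i-M_i|>\delta\big)+\frac1p\sum_i I\big(|M_i-t|\le\delta\big).
\]
Since the median is 1-Lipschitz in the sup norm of its arguments, $|\widehat M_i-M_i|\le\max_k|\widehat\FF_{0,k}(Z_{i,k})-\FF_{0,k}(Z_{i,k})|$. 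The first term is then at most
\[
\sum_{k}\delta^{-2}\sum_m \frac{p_m}{p}\int(\widehat\FF_{0,k}-\FF_{0,k})^2\,d\FF_{m,k}=o(1)
\]
by Markov's inequality applied to the empirical measures. The second term is $O(\delta)+o(1)$ uniformly in $t$, because by Steps 1--2 the limit $\lambda_0^*M+\lambda_1^*G^*$ is Lipschitz: $M$ is a polynomial and the $G_j^*$ are Lipschitz by Condition~\ref{Condition-5}. Letting $\delta\to0$ slowly gives uniform-in-$t$ replacement in both the numerator $\frac1p\sum_{i\in\mathH_0}I(\widehat M_i\le t)$ and the denominator. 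Combined with Theorem~\ref{theorem-2}(a), this gives
\[
\widehat{\FDP}(t)\to\frac{\lambda_0^*M(t)}{\lambda_0^*M(t)+\lambda_1^*G^*(t)}=:\FDP^\infty(t),\qquad \FDP(t)\to\FDP^\infty(t),
\]
uniformly over $t\ge\epsilon$ for any fixed $\epsilon>0$. Near $t=0$ the denominator may vanish, so uniformity must be restricted to $t\ge\epsilon$.

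\textbf{Step 4: conclusion.} Condition~\eqref{eq-Condition-5-2} is exactly $\FDP^\infty(t_\alpha^\infty)<\alpha$. By the uniform convergence, eventually $\widehat{\FDP}(t_\alpha^\infty)\le\alpha$, so $t_\alpha\ge t_\alpha^\infty>0$ a.s. for large $p$, and $t_\alpha$ stays bounded away from the problematic region near zero. Since $\widehat{\FDP}$ is right-continuous between jumps of the step function in the denominator, $\widehat{\FDP}(t_\alpha)\le\alpha+o(1)$. Then
\[
\FDP(t_\alpha)\le\widehat{\FDP}(t_\alpha)+\sup_{t\ge t_\alpha^\infty}\big|\FDP(t)-\widehat{\FDP}(t)\big|\le\alpha+o(1).
\]
The delicate points are the uniformity in Step 3, where only $L^2$ control is available, and the handling of the supremum at jump points in Step 4.
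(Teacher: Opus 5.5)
Your proposal is correct and follows essentially the same route as the paper. Both arguments rest on the same four ingredients:
\begin{enumerate}
\item oracle medians $M_i$, whose null and alternative empirical c.d.f.s are matched to $M(t)$ and $G^*(t)$ through Condition 1 on orthant-type events (plus Condition 5 for the alternatives);
\item replacement of $M_i$ by $\widehat M_i$ via Chebyshev's inequality applied to Theorem 2(b), the 1-Lipschitz property of the median, and a $\delta$-band around $t$ controlled by the Lipschitz limits;
\item the bound $t_\alpha \ge t_\alpha^\infty$ from (5.2);
\item the resulting lower bound on the denominator of $\widehat{\FDP}$.
\end{enumerate}
The differences are cosmetic: you treat $\mathcal{H}_0$ and $\mathcal{H}_1$ in one combined replacement bound and state uniform convergence of $\widehat{\FDP}$ and $\FDP$ over $t \ge t_\alpha^\infty$, whereas the paper handles the two index sets separately and bounds the denominator at $t_\alpha$ directly.
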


\begin{Remark}
The development for Theorem \ref{median fdr theorem} is technically involved. A primary difficulty arises from the fact that the empirical $p$-values used in the FDP procedure are dependent, with a complex and largely unknown dependence structure; Condition \ref{Condition-1} imposes only a mild requirement on the dependence. This complicates the analysis of the FDP as a stochastic quantity. In addition, as discussed in Section \ref{section-asymp}, the estimators of the null c.d.f.s are shown to converge to their empirical counterparts under the empirical measure only in the $L_2$ sense. Since these estimators are subsequently used to form the empirical $p$-values, the lack of stronger convergence (such as uniform convergence) poses further challenges in rigorously establishing FDP control. To the best of our knowledge, both the nonparametric estimation of the empirical null c.d.f.s and the establishment of their convergence to the empirical c.d.f.s under general dependence are new, which further adds to the technical difficulty of the analysis.
\end{Remark}

\section{Simulation} \label{section-simulation} 


\subsection{Data simulation}

The simulated data mimic the structure of gene expression data used in the real data analysis in Section \ref{section-real-data}. Throughout the numerical studies, we set $p = 1000$ and $n_0 = n_1 = 120$. We generate $U_1, \ldots, U_{n_0}$ as samples from the control group and $V_1, \ldots, V_{n_1}$ from the diseased group. For each subject, $U_i$ (or $V_i$) is $p$-dimensional, corresponding to $p$ genes; that is, $U_i = (U_{i,1}, \ldots, U_{i,p})^T$ and $V_i = (V_{i,1}, \ldots, V_{i,p})^T$. We randomly select $p_1 = 200$ indices from $\{1, \ldots, p\}$ to form $\mathH_1$, and let the remaining indices form $\mathH_0$. 

The data are generated using a normal copula approach such that, marginally, $U_{i,j} \sim F(\cdot)$ for $i=1,\ldots,n_0$ and $j=1,\ldots,p$; $V_{i,j} \sim F(\cdot)$ for $i=1,\ldots,n_1$ and $j \in \mathH_0$; and $V_{i,j} \sim G(\cdot)$ for $i=1,\ldots,n_1$ and $j \in \mathH_1$, where $F(\cdot)$ and $G(\cdot)$ will be specified later. 

Specifically, let $\Sigma_\rho$ be a $p \times p$ covariance matrix with unit diagonal entries. We generate $W_i = (W_{i,1}, \ldots, W_{i,p})^T \sim \mathcal{N}(0_p, \Sigma_\rho)$ and define
\begin{eqnarray*}
    U_{i,j} = F^{-1}\left( \Phi(W_{i,j}) \right), \quad j=1,\ldots,p,
\end{eqnarray*}
where $\Phi(\cdot)$ is the standard normal c.d.f. The dependence among $U_{i,1}, \ldots, U_{i,p}$ is induced by the off-diagonal entries of $\Sigma_\rho$, which will be specified later. Similarly, we generate $\widetilde W_i = (\widetilde W_{i,1}, \ldots, \widetilde W_{i,p})^T \sim \mathcal{N}(0_p, \Sigma_\rho)$ and define
\begin{eqnarray*}
    V_{i,j} = F^{-1}\left( \Phi(\widetilde W_{i,j}) \right), \quad j \in \mathH_0; \qquad 
    V_{i,j} = G^{-1}\left( \Phi(\widetilde W_{i,j}) \right), \quad j \in \mathH_1,
\end{eqnarray*}
so that $V_{i,j} \sim F$ for $j \in \mathH_0$ and $V_{i,j} \sim G$ for $j \in \mathH_1$ marginally.

Based on the choices of $F(\cdot)$, $G(\cdot)$, and $\Sigma_\rho$, we consider the following settings:
\begin{itemize}
    \item Study 1: $F \sim \mathcal{N}(1, 1)$, $G \sim \mathcal{N}(0.6, 1)$, and $\Sigma_\rho$ follows a compound symmetry structure with $\sigma_{i,j} = \rho$ for $i \neq j$;

    \item Study 2: $F \sim \mathcal{N}(1, 1)$, $G \sim \mathcal{N}(0.6, 1)$, and $\Sigma_\rho$ follows an AR(1) dependence structure, i.e., $\sigma_{i,j} = \rho^{|i-j|}$;

    \item Study 3: $F \sim \text{Lognormal}(1, 1)$, $G \sim \text{Lognormal}(0.6, 1)$, and $\Sigma_\rho$ follows a compound symmetry structure with $\sigma_{i,j} = \rho$ for $i \neq j$;

    \item Study 4: $F \sim \text{Lognormal}(1, 1)$, $G \sim \text{Lognormal}(0.6, 1)$, and $\Sigma_\rho$ follows an AR(1) dependence structure, i.e., $\sigma_{i,j} = \rho^{|i-j|}$.
\end{itemize}
Note that the compound symmetry structure in Studies 1 and 3 does not satisfy the weak dependence assumption. Moreover, due to the strong skewness of the lognormal distribution, the true distributions of the test statistics are not well approximated by the $t$ or normal distributions.

Furthermore, in the real gene expression data example, we observe that the empirical distributions of many genes exhibit bimodal patterns, suggesting the possibility of an underlying mixture distribution. We conjecture that this may arise from various practical factors. For instance, the data may include both male and female patients, leading to differential gene expression across subgroups, or the data may be contaminated due to misclassification, where some diseased subjects are labeled as healthy and vice versa. To mimic such scenarios, we also consider the following settings.
\begin{itemize}
\item Study 5: For each $i = 1,\ldots, n_0$, we generate $U_i$ from a mixture model:
\begin{eqnarray*}
    U_i = \delta_i U_i^{(1)} + (1 - \delta_i) U_i^{(2)},
\end{eqnarray*}
where $\delta_i$, $U_i^{(1)}$, and $U_i^{(2)}$ are mutually independent, with $\delta_i \sim \text{Bernoulli}(0.7)$. The components $U_i^{(1)}$ and $U_i^{(2)}$ are generated using the same normal copula approach as in Studies~1--4. Specifically, we take $\Sigma_\rho$ to follow an AR(1) structure, i.e., $\sigma_{i,j} = \rho^{|i-j|}$, and generate $U_i^{(1)}$ such that marginally
\begin{eqnarray*}
    U_{i,j}^{(1)} \sim \text{Lognormal}(1, 1), \quad j = 1,\ldots, p,
\end{eqnarray*}
and similarly generate $U_i^{(2)}$ such that
\begin{eqnarray*}
    U_{i,j}^{(2)} \sim \text{Lognormal}(1.5, 1), \quad j = 1,\ldots, p.
\end{eqnarray*}
As a result, $U_{i,1}, \ldots, U_{i,p}$ are dependent, and each marginal follows the mixture distribution
\begin{eqnarray*}
    U_{i,j} \sim 0.7 \cdot \text{Lognormal}(1, 1) + 0.3 \cdot \text{Lognormal}(1.5, 1).
\end{eqnarray*}

The data $V_i$ are generated analogously. For $j \in \mathH_0$,
\begin{eqnarray*}
    V_{i,j} \sim 0.9 \cdot \text{Lognormal}(1, 1) + 0.1 \cdot \text{Lognormal}(1.5, 1),
\end{eqnarray*}
and for $j \in \mathH_1$,
\begin{eqnarray*}
    V_{i,j} \sim 0.9 \cdot \text{Lognormal}(0.55, 1) + 0.1 \cdot \text{Lognormal}(1.5, 0.6).
\end{eqnarray*}

\item Study 6: We adopt the same setup as in Study~5, except that the marginal distributions are given by
\begin{eqnarray*}
    U_{i,j} &\sim& 0.9 \cdot \text{Lognormal}(1, 1) + 0.1 \cdot \text{Lognormal}(1.5, 1), \\
    V_{i,j} &\sim& 0.7 \cdot \text{Lognormal}(1, 1) + 0.3 \cdot \text{Lognormal}(1.5, 1), \quad \text{for } j \in \mathH_0, \\
    V_{i,j} &\sim& 0.7 \cdot \text{Lognormal}(0.55, 1) + 0.3 \cdot \text{Lognormal}(1.5, 0.6), \quad \text{for } j \in \mathH_1.
\end{eqnarray*}

\end{itemize}

For $\rho$, we consider the values $\rho = 0.2, 0.4, 0.6, 0.8$, where larger values of $\rho$ correspond to stronger dependence in the data. For each setting, the simulation is repeated 500 times.

\subsection{Data processing} \label{Sim-Data-Processing}
Based on the simulated data $\{U_i\}_{i=1}^{n_0}$ and $\{V_i\}_{i=1}^{n_1}$, we construct the $K$-dimensional test statistics $Z_1, \ldots, Z_p$ for our method. Throughout, we set $K = 5$. Let $\widetilde n_0 = n_0 / K$, and randomly partition $\{U_i\}_{i=1}^{n_0}$ into $K$ equally sized subsets:
\begin{eqnarray*}
    \{U_i\}_{i=1}^{n_0} = \{U_i^{(1)}\}_{i=1}^{\widetilde n_0} \cup \ldots \cup \{U_i^{(K)}\}_{i=1}^{\widetilde n_0}.
\end{eqnarray*}
Similarly, let $\widetilde n_1 = n_1 / K$ and partition $\{V_i\}_{i=1}^{n_1}$ into
\begin{eqnarray*}
    \{V_i\}_{i=1}^{n_1} = \{V_i^{(1)}\}_{i=1}^{\widetilde n_1} \cup \ldots \cup \{V_i^{(K)}\}_{i=1}^{\widetilde n_1}.
\end{eqnarray*}
For each dimension $k = 1,\ldots, K$, we compute the test statistics
\begin{eqnarray*}
    Z_{j,k} = \frac{\bar V_{\cdot, j}^{(k)} - \bar U_{\cdot, j}^{(k)}}{\sqrt{\frac{S_{U^{(k)},j}^2}{\widetilde n_0} + \frac{S_{V^{(k)},j}^2}{\widetilde n_1}}}, \quad j=1,\ldots,p,
\end{eqnarray*}
where $S_{U^{(k)},j}^2$ and $S_{V^{(k)},j}^2$ denote the sample variances of $\{U_{i,j}^{(k)}\}_{i=1}^{\widetilde n_0}$ and $\{V_{i,j}^{(k)}\}_{i=1}^{\widetilde n_1}$, respectively. The resulting $Z_1, \ldots, Z_p$ are then used as the input for our method.

We compare the eFDP method (denoted Our) with three alternative methods: (1) the method of \citet{fan2012estimating, FanHan2017FDP} (denoted PFA); (2) the method of \citet{DuGuoSunZou2023SDA} (denoted SDA); and (3) the Benjamini--Hochberg procedure \citep{benjamini1995controlling} (denoted BH). For PFA and SDA, we use the raw data $\{U_i\}_{i=1}^{n_0}$ and $\{V_i\}_{i=1}^{n_1}$ as input to their respective implementations. For the BH procedure, we compute standard two-sample $t$-statistics to construct the test statistics.

\subsection{Comparison of FDP control}

We compare the performance of different methods based on the FDPs and the true discovery proportions (TDPs), where 
\begin{eqnarray*}
    \text{FDP} = \frac{V}{R} \qquad \text{TDP} = \frac{R-V}{p_1}. 
\end{eqnarray*}
The results for Studies 1--4 are displayed in four panels of Figure \ref{Figure-Examples-1--4}.
From the figure, we observe that our method achieves accurate and robust control of the FDP across all considered settings. The averaged FDPs over 500 repetitions are consistently close to the nominal level $\alpha = 0.2$, while the TDPs are the largest among almost all competing methods. This performance is expected, as our method is designed to accommodate a wide range of dependence structures among the test statistics and does not rely on correct specification of the null distribution. Consequently, it remains stable under both strong dependence and distributional misspecification.

The performance of the PFA method depends on the underlying data-generating mechanism. In Study~1 (top-left panel of Figure \ref{Figure-Examples-1--4}), the averaged FDP of the PFA method increases from below the nominal level to above $\alpha = 0.2$ as the dependence parameter $\rho$ increases from 0.2 to 0.8. Although its TDP improves with increasing dependence, this gain in power is achieved at the expense of inflated FDP, and the resulting TDP remains lower than that of our method. In contrast, in Study~2 (top-right panel of Figure \ref{Figure-Examples-1--4}), the PFA method performs well, with FDP control and TDP comparable to those of our method. This behavior is likely due to the fact that the data in this study follow an AR(1) dependence structure with normally distributed marginal distributions, which closely match the assumptions by the PFA method. In Study~4 (bottom-right panel of Figure \ref{Figure-Examples-1--4}), however, the PFA method underperforms relative to our method: although its FDP remains below the nominal level, its TDP is  smaller. This loss of power is likely due to the fact that the data are generated as lognormal distributions, which results in the constructed test statistics not well approximated by normal or $t$-distributions, thereby violating the method’s distributional assumptions.

The SDA method exhibits mixed performance across the examples. In Study~1, it shows reasonable behavior: although its FDP is conservative, its TDP remains competitive and close to those of our method and the PFA method. In Study~2 (top-right panel of Figure \ref{Figure-Examples-1--4}), the SDA method also performs comparably to our method, the PFA method, and the BH procedure, achieving similar FDP and TDP values. However, its performance deteriorates in Studies~3 and~4. In Study~3, its FDP increases and exceeds the nominal level as $\rho$ increases from 0.2 to 0.8, and is consistently higher than that of the PFA method; meanwhile, its TDP is substantially lower than those of both our method and the PFA method. In Study~4, although the FDP is only slightly above $\alpha = 0.2$, the TDP is significantly smaller than those of all other methods. These results indicate that, in these settings, the apparent FDP control of the SDA method is achieved at the expense of power. A possible explanation is that the SDA method relies on a central limit theorem approximation at a key step, which may be inaccurate when the data are generated from heavily skewed distributions such as the lognormal distribution.

For the BH procedure, its performance in Studies~2 and~4 is reasonable. In these studies, the data are generated from an AR(1) dependence structure, which satisfies the weak dependence conditions under which the BH method is known to control the false discovery rate; see, for example, \citet{benjamini2001control} and related subsequent work. In contrast, in Studies~1 and~3, which are based on compound symmetry dependence, the BH method tends to produce conservative FDP control and lower TDP compared to both the PFA method and our method. This behavior may be partly explained by the fact that, although compound symmetry with nonnegative correlation can satisfy PRDS in the one-sided testing setting, such guarantees generally do not extend to two-sided tests, where the PRDS condition may fail even under nonnegative dependence; see \citet{fithian2022conditional}. As a result, the theoretical conditions under which the BH procedure achieves exact FDR control may not be fully satisfied, leading to more conservative behavior and reduced power in practice.


The results for Studies~5--6 are presented in two panels of Figure~\ref{Figure-Examples-5--6}. 
These two studies pose substantial challenges for FDP control, as the test statistics are not only dependent, 
but also deviate markedly from the theoretical $t$ and standard normal distributions due to the presence of mixture structures. 
Such distributional contamination reflects realistic gene expression settings and undermines the validity of standard asymptotic approximations. In particular, both the variance and the center of the marginally distributions of the resultant test statistics are deviate from the theoretical distribution, i.e., $t$ or standard normal distribution. 

In both studies, our method continues to achieve accurate and stable FDP control across all values of the dependence parameter $\rho$. 
The averaged FDPs remain consistently close to the nominal level $\alpha = 0.2$, demonstrating the robustness of our method 
in the presence of both strong dependence and severe distributional misspecification, as discussed in Studies~1--4. 
Moreover, our method maintains competitive power in these challenging settings.

\begin{figure}[!htb]
    \centering
    \begin{minipage}{0.49\linewidth}
        \centering
        \includegraphics[width=\linewidth]{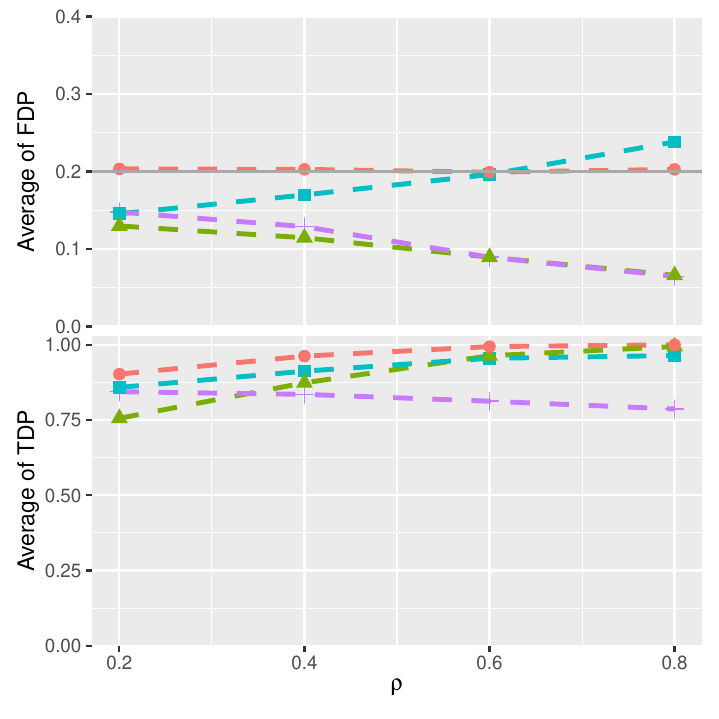}
    \end{minipage}\hfill
    \begin{minipage}{0.49\linewidth}
        \centering
        \includegraphics[width=\linewidth]{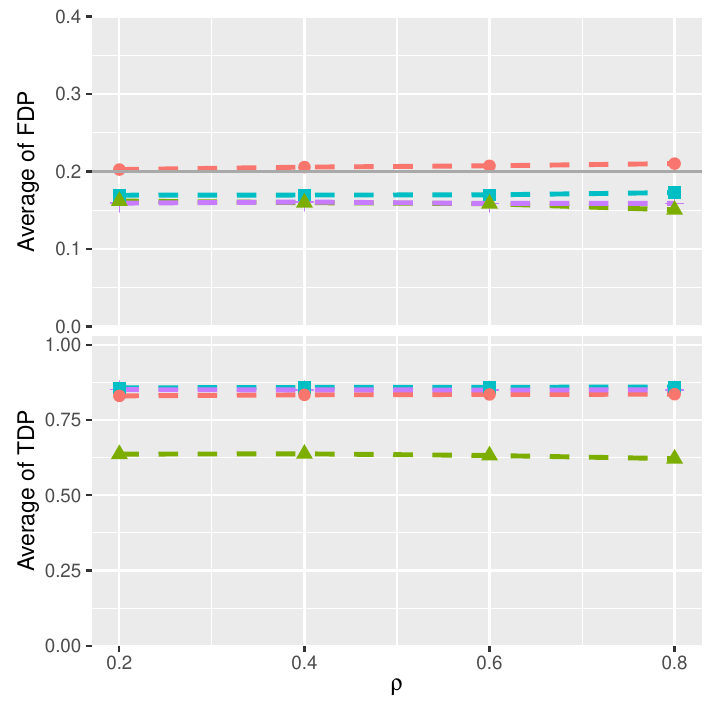}
    \end{minipage}

    \vspace{0.3em}

    \begin{minipage}{0.49\linewidth}
        \centering
        \includegraphics[width=\linewidth]{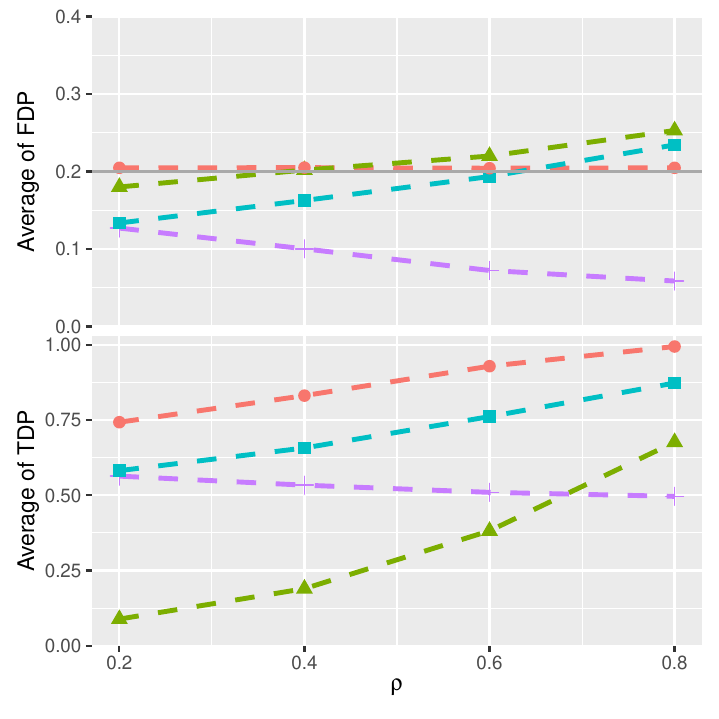}
    \end{minipage}\hfill
    \begin{minipage}{0.49\linewidth}
        \centering
        \includegraphics[width=\linewidth]{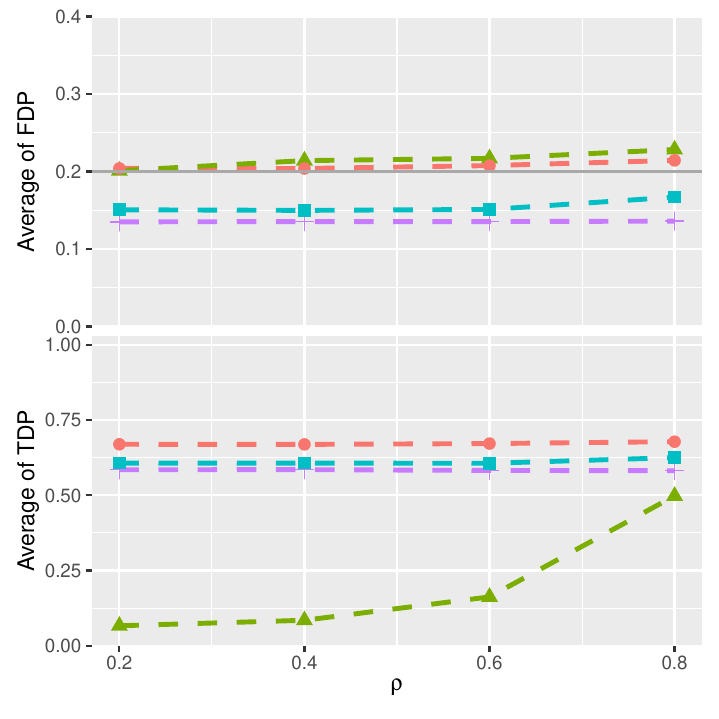}
    \end{minipage}

    \vspace{0.4em}

    \includegraphics[width=0.7\linewidth]{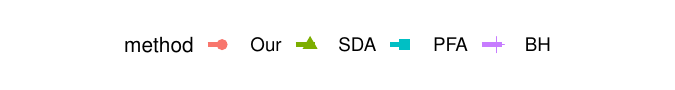}
    \vspace{0.3em}
    \caption{Results of Studies 1--4: top left panel: Study 1; top right panel: Study 2; bottom left panel: Study 3; bottom right panel: Study 4.}
    \label{Figure-Examples-1--4}
\end{figure}

\begin{figure}[!htb] 
    \centering
    \begin{minipage}{0.48\linewidth}
        \centering
        \includegraphics[
            width=\linewidth,
            height=1.25\linewidth
        ]{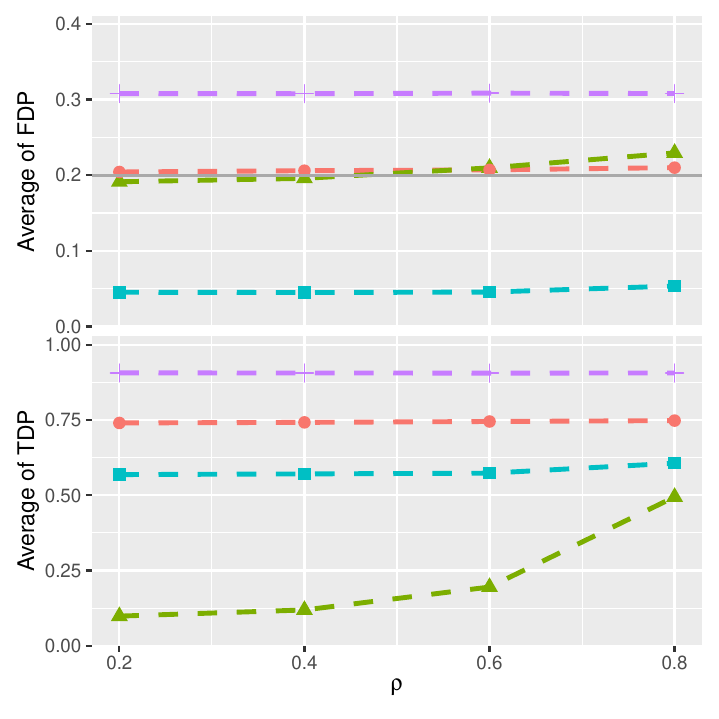}
    \end{minipage}\hfill
    \begin{minipage}{0.48\linewidth}
        \centering
        \includegraphics[
            width=\linewidth,
            height=1.25\linewidth
        ]{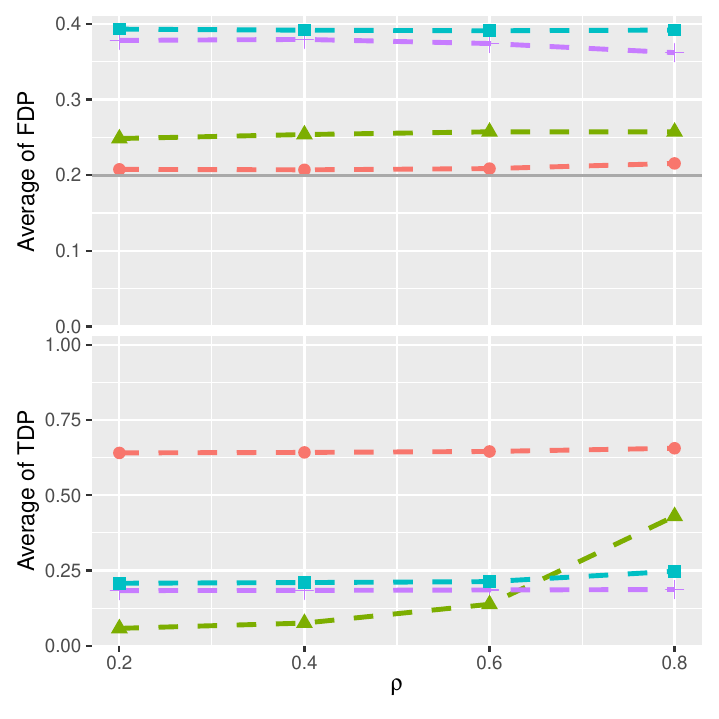}
    \end{minipage}
     \vspace{0.4em}

    \includegraphics[width=0.7\linewidth]{Paper_Figures/legend_only_v5.pdf}
    \vspace{0.3em}
  \caption{Results of Studies 5--6: left panel: Study 5;  right panel: Study 6. } \label{Figure-Examples-5--6}
\end{figure}

In Study~5, the PFA method is overly conservative, with FDPs substantially below the nominal level, 
leading to a noticeable loss of power. 
The BH procedure fails to control the FDP and exhibits inflated FDP values. 
Although the BH method achieves relatively high TDP, this gain in power is attained at the expense of liberal FDP control. 
The SDA method shows mixed behavior: while its FDP is close to the nominal level for small to moderate dependence, 
it becomes slightly inflated under strong dependence, and its TDP is substantially lower than that of our method, 
particularly when the dependence is weak.

The discrepancies among methods become more pronounced in Study~6. 
While our method continues to provide reliable FDP control with stable power, 
both the PFA and BH procedures exhibit substantially inflated FDPs across all dependence settings, 
indicating a clear failure of error control under this more severe mixture contamination. 
The SDA method again shows unstable performance, with FDPs exceeding the nominal level and TDPs remaining low 
for weak to moderate dependence, improving only under very strong dependence. 
Overall, none of the competing methods achieves a satisfactory balance between FDP control and power in this example.

Taken together, the results in Studies~5--6 highlight the intrinsic difficulty of FDP control in the presence of both dependence 
and mixture-induced distributional heterogeneity. 
In these practically motivated scenarios, our method stands out by maintaining accurate FDP control 
while achieving consistently higher power than existing alternatives.
 
Additional simulation examples based on one-sided hypothesis tests are presented in Section~3 of the supplementary material.

\section{Real Data Analysis} \label{section-real-data}
The dataset GSE25066 is a GEO SuperSeries containing genome-wide expression profiles measured on the Affymetrix Human Genome U133A platform for a total of 508 pretreatment breast-tumor biopsy samples obtained from patients receiving neoadjuvant chemotherapy. Among these samples, 99 correspond to patients who achieved a pathologic complete response (pCR), 389 correspond to those with residual disease (RD), and 20 samples lack response annotation. We retrieved the dataset using the \texttt{GEOquery} package in \textsf{R}. For the purpose of identifying differentially expressed genes, we removed samples with missing response information and thus the data of interest is composed of  pCR Group with $n_1=99$ samples and the RD group with $n_2=389$ samples. 

The data set measures the expressions of 22283 genes. We first take a variance-filtering step to conduct a preliminary screening. Specifically, we computed the variance of each gene across all samples, ranked the genes in descending order of variance, and retained the top $5\%$ (i.e., 1,114 genes) for subsequent analysis. Genes exhibiting higher overall variance are more likely to capture biologically meaningful differences between groups.

To apply our method, we follow the data processing procedure described in Section~\ref{Sim-Data-Processing} with $K=5$ and construct test statistics $Z_1,\ldots,Z_{1114}$. Since the underlying differential expression signals may be either positive or negative, we consider two-sided tests and compute the corresponding $p$-values by treating the test statistics as Welch $t$-statistics. These $p$-values are then used as the input to our algorithm.

The numbers of identified genes across different methods and nominal FDR levels are reported in Table~\ref{Table-real-data-detect}. As seen from the table, the competing methods exhibit markedly different discovery behaviors. The BH procedure identifies a substantially larger number of genes across all values of $\alpha$, while the PFA and SDA methods are considerably more conservative, with the SDA method yielding the smallest number of discoveries at lower FDR levels. Our method produces a moderate number of discoveries, lying between the BH procedure and the more conservative PFA and SDA methods.

\begin{table}[htbp]
\centering
\caption{Number of discoveries based on different $\alpha$; real data}
\vspace{0.15in}
\begin{tabular}{@{}lcccc@{}}
\toprule
& Our & PFA & SDA & BH \\
\midrule
$\alpha=0.05$ & 138 & 73  & 20 & 442 \\
$\alpha=0.10$ & 192 & 93  & 29 & 487 \\
$\alpha=0.20$ & 269 & 138 & 72 & 582 \\
\bottomrule
\end{tabular} \label{Table-real-data-detect}
\end{table}

To further assess the plausibility of these results, we estimate the proportion of significant genes among the 1,114 screened genes. The estimated proportions of the significant genes based on our method and the PFA method are 0.223 and 0.387, respectively. Combined with the simulation findings—particularly those from Example~5, which was designed to mimic realistic gene expression settings involving dependence and mixture contamination—these estimates suggest that our method may have achieved a reasonable balance between false discovery control and detection power in this real data analysis. Although the true set of differentially expressed genes is unknown, the moderate discovery rate of our method appears consistent with stable FDP control and nontrivial power.

Additional insight is provided by examining the overlap of identified genes across methods at $\alpha=0.2$, as summarized in Figure~\ref{Upset-alpha-0.2-final}. Nearly all genes identified by the PFA method are also detected by our method, indicating that our method captures the conservative discoveries made by PFA. A substantial proportion of the genes identified by the SDA method also overlap with our discoveries, though the SDA method identifies far fewer genes overall. In contrast, while the BH procedure identifies many more genes, only a subset of these overlap with those detected by our method, suggesting that the additional discoveries made by BH may include a larger number of false positives.

Taken together, these results reinforce the conclusions drawn from the simulation studies. In complex gene expression data characterized by dependence and distributional heterogeneity, our method appears to provide a stable and interpretable set of discoveries, avoiding the excessive conservativeness of PFA and SDA while mitigating the liberal behavior of the BH procedure.



\begin{figure}[htbp]
    \centering
    \includegraphics[width=0.7\linewidth]{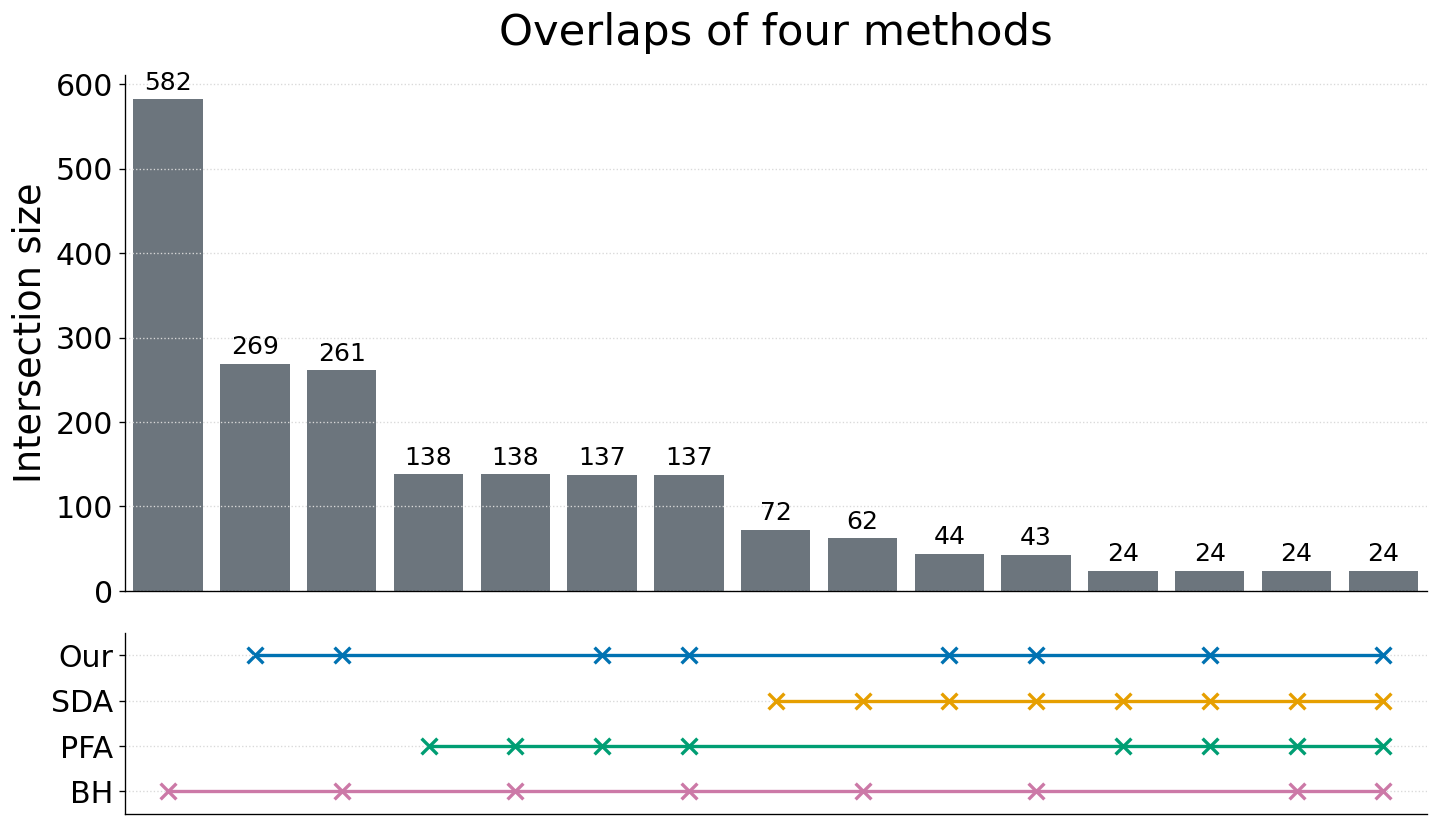}
    \caption{UpSet plot showing overlaps among genes identified by the four methods; real data}
    \label{Upset-alpha-0.2-final}

\end{figure}

\section{Discussion} \label{section-discussion}

In this paper, we have focused on the role of the empirical null distribution in controlling the FDP in large-scale multiple testing problems. Our work highlights a shift in perspective: in large-scale testing, accurate estimation of the empirical distribution of null test statistics may be more consequential for FDP control than explicitly modeling dependence among tests. Our contributions are as follows. First, we show that under an oracle scenario where the empirical null is known, FDP control can be achieved with optimal efficiency regardless of the dependence structure among the test statistics, suggesting that modeling dependence, while common in the literature, may be less critical than accurate empirical null estimation. Second, we establish nonparametric estimation of the empirical c.d.f. of null test statistics within a multivariate mixture model framework, together with a proof of asymptotic convergence. Finally, we establish an eFDP procedure that asymptotically controls the FDP and demonstrates robust performance under a broad class of dependence structures.

The simulation studies and real-data analysis illustrate that our eFDP method achieves a favorable balance between false discovery control and detection power. In particular, our approach remains reliable under strong dependence structures such as compound symmetry, where methods explicitly modeling dependence may fail. The real data example further confirms that the empirical null-based procedure identifies a stable and interpretable set of discoveries, capturing conservative findings while avoiding overly liberal inferences.

Beyond the immediate application to FDP control, our work is also related to the literature on statistical inference for mixture models. One central focus in the literature is the estimation of component distributions and mixing proportions under identifiability conditions; see, for example, \citet{Hall2003, Hall2005, Allman2009, levine2011, YuQinLi2026}. Much of the existing work in the mixture model context assumes independence of the observations; see, for example, \citet{Benaglia2009, levine2011, Yu2019, Zheng2020, YuQinLi2026}. While these approaches have been successful in many settings, the problem of estimating empirical distribution functions of mixture components—particularly in nonparametric settings and under dependence—has received relatively less attention. By directly targeting the empirical c.d.f. without imposing independence, our framework provides a complementary perspective that may be useful for extending mixture model methodology to dependent data settings.

Several directions for future research arise naturally from our study. In this paper, we aggregated the empirical p-values across dimensions by computing their median; however, alternative aggregation strategies could be explored, and their theoretical and numerical properties warrant investigation. In addition, while we have established the consistency of the empirical c.d.f. estimators, we have not characterized the asymptotic variance or distribution of the FDP. Quantifying the variability of the FDP is important for providing more precise inferential statements; see \citet{mei2024asymptotic} and the references therein. Moreover, extending the theoretical development to enable more sophisticated inference based on the estimated empirical c.d.f.s represents a promising direction for future work, which would not only enhance the practical and theoretical utility of empirical null-based procedures in large-scale multiple testing but also broaden their applicability for inference in mixture models.

\section*{Declaration of competing interest}
The authors declare that they have no competing interests.

\section*{Acknowledgments}
Dr. Li's work is supported in part by the Natural Sciences and Engineering Research Council of Canada (RGPIN-2020-04964) and the Faculty of Mathematics Research Chair Funding of the
University of Waterloo. Dr. Jiang's work is supported in part by the U.S. National Science Foundation (Award Number: 2515805). Dr. Yu's work is supported in part by the Singapore Ministry of Education Academic Research Tier 1 Fund: A-8000413-00-00.

\bibliographystyle{abbrvnat}
\bibliography{ref}

@article{jin2007estimating,
  author  = {Jin, Jiashun and Cai, T. Tony},
  title   = {Estimating the Null and the Proportion of Nonnull Effects in Large-Scale Multiple Comparisons},
  journal = {Journal of the American Statistical Association},
  year    = {2007},
  volume  = {102},
  pages   = {495--506},
  doi     = {10.1198/016214507000000167}
}

@article{wang2010slim,
  title={SLIM: a sliding linear model for estimating the proportion of true null hypotheses in datasets with dependence structures},
  author={Wang, Hong-Qiang and Tuominen, Lindsey K and Tsai, Chung-Jui},
  journal={Bioinformatics},
  volume={27},
  pages={225--231},
  year={2010},
  publisher={Oxford University Press}
}

@article{langaas2005estimating,
  title={Estimating the proportion of true null hypotheses, with application to DNA microarray data},
  author={Langaas, Mette and Lindqvist, Bo Henry and Ferkingstad, Egil},
  journal={Journal of the Royal Statistical Society: Series B (Statistical Methodology)},
  volume={67},
  pages={555--572},
  year={2005},
  publisher={Wiley Online Library}
}

@article{storey2003statistical,
  title={Statistical significance for genomewide studies},
  author={Storey, John D and Tibshirani, Robert},
  journal={Proceedings of the National Academy of Sciences},
  volume={100},
  pages={9440--9445},
  year={2003},
  publisher={National Acad Sciences}
}

@article{Bradley2005,
  author = {Bradley, Richard C.},
  title = {Basic Properties of Strong Mixing Conditions. A Survey and Some Open Questions},
  journal = {Probability Surveys},
  volume = {2},
  pages = {107--144},
  year = {2005}
}

@book{Doukhan1994,
  author = {Doukhan, Paul},
  title = {Mixing: Properties and Examples},
  publisher = {Springer},
    address   = {New York},
  year = {1994}
}

@book{FanYao2003,
  author = {Fan, Jianqing and Yao, Qiwei},
  title = {Nonlinear Time Series: Nonparametric and Parametric Methods},
  publisher = {Springer},
   address   = {New York},
  year = {2003}
}

@article{Zheng2020,
	author = {Chaowen Zheng and Yichao Wu},
	journal = {Journal of the American Statistical Association},
	pages = {1456-1471},
	title = {Nonparametric estimation of multivariate mixtures},
	volume = {115},
	year = {2020}}

@article{levine2011,
	author = {M. Levine and D. R. Hunter and D. Chauveau},
	journal = {Biometrika},
	pages = {403-416},
	title = {Maximum smoothed likelihood for multivariate mixtures},
	volume = {98},
	year = {2011}}

@article{Benaglia2009,
	author = {Tatiana Benaglia and Didier Chauveau and David R. Hunter},
	journal = {Journal of Computational and Graphical Statistics},
	pages = {505-526},
	title = {An {EM}-like algorithm for semi- and nonparametric estimation in multivariate mixtures},
	volume = {18},
	year = {2009}}

@article{Hall2003,
	author = {Peter Hall and Xiao-Hua Zhou},
	journal = {The Annals of Statistics},
	pages = {201-224},
	title = {{Nonparametric estimation of component distributions in a multivariate mixture}},
	volume = {31},
	year = {2003}}

@article{Hall2005,
	author = {Hall, Peter and Neeman, Amnon and Pakyari, Reza and Elmore, Ryan},
	journal = {Biometrika},
	pages = {667-678},
	title = {{Nonparametric inference in multivariate mixtures}},
	volume = {92},
	year = {2005}}

@article{Allman2009,
	author = {Elizabeth S. Allman and Catherine Matias and John A. Rhodes},
	journal = {The Annals of Statistics},
	pages = {3099-3132},
	title = {Identifiability of parameters in latent structure models with many observed variables},
	volume = {37},
	year = {2009}}

@article{Yu2019,
	author = {Tao Yu and Pengfei Li and Jing Qin},
	journal = {Electronic Journal of Statistics},
	pages = {4035-4078},
	title = {{Maximum smoothed likelihood component density estimation in mixture models with known mixing proportions}},
	volume = {13},
	year = {2019}}

@article{StoreyTaylorSiegmund2004,
  author       = {Storey, John D. and Taylor, Jonathan E. and Siegmund, David},
  title        = {Strong control, conservative point estimation and simultaneous conservative consistency of false discovery rates: A unified approach},
  journal      = {Journal of the Royal Statistical Society: Series B (Statistical Methodology)},
  volume       = {66},
  number       = {1},
  pages        = {187--205},
  year         = {2004},
  doi          = {10.1111/j.1467-9868.2004.00439.x},
  url          = {https://doi.org/10.1111/j.1467-9868.2004.00439.x},
}

@article{de2009,
	author = {de Leeuw, Jan and Hornik, Kurt and Mair, Patrick},
	journal = {Journal of Statistical Software},
	pages = {1-24},
	title = {Isotone optimization in R: Pool-adjacent-violators algorithm (PAVA) and active set methods},
	volume = {32},
	year = {2009}}

@article{Ayer1955,
	author = {Miriam Ayer and H. D. Brunk and G. M. Ewing and W. T. Reid and Edward Silverman},
	journal = {The Annals of Mathematical Statistics},
	pages = {641-647},
	title = {{An empirical distribution function for sampling with incomplete information}},
	volume = {26},
	year = {1955}}

@article{Kwonsang2023,
	author = {Kwonsang Lee and Bhattacharya, {Bhaswar B.} and Jing Qin and Small, {Dylan S.}},
	journal = {Journal of the Korean Statistical Society},
	pages = {1055-1077},
	title = {A nonparametric binomial likelihood approach for causal inference in instrumental variable models},
	volume = {52},
	year = {2023}}

@article{Cristiano2011,
	author = {Cristiano Varin and Nancy Reid and David Firth},
	journal = {Statistica Sinica},
	pages = {5-42},
	title = {An overview of composite likelihood methods},
	volume = {21},
	year = {2011}}

@article{YuQinLi2026,
  author  = {Yu, Tao and Qin, Jing and Li, Pengfei},
  title   = {Maximum binomial likelihood method for multivariate mixture data},
  journal = {Journal of the American Statistical Association},
  year    = {2026},
  note    = {Advance online publication},
  doi     = {10.1080/01621459.2026.2624135}
}

@article{Qin2014,
	author = {Qin, Jing and Garcia, Tanya and Ma, Yanyuan and Tang, Ming-Xin and Marder, Karen and Wang, Yuanjia},
	journal = {The Annals of Applied Statistics},
	pages = {1182-1208},
	title = {Combining isotonic regression and EM algorithm to predict genetic risk under monotonicity constraint},
	volume = {8},
	year = {2014}}

@article{YU2023,
	author = {Tao Yu and Pengfei Li and Baojiang Chen and Ao Yuan and Jing Qin},
	journal = {Journal of Econometrics},
	pages = {454-469},
	title = {Maximum pairwise-rank-likelihood-based inference for the semiparametric transformation model},
	volume = {235},
	year = {2023}}

@article{BenjaminiHochberg1995,
  author    = {Yoav Benjamini and Yosef Hochberg},
  title     = {Controlling the false discovery rate: a practical and powerful approach to multiple testing},
  journal   = {Journal of the Royal Statistical Society: Series B (Methodological)},
  volume    = {57},
  pages     = {289--300},
  year      = {1995},
  doi       = {10.1111/j.2517-6161.1995.tb02031.x}
}

@article{DuGuoSunZou2023SDA,
  author    = {Du, Lilun and Guo, Xu and Sun, Wenguang and Zou, Changliang},
  title     = {False Discovery Rate Control Under General Dependence By Symmetrized Data Aggregation},
  journal   = {Journal of the American Statistical Association},
  volume    = {118},
  pages     = {607--621},
  year      = {2023},
  doi       = {10.1080/01621459.2021.1945459}
}

@article{FanHan2017FDP,
  author  = {Fan, Jianqing and Han, Xu},
  title   = {Estimation of the False Discovery Proportion with Unknown Dependence},
  journal = {Journal of the Royal Statistical Society: Series B (Statistical Methodology)},
  volume  = {79},
  pages   = {1143--1164},
  year    = {2017},
  doi     = {10.1111/rssb.12204}
}

@article{Mokkadem1988,
  author       = {Abdelkader Mokkadem},
  title        = {Mixing properties of {ARMA} processes},
  journal      = {Stochastic Processes and their Applications},
  volume       = {29},
  number       = {2},
  pages        = {309--315},
  year         = {1988},
  month        = sep,
  doi          = {10.1016/0304-4149(88)90045-2},
}

@incollection{MerlevedePeligradRio2009,
  author       = {Merlev\`ede, Florence and Peligrad, Magda and Rio, Emmanuel},
  title        = {Bernstein inequality and moderate deviations under strong mixing conditions},
  booktitle    = {High Dimensional Probability V: The Luminy Volume},
  editor       = {Lugosi, G\'abor and Talagrand, Michel},
  publisher    = {Institute of Mathematical Statistics},
  year         = {2009},
  pages        = {273--292},
  doi          = {10.1214/09-IMSCOLL518},
}

@article{Naaman2021,
  author       = {Naaman, Michael},
  title        = {On the tight constant in the multivariate Dvoretzky–Kiefer–Wolfowitz inequality},
  journal      = {Statistics \& Probability Letters},
  volume       = {173},
  year         = {2021},
  pages        = {109088},
  doi          = {10.1016/j.spl.2021.109088},
  publisher    = {Elsevier}
}

@article{zhang2011multiple,
  title={Multiple testing via {${\rm FDR}_L$} for large-scale imaging data},
  author={Zhang, Chi and Fan, Jianqing and Yu, Tianwei},
  journal={The Annals of Statistics},
  volume={39},
  pages={613--642},
  year={2011},
  publisher={Institute of Mathematical Statistics},
  doi={10.1214/10-AOS848}
}

@article{fan2012estimating,
  title={Estimating false discovery proportion under arbitrary covariance dependence},
  author={Fan, Jianqing and Han, Xu and Gu, Weijie},
  journal={Journal of the American Statistical Association},
  volume={107},
  pages={1019--1035},
  year={2012},
  publisher={Taylor \& Francis}
}

@article{bonferroni1936teoria,
  title={Teoria statistica delle classi e calcolo delle probabilita},
  author={Bonferroni, Carlo},
  journal={Pubblicazioni del R Istituto Superiore di Scienze Economiche e Commericiali di Firenze},
  volume={8},
  pages={3--62},
  year={1936}
}

@article{vsidak1967rectangular,
  title={Rectangular confidence regions for the means of multivariate normal distributions},
  author={{\v{S}}id{\'a}k, Zbyn{\v{e}}k},
  journal={Journal of the American Statistical Association},
  volume={62},
  pages={626--633},
  year={1967},
  publisher={Taylor \& Francis}
}

@article{holm1979simple,
  title={A simple sequentially rejective multiple test procedure},
  author={Holm, Sture},
  journal={Scandinavian Journal of Statistics},
  volume={6},
  pages={65--70},
  year={1979},
  publisher={JSTOR}
}

@article{holland1987improved,
  title={An improved sequentially rejective Bonferroni test procedure},
  author={Holland, Burt S and Copenhaver, Margaret DiPonzio},
  journal={Biometrics},
  volume={43},
  pages={417--423},
  year={1987},
  publisher={JSTOR}
}

@article{simes1986improved,
  title={An improved Bonferroni procedure for multiple tests of significance},
  author={Simes, R John},
  journal={Biometrika},
  volume={73},
  pages={751--754},
  year={1986},
  publisher={Oxford University Press}
}

@article{hochberg1988sharper,
  title={A sharper Bonferroni procedure for multiple tests of significance},
  author={Hochberg, Yosef},
  journal={Biometrika},
  volume={75},
  pages={800--802},
  year={1988},
  publisher={Oxford University Press}
}

@article{rom1990sequentially,
  title={A sequentially rejective test procedure based on a modified Bonferroni inequality},
  author={Rom, Dror M},
  journal={Biometrika},
  volume={77},
  pages={663--665},
  year={1990},
  publisher={Oxford University Press}
}

@article{dudoit2004multiple,
  title={Multiple testing. Part I. Single-step procedures for control of general type I error rates},
  author={Dudoit, Sandrine and van der Laan, Mark J and Pollard, Katherine S},
  journal={Statistical Applications in Genetics and Molecular Biology},
  volume={3},
  pages={1--69},
  year={2004},
  publisher={De Gruyter}
}

@article{pollard2004choice,
  title={Choice of a null distribution in resampling-based multiple testing},
  author={Pollard, Katherine S and van der Laan, Mark J},
  journal={Journal of Statistical Planning and Inference},
  volume={125},
  pages={85--100},
  year={2004},
  publisher={Elsevier}
}

@article{lehmann2012generalizations,
  title={Generalizations of the familywise error rate},
  author={Lehmann, Erich L. and Romano, Joseph P.},
  journal={The Annals of Statistics},
  volume={33},
  pages={1138--1154},
  year={2005},
  publisher={Institute of Mathematical Statistics}
}

@article{benjamini1995controlling,
  title={Controlling the false discovery rate: a practical and powerful approach to multiple testing},
  author={Benjamini, Yoav and Hochberg, Yosef},
  journal={Journal of the Royal Statistical Society: Series B (Methodological)},
  volume={57},
  pages={289--300},
  year={1995},
  publisher={Wiley Online Library}
}

@article{benjamini2000adaptive,
  title={On the adaptive control of the false discovery rate in multiple testing with independent statistics},
  author={Benjamini, Yoav and Hochberg, Yosef},
  journal={Journal of Educational and Behavioral Statistics},
  volume={25},
  pages={60--83},
  year={2000},
  publisher={Sage Publications Sage CA: Los Angeles, CA}
}

@article{storey2002direct,
  title={A direct approach to false discovery rates},
  author={Storey, John D},
  journal={Journal of the Royal Statistical Society: Series B (Statistical Methodology)},
  volume={64},
  pages={479--498},
  year={2002},
  publisher={Wiley Online Library}
}

@article{efron2001empirical,
  title={Empirical Bayes analysis of a microarray experiment},
  author={Efron, Bradley and Tibshirani, Robert and Storey, John D and Tusher, Virginia},
  journal={Journal of the American Statistical Association},
  volume={96},
  pages={1151--1160},
  year={2001},
  publisher={Taylor \& Francis}
}

@article{benjamini2001control,
  title={The control of the false discovery rate in multiple testing under dependency},
  author={Benjamini, Yoav and Yekutieli, Daniel},
  journal={The Annals of Statistics},
  volume={29},
  pages={1165--1188},
  year={2001},
  publisher={Institute of Mathematical Statistics}
}

@article{efron2007correlation,
  title={Correlation and large-scale simultaneous significance testing},
  author={Efron, Bradley},
  journal={Journal of the American Statistical Association},
  volume={102},
  pages={93--103},
  year={2007},
  publisher={Taylor \& Francis}
}

@article{sun2009large,
  title={Large-scale multiple testing under dependence},
  author={Sun, Wenguang and Cai, T},
  journal={Journal of the Royal Statistical Society: Series B (Statistical Methodology)},
  volume={71},
  pages={393--424},
  year={2009},
  publisher={Wiley Online Library}
}

@article{fithian2022conditional,
  title={Conditional calibration for false discovery rate control under dependence},
  author={Fithian, William and Lei, Lihua},
  journal={The Annals of Statistics},
  volume={50},
  pages={3091--3118},
  year={2022},
  publisher={Institute of Mathematical Statistics}
}

@article{tansey2018false,
  title={False discovery rate smoothing},
  author={Tansey, Wesley and Koyejo, Oluwasanmi and Poldrack, Russell A and Scott, James G},
  journal={Journal of the American Statistical Association},
  volume={113},
  pages={1156--1171},
  year={2018},
  publisher={Taylor \& Francis}
}

@article{xie2011optimal,
  title={Optimal false discovery rate control for dependent data},
  author={Xie, Jichun and Cai, T Tony and Maris, John and Li, Hongzhe},
  journal={Statistics and Its Interface},
  volume={4},
  pages={417--430},
  year={2011}
}

@article{heller2021optimal,
  title={Optimal control of false discovery criteria in the two-group model},
  author={Heller, Ruth and Rosset, Saharon},
  journal={Journal of the Royal Statistical Society:
  Series B (Statistical Methodology)},
  volume={83},
  pages={133--155},
  year={2021},
  publisher={Oxford University Press}
}

@article{friguet2009factor,
  title={A factor model approach to multiple testing under dependence},
  author={Friguet, Chlo{\'e} and Kloareg, Maela and Causeur, David},
  journal={Journal of the American Statistical Association},
  volume={104},
  pages={1406--1415},
  year={2009},
  publisher={Taylor \& Francis}
}

@article{wei2008incorporating,
  title={Incorporating gene networks into statistical tests for genomic data via a spatially correlated mixture model},
  author={Wei, Peng and Pan, Wei},
  journal={Bioinformatics},
  volume={24},
  pages={404--411},
  year={2008},
  publisher={Oxford University Press}
}

@article{schwartzman2011effect,
  title={The effect of correlation in false discovery rate estimation},
  author={Schwartzman, Armin and Lin, Xihong},
  journal={Biometrika},
  volume={98},
  pages={199--214},
  year={2011},
  publisher={Oxford University Press}
}

@article{heesen2015inequalities,
  title={Inequalities for the false discovery rate ({FDR}) under dependence},
  author={Heesen, Philipp and Janssen, Arnold},
  journal={Electronic Journal of Statistics},
  volume={9},
  pages={679--716},
  year={2015}
}

@article{mei2024asymptotic,
  title={Asymptotic uncertainty of false discovery proportion},
  author={Mei, Meng and Yu, Tao and Jiang, Yuan},
  journal={Biometrics},
  volume={80},
  pages={ujae015},
  year={2024},
  publisher={Oxford University Press}
}

@article{ghosal2011predicting,
  title={Predicting false discovery proportion under dependence},
  author={Ghosal, Subhashis and Roy, Anindya},
  journal={Journal of the American Statistical Association},
  volume={106},
  pages={1208--1218},
  year={2011},
  publisher={Taylor \& Francis}
}

@article{sun2015false,
  title={False discovery control in large-scale spatial multiple testing},
  author={Sun, Wenguang and Reich, Brian J and Tony Cai, T and Guindani, Michele and Schwartzman, Armin},
  journal={Journal of the Royal Statistical Society:
  Series B (Statistical Methodology)},
  volume={77},
  pages={59--83},
  year={2015},
  publisher={Oxford University Press}
}

@article{efron2004large,
  title={Large-scale simultaneous hypothesis testing: the choice of a null hypothesis},
  author={Efron, Bradley},
  journal={Journal of the American Statistical Association},
  volume={99},
  pages={96--104},
  year={2004},
  publisher={Taylor \& Francis}
}

@Manual{RpackagePFA,
  title = {pfa: Estimates False Discovery Proportion Under Arbitrary Covariance
Dependence},
  author = {Jianqing Fan and Tracy Ke and Sydney Li and Lucy Xia},
  year = {2016},
  note = {R package version 1.1},
  url = {https://CRAN.R-project.org/package=pfa},
}

@article{efron2007size,
  title={Size, Power and False Discovery Rates},
  author={Efron, Bradley},
  journal={The Annals of Statistics},
  pages={1351--1377},
  volume  = {35},
  year={2007},
  publisher={JSTOR}
}

@article{schwartzman2008empirical,
  title={Empirical Null and False Discovery Rate Inference for Exponential Families},
  author={Schwartzman, Armin},
  journal={The Annals of Applied Statistics},
  volume={2},
  pages={1332--1359},
  year={2008},
  publisher={JSTOR}
}

@article{park2011estimation,
  title={Estimation of empirical null using a mixture of normals and its use in local false discovery rate},
  author={Park, DoHwan and Park, Junyong and Zhong, Xiaosong and Sadelain, Michel},
  journal={Computational Statistics \& Data Analysis},
  volume={55},
  pages={2421--2432},
  year={2011},
  publisher={Elsevier}
}

@article{gauran2018empirical,
  title={Empirical null estimation using zero-inflated discrete mixture distributions and its application to protein domain data},
  author={Gauran, Iris Ivy M and Park, Junyong and Lim, Johan and Park, DoHwan and Zylstra, John and Peterson, Thomas and Kann, Maricel and Spouge, John L},
  journal={Biometrics},
  volume={74},
  pages={458--471},
  year={2018},
  publisher={Oxford University Press}
}

\end{document}


\def\spacingset#1{\renewcommand{\baselinestretch}{#1}\small\normalsize}
\maketitle

\begin{abstract}
    This supplementary material accompanies the paper ``Revisiting dependence in multiple testing: empirical distribution approaches for FDP control.'' It is organized into three sections: Section~\ref{Section-technical details} provides the technical details for the theoretical results of Sections 3 and 4 of the main article; Section~\ref{Section-Details Main 3.2} further elaborates on the developments in Section~3.2 of the main article and presents theoretical results on the convergence of the algorithm; and Section~\ref{Simulation-one-sided} includes additional simulation studies focusing on one-sided hypothesis tests.
\end{abstract}

\section{Technical Details of the Examples and Theorems in the Main Article} \label{Section-technical details}

\subsection{Review of the theoretical results and technical conditions in the main article} 

In Sections 3.3 and 3.4 of the main article, we have imposed the following regularity conditions. 

\begin{Condition}\label{Condition-1}
For $m=0, 1$, $\FF_m(\cdot)$ and $\FF_{m,\theta_0}(\cdot)$ satisfies, as $p\to \infty$,
\begin{eqnarray*}
\sup_{z\in \mathbb{R}^K}|\FF_m(z) - \FF_{m,\theta_0}(z)| = o\left\{ (\log p)^{-1} \right\}, \quad a.s.
\end{eqnarray*}
\end{Condition}

\begin{Condition}\label{Condition-2}
There exist multivariate c.d.f.s $F_0^*(z)$ and $F_1^*(z)$ (possibly random; see Example \ref{Example-2}) defined on $z\in \mathbb{R}^K$, such that:
\begin{eqnarray*}
    \sup_{z} |\FF_0(z) - F_0^*(z)| = o(1) \quad \text{and} \quad \sup_{z} |\FF_1(z) - F_1^*(z)| = o(1), \quad a.s.,
\end{eqnarray*}
as $p\to \infty$.
\end{Condition}

\begin{Condition}\label{Condition-3}
    $\lambda_{0,0} = p_0/p$ satisfies
    \begin{eqnarray*}
        \lambda_{0,0} = \lambda_0^* + o(1), \quad \text{as }p\to \infty,
    \end{eqnarray*}
    with $0<\lambda_0^*<1$ being a constant. Denote $\lambda_1^* = 1- \lambda_0^*$.
\end{Condition}

\begin{Condition}\label{Condition-4}
Consider $F_m^*(z)$ given in Condition \ref{Condition-2}, and $\lambda_m^*$ given in Condition \ref{Condition-3}, $m=0,1$. Let 
\begin{eqnarray*}
F^*(z) = \lambda_0^* F_0^*(z) + \lambda_1^* F_1^*(z), 
\end{eqnarray*}
and for $m=0,1$, denote by $F_{m,k}^*(\cdot)$ the $k$th marginal c.d.f. of $F_m^*(\cdot)$. For any $\theta_1, \theta_2 \in \Theta$, where we denote $\theta_r = \{\lambda_{0,r}, F_{0,1,r}, \ldots, F_{1,K,r}\}$, $r = 1,2$. For each $\omega$ in the sample space, if 
\begin{eqnarray*}
    \int \left\{ \FF_{\theta_1}^\omega (z) - \FF_{\theta_2}^\omega(z) \right\}^2 d F^{*\omega} (z) =0, 
\end{eqnarray*}
then $\lambda_{0,1} = \lambda_{0,2}$ and $F_{m_1, k, 1}^{\omega}(\cdot) = F_{m_1,k,2}^{\omega}(\cdot)$ almost surely in $F_{m_2,k}^{*\omega}(\cdot)$ for every $m_1, m_2 \in \{0,1\}, k=1,\ldots, K$.
\end{Condition}

\begin{Condition} \label{Condition-5}
    The empirical c.d.f.s of the empirical $p$-values $\FF_{0,j}(Y_{i,j}), i = 1, \ldots, p_1$ for Group 1, i.e., 
    \begin{eqnarray*}
        G_j(t) = \frac{1}{p_1} \sum_{i=1}^{p_1} I \left\{\FF_{0,j}(Y_{i,j}) \leq t \right\}, \quad j = 1,\ldots, K,
    \end{eqnarray*}
    satisfies, as $p\to \infty$,
    \begin{eqnarray*}
        \sup_{t\in [0,1]} |G_j(t) - G_j^*(t)| = o(1),  \quad a.s.,
    \end{eqnarray*}
    for $G_j^*(t), j=1,\ldots, K$ being  Lipschitz continuous functions. Furthermore, for the specified $\alpha > 0$, there exists a $t_\alpha^\infty > 0$, such that 
    \begin{eqnarray*}
        \frac{\lambda_0^*}{1-\lambda_0^*} \cdot \frac{1-\alpha}{\alpha} M(t_\alpha^\infty) <  G^*(t_\alpha^\infty),
    \end{eqnarray*}
    where 
    \begin{eqnarray*}
        G^*(t) = \sum\limits_{k=\left\lceil \frac{K}{2} \right\rceil }^K\sum\limits_{|S|=k} \prod_{j\in S} G_j^*(t) \prod_{j \notin S}\left\{1-G_j^*(t) \right\},
    \end{eqnarray*}
    and $S$ denotes a subset of $\{1, \ldots, K\}$, $|S|$ is the number of elements in set $S$. 
\end{Condition}

In section 3.3 of the main article, we have given the following examples, such that Condition \ref{Condition-1} is satisfied. 

\begin{Example} \label{Example-1}
    Assume that there exist multivariate c.d.f.s $F_0^*(z)$ and $F_1^*(z)$ (possibly random) defined on $z\in \mathbb{R}^K$, such that:
\begin{equation}
    \sup_{z} |\FF_0(z) - F_0^*(z)| = o\left\{ (\log p)^{-1} \right\} \quad \text{and} \quad \sup_{z} |\FF_1(z) - F_1^*(z)| = o\left\{ (\log p)^{-1} \right\}, \quad a.s., \label{eq-example-1-1}
\end{equation}
as $p\to \infty$. Furthermore, $F_0^*(z)$ and $F_1^*(z)$ have the form: 
\begin{eqnarray*}
    F_0^*(z) = \prod_{k=1}^K F_{0,k}^*(z_k) \quad \text{and} \quad  F_1^*(z) = \prod_{k=1}^K F_{1,k}^*(z_k),
\end{eqnarray*}
for $F_{m,k}^*(\cdot), m=0,1; k=1,\ldots, K$ being c.d.f.s defined on $\mathbb{R}$. Then, Condition \ref{Condition-1} is satisfied. 
\end{Example}

\begin{Example} \label{Example-2}
    Let $U_1,\ldots, U_K$ be i.i.d. random variables, such that $E(U_k) = \mu_k, \text{var}(U_k) = 1$; let $\epsilon_{i,k}, i=1,\ldots,\widetilde p; k=1,\ldots, K$ be i.i.d. random variables such that $E(\epsilon_{i,k}) = 0, \text{var}(\epsilon_{i,k}) = 1$. Suppose that $\{U_k, k=1,\ldots, K\}$ and $\{\epsilon_{i,k}, i=1,\ldots, \widetilde p; k=1,\ldots, K\}$ are independent. Let $V_i = (V_{i,1},\ldots, V_{i,K})^T, i=1,\ldots, \widetilde p$, where 
    \begin{eqnarray}
    V_{i,k} = \sqrt{\rho} U_k + \sqrt{1-\rho} \epsilon_{i,k}, \label{eq-Example-2-1}
    \end{eqnarray}
    with $\rho\in [0,1)$ being a given constant. Then for each dimension $k = 1,\ldots, K$, $V_{1,k},\ldots, V_{\widetilde p, k}$ are dependent random variables, with the variance-covariance matrix having the {\it compound symmetry structure}. Denote by $\FF_V(\cdot)$ and $\FF_{V,k}(\cdot)$ the empirical c.d.f.s based on $\{V_i, i=1,\ldots, \widetilde p\}$ and $\{V_{i,k},\ldots, V_{\widetilde{p},k}\}$ respectively. We have, as $\widetilde p \to \infty$,
    \begin{eqnarray}
        \sup_{z\in \mathbb{R}^K} \left| \FF_V(z) - \prod_{k=1}^K \FF_{V,k}(z_k)\right| = O\left\{\sqrt{\frac{\log(\widetilde p)}{\widetilde p}} \right\}, \quad a.s. \label{eq-Example-2-1-1}
    \end{eqnarray}
The above conclusion implies that if the test statistics in $\mathX$ and in $\mathY$ respectively satisfy the compound symmetry structure \eqref{eq-Example-2-1}, then Condition \ref{Condition-1} is satisfied. 
\end{Example}

\begin{Example}  \label{Example-3}
The $\alpha$--mixing coefficient of two $\sigma$--algebras $\mathA$ and $\mathB$ is defined as
\begin{eqnarray}
    \alpha(\mathA, \mathB) = \sup_{A\in \mathA, B\in \mathB}\left| P(A\cap B) - P(A)P(B)\right|. \nonumber 
\end{eqnarray}
Let $\mathV = \{V_t\}_{t=1,\ldots,\widetilde p}$ be a sequence of $K$-dimensional stationary time series with c.d.f. $F_V(\cdot)$ for $V_t$. Define the $\sigma$-algebras:
\[
\mathA_i = \sigma(V_t, t\leq i), \qquad 
\mathB_j = \sigma(V_t, t \geq j),
\]
and denote
\begin{eqnarray}
\alpha_V(n) = \sup_{k\geq 1}\alpha\left(\mathA_k, \mathB_{k+n} \right). \label{eq-Example-3-def-alpha-V-n} 
\end{eqnarray}
Let $\FF_V(v)$ be the empirical c.d.f. of $\{V_t\}_{t=1}^{\widetilde p}$.  
If $\mathV$ satisfies
\begin{eqnarray}
    \alpha_{V}(n) &\leq& \exp(-cn), \label{eq-Condition-3-1} \\
    \xi^2 &<&  \infty, \label{eq-Condition-3-2} \\
    V_{t,1},\ldots,V_{t,K} &&\text{are independent for each } t, \label{eq-Condition-3-3} 
\end{eqnarray}
where $c>0$ is a universal constant, 
\begin{eqnarray}
\xi^2 = \sup_{v\in \mathbb{R}^K} \sup_{t\geq 1}\left[\text{var}\{W_t(v)\} + 2 \sum_{j > t}\left|\text{cov}\{W_t(v), W_j(v)\}\right| \right], \label{eq-Example-3-def-xi}
\end{eqnarray}
and $W_t(v)=I(V_t\le v)-F_V(v)$, then
\begin{equation}
\sup_{v\in \mathbb{R}^K}
\left|\FF_{V}(v) - \prod_{k=1}^K \FF_{V,k}(v_k)\right|
=
O\!\left(\sqrt{\frac{\log \widetilde p}{\widetilde p}}\right),\quad a.s. \nonumber 
\end{equation}
Consequently, if both $\mathX$ and $\mathY$ satisfy \eqref{eq-Condition-3-1}--\eqref{eq-Condition-3-3}, then Condition \ref{Condition-1} holds.
\end{Example}





\begin{Example} \label{Example-4}
    Let $V_t, t = 1,\ldots, p$, be a $K$-dimensional stationary $\text{ARMA}(c_1, c_2)$ time series, with error terms being i.i.d. continuous random vectors. For each $t$, $V_{t,1}, \ldots, V_{t,K}$ are independent. Recall $\mathH_0, \mathH_1$ defined in the main article. Let $Z_t = V_t$, if $t \in \mathH_0$, and let $Z_t = V_t + \mu$ if $t \in \mathH_1$, where $\mu \neq 0$ is a constant. Then Condition \ref{Condition-1} is satisfied.  
\end{Example}

Section 3.3 of the main article presents the following theorem (Theorem 2), which, under Conditions \ref{Condition-1}--\ref{Condition-4}, establishes the asymptotic properties of our estimators for the empirical c.d.f.s.

\setcounter{theorem}{1}

\begin{theorem} \label{theorem-2}
Assume Conditions \ref{Condition-1}--\ref{Condition-4}. We have, as $p\to \infty$,
\begin{itemize}
    \item[(a).] $\widehat \lambda_0 = p_0/p + o(1)$, a.s. 

    \item[(b).] For $m_1, m_2 \in \{0,1\}$, and $k=1,\ldots, K$, we have 
    \begin{eqnarray*}
        \int \left\{\widehat \FF_{m_1,k}(t) - \FF_{m_1, k}(t) \right\}^2 d \FF_{m_2,k}(t) = o(1), \quad a.s.
    \end{eqnarray*}
    
\end{itemize}
    
\end{theorem}

In Section 4 of the main article, we have defined: 
\begin{eqnarray}
\widehat{\FDP}(t)=\frac{\widehat \lambda_0\cdot M(t)}{\frac{1}{p}\cdot \max\left\{\sum\limits_{i=1}^pI(\widehat M_i\leq t),1\right\}}, \label{eq-def-hat-FDP-t-median} 
\end{eqnarray}
where $M(t)$ denotes the c.d.f. of the median of $K$ i.i.d. $\text{Uniform}(0,1)$ random variables, 
\begin{eqnarray}
\text{FDP}(t)=\frac{\sum\limits_{i \in \mathcal{H}_0}I(\hat{M}_i\leq t)}{\max\left\{\sum\limits_{i=1}^p I(\widehat{M}_i\leq t),1\right\}}, \label{eq-def-FDP-t-median}
\end{eqnarray}
and
\begin{eqnarray}
t_\alpha=\sup\limits_{t}\left\{t:\widehat{\FDP}(t)\leq \alpha\right\}. \label{eq-t-alpha-median}
\end{eqnarray}

Under Conditions \ref{Condition-1}--\ref{Condition-5}, the following theorem (Theorem 3 in Section 4 of the main article) guarantees FDP control for our method.

\begin{theorem}\label{median fdr theorem}
     Assume Conditions \ref{Condition-1}--\ref{Condition-5}. For any $\alpha\in(0,1)$,  $t_\alpha$ is defined by \eqref{eq-t-alpha-median}. When $p\to \infty$, we have 
     \begin{eqnarray*}
     {\rm{FDP}}(t_\alpha) \leq \alpha +o(1),\quad a.s. 
     \end{eqnarray*}
\end{theorem}

\subsection{Notations} \label{section-notations}

Recall that we have defined
\begin{eqnarray*}
     \FF_{\theta}(z) = \lambda_0  \prod_{k=1}^K F_{0,k}(z_k) + \lambda_1 \prod_{k=1}^K F_{1,k}(z_k),
\end{eqnarray*}
where $\lambda_0 + \lambda_1 = 1$,
\begin{eqnarray*}
    \theta = \left\{\lambda_0, F_{0,1}, \ldots, F_{0,K}, F_{1,1},\ldots, F_{1,K} \right\},
\end{eqnarray*}
which is the set of parameters to be estimated. We denote
\begin{eqnarray*}
    \theta_0 = \left\{\lambda_{0,0}, \FF_{0,1}, \ldots, \FF_{0,K}, \FF_{1,1}, \ldots,  \FF_{1,K} \right\},
\end{eqnarray*}
where $\lambda_{0,0} = p_0/p$, for each $k=1,\ldots, K$, $\FF_{0,k}(\cdot)$ is the empirical c.d.f. of $\{X_{i,k}, i=1,\ldots, p_0\}$, $\FF_{1,k}(\cdot)$ is the empirical c.d.f. of $\{Y_{i,k}, i=1,\ldots, p_1\}$. We have
\begin{eqnarray*}
\FF_{0,\theta_0}(z) =   \prod_{k=1}^K \FF_{0,k}(z_k), \quad  \FF_{1,\theta_0}(z) =   \prod_{k=1}^K \FF_{1,k}(z_k), 
\end{eqnarray*}
and denote $\FF_0(\cdot)$ and $\FF_1(\cdot)$ the empirical c.d.f.s based on $\mathX$ and $\mathY$, respectively. 
For a $Z\sim \FF_{\theta}(z)$, define
\begin{eqnarray*}
    f_\theta(s, t) = P\left(I(Z \leq t) = I(s \leq t)\right),
\end{eqnarray*}
for $s = (s_1,\ldots, s_K)^T, t = (t_1, \ldots, t_K)^T$; then we can verify 
\begin{eqnarray}
    f_\theta(s, t) = \sum_{m=0}^1 \lambda_m \prod_{k=1}^K \left\{ F_{m,k}^{I(s_k \leq t_k)}(t_k) \, \bar{F}_{m,k}^{I(s_k > t_k)}(t_k) \right\}, \label{def-f-theta}
\end{eqnarray}
For any given \( t \in \mathbb{R}^K \), let \( A_h(t) = \{s : I(s \leq t) = \Delta_h\} \), where \( \Delta_h \) is a length \( K \) vector with elements being 0 or 1. Clearly, there are \( Q = 2^K \) different possible values for \( \Delta_h \), and thus we shall write \( h = 1, \ldots, Q \). Furthermore, \( A_h(t) \) partition the \( \mathbb{R}^K \) space into \( Q \) disjoint regions. Based on the definition of \( f_\theta(s, t) \) given in \eqref{def-f-theta}, when \( s \in A_h(t) \), we have

\begin{itemize}
    \item when \( Z \sim \FF_{\theta}(z) \),
    \begin{eqnarray}
        f_\theta(s,t) = P\left(I(Z \leq t) = I(s \leq t)\right) = P(Z \in A_h(t)) = \int_{s\in A_h(t)} d \FF_\theta(s) \equiv \FF_{\theta,h}(t); \label{def-F-theta-h-t}
    \end{eqnarray}

    \item when \( Z \sim \FF_{\theta_0}(z) \), 
    \begin{eqnarray}
        f_{\theta_0}(s,t)  = \int_{s\in A_h(t)} d \FF_{\theta_0}(s) \equiv \FF_{\theta_0,h}(t); \label{def-F-theta-h-t-0}
    \end{eqnarray}

    \item when \( Z \sim \FF(z) \), we denote
    \begin{eqnarray}
        \FF_{h}(t) = \int_{s \in A_h(t)} d\FF(s). \label{def-F-h-t}
    \end{eqnarray}
\end{itemize}
For $m=0,1$, we can define $\FF_{m,\theta,h}, \FF_{m,\theta_0,h}$, and $\FF_{m,h}$ in the same manner as \eqref{def-F-theta-h-t}--\eqref{def-F-h-t}; clearly, we have 
\begin{eqnarray}
    \FF_{\theta, h} &=&  \lambda_0 \FF_{0,\theta,h} + (1-\lambda_0) \FF_{1,\theta,h} \nonumber \\
    \FF_{\theta_0,h} &=& \lambda_{0,0} \FF_{0,\theta_0,h} + (1-\lambda_{0,0}) \FF_{1,\theta_0,h} \nonumber \\
    \FF_h &=& \lambda_{0,0} \FF_{0,h} + (1-\lambda_{0,0}) \FF_{1,h}. \label{F-F-h-relation}
\end{eqnarray}

\subsection{Proof of Examples \ref{Example-1}-\ref{Example-4}}

\subsubsection{Proof of Example \ref{Example-1}} For each $m= 0,1; k=1,\ldots, K$, we have 
\begin{eqnarray*}
    \sup_{z_k \in \mathbb{R}}\left| \FF_{m,k}(z_k) - F_{m,k}^*(z_k)\right| &=& \sup_{z_k \in \mathbb{R}}\left| \FF_m(\infty, \ldots, z_k, \ldots, \infty) - F_m^*(\infty, \ldots, z_k, \ldots, \infty)\right| \\
    &\leq & \sup_{z \in \mathbb{R}^K} |\FF_m(z) - F_m^*(z)| = o\left\{ (\log p)^{-1} \right\}, \quad a.s., 
\end{eqnarray*}
based on \eqref{eq-example-1-1}, which leads to  
\begin{eqnarray}
    \sup_{z\in \mathbb{R}^K}\left|\FF_{m, \theta_0}(z) - F_m^*(z)\right| = \sup_{z\in \mathbb{R}^K}\left|\prod_{k=1}^K \FF_{m,k}(z_k) - \prod_{k=1}^K F_{m,k}^*(z_k) \right| = o\left\{ (\log p)^{-1} \right\}, \quad a.s. \label{eq-example-1-2}
\end{eqnarray}
Combining \eqref{eq-example-1-1} and \eqref{eq-example-1-2}, we conclude 
\begin{eqnarray*}
     \sup_{z\in \mathbb{R}^K}\left|\FF_m(z) - \FF_{m, \theta_0}(z)\right| = o\left\{ (\log p)^{-1} \right\}, \quad a.s.
\end{eqnarray*}
which verifies Condition \ref{Condition-1}. \epf


\subsubsection{Proof of Example \ref{Example-2}} We need the following lemma, which gives the multivariate Dvoretzky–Kiefer–Wolfowitz (DKW) inequality adopted from Lemma 4.1 in \citet{Naaman2021}. 
\begin{lemma} \label{lemma-DKW}
    Let $\epsilon_i, i=1,\ldots, \widetilde p$ be $K$ dimensional i.i.d. random vectors. We have 
    \begin{eqnarray*}
        P\left( \sup_{z\in \mathbb{R}^K} \left|\FF_{\epsilon}(z) - F_\epsilon(z) \right| > t \right) \leq K(\widetilde p + 1) e^{-2\widetilde p t^2},
    \end{eqnarray*}
    for every $t\geq 0$ and $\widetilde p$, where $\FF_\epsilon(\cdot)$ and $F_\epsilon(\cdot)$ denote the empirical c.d.f. of $\{\epsilon_i, i=1,\ldots, \widetilde p\}$ and the c.d.f. of $\epsilon_i$ respectively.  
\end{lemma}
We proceed to show Example \ref{Example-2}. Note that 
\begin{eqnarray*}
    \FF_V(z) = \frac{1}{\widetilde p}  \sum_{i=1}^{\widetilde p} \prod_{k=1}^K I(V_{i,k} \leq z_k) = \frac{1}{\widetilde p}  \sum_{i=1}^{\widetilde p} \prod_{k=1}^K I\left(\epsilon_{i,k} \leq \frac{z_k - \sqrt{\rho}U_k}{\sqrt{1-\rho}}\right) = \FF_{\epsilon}(W), \label{eq-Example-2-2}
\end{eqnarray*}
where $W = \left(\frac{z_1 - \sqrt{\rho}U_1}{\sqrt{1-\rho}}, \ldots,   \frac{z_K - \sqrt{\rho}U_K}{\sqrt{1-\rho}}\right)^T \equiv (W_1, \ldots, W_K)^T$, and thus
\begin{eqnarray}
    E\left\{ \FF_V(z)| U_1,\ldots, U_K \right\} = F_\epsilon(W) = \prod_{k=1}^K F_{\epsilon_k}(W_k),\label{eq-Example-2-3}
\end{eqnarray}
where $F_{\epsilon_k}(\cdot)$ denotes the c.d.f. of $\epsilon_{i,k}$. 
Based on Lemma \ref{lemma-DKW}, for any $t\geq 0$, we have
\begin{eqnarray}
    &&P\left( \sup_{z\in \mathbb{R}^K} \left|\FF_V(z) - E\left\{ \FF_V(z)| U_1,\ldots, U_K \right\} \right| > t \right) \nonumber \\
    &=& P\left(\sup_{z\in \mathbb{R}^K} \left| \FF_\epsilon(W) - F_\epsilon(W) \right| > t \right) \nonumber  \\
    &\leq &P\left(\sup_{z\in \mathbb{R}^K} \left| \FF_\epsilon(z) - F_\epsilon(z) \right| > t \right) \nonumber \\
    &\leq & K(\widetilde p + 1) e^{-2 \widetilde p t^2}. \label{eq-Example-2-4}
\end{eqnarray}
In \eqref{eq-Example-2-4}, set $t = C\sqrt{\log \widetilde p / \widetilde p}$, for a constant $C>1$, we have 
\begin{eqnarray*}
   && P\left( \sup_{z\in \mathbb{R}^K} \left|\FF_V(z) - E\left\{ \FF_V(z)| U_1,\ldots, U_K \right\} \right| > C \sqrt{\frac{\log \widetilde p }{\widetilde p }} \right)  \\ &\leq&  K(\widetilde p + 1) \widetilde p^{-2C^2} \leq 2 K \widetilde p^{1-2C^2}, \label{eq-Example-2-5}
\end{eqnarray*}
which implies 
\begin{eqnarray*}
   \sum_{\widetilde p =1}^\infty  P\left( \sup_{z\in \mathbb{R}^K} \left|\FF_V(z) - E\left\{ \FF_V(z)| U_1,\ldots, U_K \right\} \right| > C \sqrt{\frac{\log \widetilde p }{\widetilde p }} \right) < \infty. \label{eq-Example-2-6}
\end{eqnarray*}
Based on the Borel-Cantelli lemma, we have 
\begin{eqnarray*}
    P\left( \sup_{z\in \mathbb{R}^K} \left|\FF_V(z) - E\left\{ \FF_V(z)| U_1,\ldots, U_K \right\} \right| > C \sqrt{\frac{\log \widetilde p }{\widetilde p }}, i.o. \right) = 0, \label{eq-Example-2-7}
\end{eqnarray*}
or equivalently 
\begin{eqnarray*}
    \sup_{z\in \mathbb{R}^K} \left|\FF_V(z) - E\left\{ \FF_V(z)| U_1,\ldots, U_K \right\} \right| = O\left( \sqrt{\frac{\log \widetilde p }{\widetilde p }}\right), \quad a.s. \label{eq-Example-2-8}
\end{eqnarray*}
which together with \eqref{eq-Example-2-3} leads to 
\begin{eqnarray}
    \sup_{z\in \mathbb{R}^K} \left|\FF_V(z) - \prod_{k=1}^K F_{\epsilon_k}(W_k) \right| = O\left( \sqrt{\frac{\log \widetilde p }{\widetilde p }}\right). \quad a.s. \label{eq-Example-2-9}
\end{eqnarray}
With similar developments, for each $k=1,\ldots, K$, we have 
\begin{eqnarray}
    \sup_{z_k\in \mathbb{R}} \left|\FF_{V,k}(z_k) -  F_{\epsilon_k}(W_k) \right| = O\left( \sqrt{\frac{\log \widetilde p }{\widetilde p }}\right). \quad a.s. \label{eq-Example-2-10}
\end{eqnarray}
Combining \eqref{eq-Example-2-9} and \eqref{eq-Example-2-10} leads to \eqref{eq-Example-2-1-1}. We complete the proof of this example. \epf


\subsubsection{Proof of Example \ref{Example-3}}

Let
\[
W_t(v) = I(V_t \leq v) - F_V(v),
\]
and define
\[
S_{\widetilde p}(v) = \sum_{t = 1}^{\widetilde p} W_t(v) 
      = \widetilde p \{\FF_V(v) - F_V(v)\}.
\]
Since $W_t(v)$ is a measurable function of $V_t$, thus for any $v \in \mathbb{R}^K$, and $i,j$ being positive integers, 
\[
\sigma(W_t(v), t\leq i) \subset \sigma(V_t, t\leq i), \quad \text{and} \quad  \sigma(W_t(v), t\geq j) \subset \sigma(V_t, t\geq j). 
\]
This implies, 
\begin{eqnarray}
    \alpha_W(n) \leq \alpha_V(n) \leq \exp(-cn), \label{eq-Condition-3-revised-1}
\end{eqnarray}
for $c>0$ being a universal constant. 

Based on the conditions $\xi^2 < \infty$ and \eqref{eq-Condition-3-revised-1}, and applying Theorem 2 in \citet{MerlevedePeligradRio2009}, we have, for $n\geq 2$ and any $\epsilon>0$, there exists a constant $C>0$ depending on $c$ only, such that 
\[
P(|S_{\widetilde p}(v)|\ge \epsilon)
\le
\exp\!\left\{
-\frac{C\epsilon^2}{\xi^2 \widetilde p +1+\epsilon(\log \widetilde p)^2}
\right\},
\]
which further leads to 
\begin{eqnarray}
P\!\left(
\max_{v\in\widetilde\mathV}|S_{\widetilde p}(v)|\ge \epsilon
\right)
&\le&
(\widetilde p+1)^K
\exp\!\left\{
-\frac{C\epsilon^2}{\xi^2 \widetilde p +1+\epsilon(\log \widetilde p)^2}
\right\}  \nonumber \\  
&\leq & 2^K \widetilde p^K
\exp\!\left\{
-\frac{C\epsilon^2}{\xi^2 \widetilde p +1+\epsilon(\log \widetilde p)^2}
\right\}  \label{eq-Condition-3-revised-2}
\end{eqnarray}
where 
\[
\widetilde \mathV =
\Big\{
v=(v_1,\ldots,v_K)^T:
v_k\in\{V_{1,k},\ldots,V_{p,k}, +\infty \}, k =1,\ldots, K
\Big\}.
\]
Since $\epsilon >0$ is arbitrary, we can set $\epsilon$ to be $\epsilon_{\widetilde p}=A\sqrt{\widetilde p\log \widetilde p}$ for $A$ being a universal constant satisfying $K - CA^2/(\xi^2+2) < -1$.  Clearly, when $\widetilde p$ is sufficiently large, $1+\epsilon_{\widetilde p}(\log \widetilde p)^2 < 2\widetilde p$, and thus
\[
\exp\!\left\{
-\frac{C\epsilon_{\widetilde p}^2}{\xi^2 \widetilde p+1+\epsilon_{\widetilde p}(\log \widetilde p)^2}
\right\}
\le
\exp\!\left\{
-\frac{C\epsilon_{\widetilde p}^2}{(\xi^2+2)\widetilde p}
\right\},
\]
which together with \eqref{eq-Condition-3-revised-2} leads to 
\begin{eqnarray*}
&&\sum_{\widetilde p=1}^\infty
P\!\left(
\max_{v\in\widetilde\mathV}|S_{\widetilde p}(v)|\ge \epsilon_{\widetilde p}
\right) \lesssim \sum_{\widetilde p=1}^\infty  \widetilde p^K
\exp\!\left\{
-\frac{C\epsilon_{\widetilde p}^2}{(\xi^2+2)\widetilde p}
\right\} \\
&=&
\sum_{\widetilde p=1}^\infty
\widetilde p^{K-CA^2/(\xi^2+2)}<\infty.
\end{eqnarray*}
By the Borel–Cantelli lemma,
\[
\widetilde p^{-1/2}(\log \widetilde p)^{-1/2}
\max_{v\in\widetilde\mathV}|S_{\widetilde p}(v)|
=O(1)\quad a.s.,
\]
which yields
\begin{eqnarray}
\max_{v\in\widetilde\mathV}
|\FF_V(v)-F_V(v)|
=
O\!\left(\sqrt{\frac{\log \widetilde p}{\widetilde p}}\right),\quad a.s. \label{eq-Condition-3-revised-3}
\end{eqnarray}
If \eqref{eq-Condition-3-3} holds, then 
\[
F_V(v)=\prod_{k=1}^K F_{V,k}(v_k),
\]
which together with \eqref{eq-Condition-3-revised-3} leads to 
\begin{eqnarray}
\max_{v_k}
|\FF_{V,k}(v_k)-F_{V,k}(v_k)|
=
O\!\left(\sqrt{\frac{\log \widetilde p}{\widetilde p}}\right),\quad a.s. \label{eq-Condition-3-revised-4}
\end{eqnarray}
where $\FF_{V,k}(v)$ is the empirical c.d.f. of $\{V_{t,k}\}_{t=1}^{\widetilde p}$, and $F_{V,k}(v_k)$ is the c.d.f. of $V_{t,k}$. 

Combining \eqref{eq-Condition-3-revised-3} and \eqref{eq-Condition-3-revised-4}, we have 
\begin{equation*}
    \sup_{v\in \mathbb{R}^K} \left|\FF_{V}(v) - \prod_{k=1}^K \FF_{V,k}(v_k)\right| = \max_{v\in \widetilde \mathV} \left|\FF_{V}(v) - \prod_{k=1}^K \FF_{V,k}(v_k)\right|= O\left(\sqrt{\frac{\log \widetilde p}{ \widetilde p }} \right), \quad a.s. 
\end{equation*}
completing the proof. 
\epf

\subsubsection{Proof of Example \ref{Example-4}} It suffices to show that $\{V_t\}_{t=1}^p$ satisfies \eqref{eq-Condition-3-1} and \eqref{eq-Condition-3-2}. In fact, based on Theorem 1 in \citet{Mokkadem1988}, \eqref{eq-Condition-3-1} is valid. We proceed to show \eqref{eq-Condition-3-2}. Recall that $W_t(v) = I(V_t \leq v) - F_V(v)$ and thus $|W_t(v)|\leq 1$; we have 
\begin{eqnarray}
    \sup_{v \in \mathbb{R}^K} \sup_{t\geq 1} \text{var}\{W_t(v)\} \leq 1. \label{eq-Example-4-0}
\end{eqnarray}
The covariance terms in $\xi^2$ is considered as follows. Note that 
\begin{eqnarray*}
    W_t(v) = \int_0^1 \left[ I\left\{ W_t(v) > s\right\} - I\left\{ W_t(v) < -s\right\} \right] ds 
\end{eqnarray*}
For $s\in (0,1)$ and $v\in \mathbb{R}^K$, let $A(s,t) = \left\{\omega: W_t^\omega(v) > s \right\}$ and $B(s,t) = \left\{\omega: W_t^\omega(v) < -s \right\}$ be events in the sample space; we have 
\begin{eqnarray*}
    && E\left\{W_{t_1}(v) W_{t_2}(v) \right\} \\
    &=& \int_{(s_1,s_2)\in[0,1]^2} E \Big([ I\{ W_{t_1}(v) > s_1\} - I\left\{ W_{t_1}(v) < -s_1\right\}] \\
    &&\times\left[ I\left\{ W_{t_2}(v) > s_2\right\} - I\left\{ W_{t_2}(v) < -s_2\right\} \right] \Big) d s_1 d s_2 \\
    &=& \int_{(s_1,s_2) \in [0,1]^2} E \left[I\left\{ W_{t_1}(v) > s_1\right\}I\left\{ W_{t_2}(v) > s_2\right\}\right] d s_1 d s_2 \\
    && - \int_{(s_1,s_2) \in [0,1]^2} E \left[I\left\{ W_{t_1}(v) > s_1\right\}I\left\{ W_{t_2}(v) < -s_2\right\}\right] d s_1 d s_2\\
    && - \int_{(s_1,s_2) \in [0,1]^2} E \left[I\left\{ W_{t_1}(v) < -s_1\right\}I\left\{ W_{t_2}(v) > s_2\right\}\right] d s_1 d s_2\\
    && +\int_{(s_1,s_2) \in [0,1]^2} E \left[I\left\{ W_{t_1}(v) < -s_1\right\}I\left\{ W_{t_2}(v) < -s_2\right\}\right] d s_1 d s_2 \\
    &=& \int_{(s_1,s_2) \in [0,1]^2} P(A(s_1,t_1) \cap A(s_2,t_2)) d s_1 d s_2 \\
    &&                        - \int_{(s_1,s_2) \in [0,1]^2} P(A(s_1,t_1) \cap B(s_2,t_2)) d s_1 d s_2 \\
    && - \int_{(s_1,s_2) \in [0,1]^2} P(B(s_1,t_1) \cap A(s_2,t_2)) d s_1 d s_2  \\
    && + \int_{(s_1,s_2) \in [0,1]^2} P(B(s_1,t_1) \cap B(s_2,t_2)) d s_1 d s_2. 
\end{eqnarray*}
Similarly
\begin{eqnarray*}
    && E\left\{W_{t_1}(v)\right\} \cdot E\left\{W_{t_2}(v) \right\} \\ &=& \int_0^1 \left\{P(A(s_1, t_1))- P(B(s_1,t_1))\right\} d s_1 \int_0^1 \left\{P(A(s_2, t_2))- P(B(s_2,t_2)) \right\} d s_2 \\
    &=& \int P(A(s_1,t_1)) d s_1 \int P(A(s_2, t_2)) d s_2 - \int P(A(s_1,t_1)) d s_1 \int P(B(s_2, t_2)) d s_2 \\
    && - \int P(B(s_1,t_1)) d s_1 \int P(A(s_2, t_2)) d s_2 + \int P(B(s_1,t_1)) d s_1 \int P(B(s_2, t_2)) d s_2. 
\end{eqnarray*}
Thus, 
\begin{eqnarray*}
    && \text{cov}\left\{W_{t_1}(v), W_{t_2}(v)\right\} = E\left\{W_{t_1}(v) W_{t_2}(v) \right\} - E\left\{W_{t_1}(v)\right\} \cdot E\left\{W_{t_2}(v) \right\}  \\
    &=& \int_{(s_1,s_2) \in [0,1]^2} P(A(s_1,t_1) \cap A(s_2,t_2)) d s_1 d s_2 - \int P(A(s_1,t_1)) d s_1 \int P(A(s_2, t_2)) d s_2 \\
    &&- \left\{\int_{(s_1,s_2) \in [0,1]^2} P(A(s_1,t_1) \cap B(s_2,t_2)) d s_1 d s_2 - \int P(A(s_1,t_1)) d s_1 \int P(B(s_2, t_2)) d s_2\right\} \\
    &&- \left\{\int_{(s_1,s_2) \in [0,1]^2} P(B(s_1,t_1) \cap A(s_2,t_2)) d s_1 d s_2 - \int P(B(s_1,t_1)) d s_1 \int P(A(s_2, t_2)) d s_2\right\} \\
    && + \int_{(s_1,s_2) \in [0,1]^2} P(B(s_1,t_1) \cap B(s_2,t_2)) d s_1 d s_2  - \int P(B(s_1,t_1)) d s_1 \int P(B(s_2, t_2)) d s_2, 
\end{eqnarray*}
which together with the definition of ``$\alpha_V(n)$" given in \eqref{eq-Example-3-def-alpha-V-n} leads to 
\begin{eqnarray}
    \left| \text{cov}\left\{W_{t_1}(v), W_{t_2}(v)\right\} \right| \leq 4\alpha_W(|t_2 - t_1|). \label{eq-Example-4-1}
\end{eqnarray}
Furthermore, we have concluded at the beginning of the proof for this example that \eqref{eq-Condition-3-1} is valid, and based on \eqref{eq-Condition-3-revised-1} in the proof of Example \ref{Example-3}, \eqref{eq-Example-4-1} implies 
\begin{eqnarray*}
    \left| \text{cov}\left\{W_{t_1}(v), W_{t_2}(v)\right\} \right| \leq 4\exp(-c|t_2 - t_1|),
\end{eqnarray*}
which, together with \eqref{eq-Example-4-0} and the definition of $\xi^2$ given by \eqref{eq-Example-3-def-xi}, verifies \eqref{eq-Condition-3-2}. We complete the proof of this example. \epf

\subsection{Preliminaries for technical developments of Theorems \ref{theorem-2} and \ref{median fdr theorem}}

\begin{lemma} \label{lemma-1}
    Assume Condition \ref{Condition-1}. Recalling the definition of $\FF_{\theta_0, h}(\cdot), \FF_h(\cdot), \FF_{m,\theta_0,h}(\cdot)$, and $\FF_{m,h}(\cdot)$ defined in Section \ref{section-notations}, we have 
    \begin{eqnarray}
       \sup_{z\in \mathbb{R}^K} \left| \FF_{m,h}(z) - \FF_{m,\theta_0,h}(z)\right| &=& o\left\{ (\log p)^{-1} \right\}, \quad a.s. \label{eq-lemma-1-0} \\
        \sup_{z\in \mathbb{R}^K}|\FF_h(z) - \FF_{\theta_0, h}(z)| &=& o\left\{ (\log p)^{-1} \right\}, \quad a.s., \label{eq-lemma-1-1}
    \end{eqnarray}
    for any $h=1,\ldots, Q; m =0,1$. 
\end{lemma}
\proof We only need to show for $m=0$, and each $h=1,\ldots, Q$, \eqref{eq-lemma-1-0} is  valid, since the cases for $m=1$ can be established similarly; and thus, \eqref{eq-lemma-1-1} is valid based on \eqref{eq-lemma-1-0} and \eqref{F-F-h-relation}. For presentational clarity, we consider one $h$ only, the other cases are established in the same manner. Specifically, we consider the $h$, such that $\Delta_h = (0,1,\ldots, 1)^T$. We have,
\begin{eqnarray*}
  \FF_{0, \theta_0, h}(z) &=& \int_{s\in A_h(z)} d \FF_{0,\theta_0}(s) = \left\{1-\FF_{0,1}(z_1)\right\}\prod_{k=2}^K  \FF_{0,k}(z_k)  \\
  &=& \prod_{k=2}^K  \FF_{0,k}(z_k) - \FF_{0,\theta_0}(z) = \FF_{0,\theta_0}(\infty, z_2, \ldots, z_K) -  \FF_{0, \theta_0}(z)\\
  \FF_{0,h}(z) &=& \int_{s\in A_h(z) } d \FF_0(s) =\frac{1}{p_0} \sum_{j=1}^{p_0} I(X_{j,1}>z_1)\prod_{k=2}^KI(X_{j,k}\leq z_k)\\
  &=& \frac{1}{p_0} \sum_{j=1}^{p_0} \prod_{k=2}^KI(X_{j,k}\leq z_k) - \FF_0(z) \\
  &=& \FF_0(\infty, z_2, \ldots, z_K) - \FF_0(z). 
\end{eqnarray*}
Thus, 
\begin{eqnarray*}
    \sup_{z\in \mathbb{R}^K}|\FF_{0,h}(z) - \FF_{0,\theta_0, h}(z)| &\leq& \sup_{z\in \mathbb{R}^K}\left|\FF_0(\infty, z_2, \ldots, z_K)  - \FF_{0, \theta_0}(\infty, z_2, \ldots, z_K) \right| \\
    && +\sup_{z\in \mathbb{R}^K} \left|\FF_0(z) - \FF_{0, \theta_0}(z)  \right| \\
    &\leq &  2 \sup_{z\in \mathbb{R}^K} \left|\FF_0(z) - \FF_{0, \theta_0}(z)  \right| = o\left\{ (\log p)^{-1} \right\}, \quad a.s.,
\end{eqnarray*}
based on Condition \ref{Condition-1}. \epf

\begin{lemma} \label{lemma-2}
    For $p=1,2,\ldots$, let $H_p(z)$ be an arbitrary sequences of c.d.f.s defined on $z\in \mathbb{R}^K$, and let $P_p(z)$, $Q_p(z)$ be sequences of c.d.f.s defined on $z\in \mathbb{R}^K$, such that as $p\to \infty$,
    \begin{eqnarray}
    \sup_{z}|P_p(z) - Q_p(z)| = o(1), \quad a.s. \label{eq-lemma-2-0}
    \end{eqnarray}
    Then, we have
    \begin{eqnarray*}
    \int H_p(z) d\left\{P_p(z) - Q_p(z) \right\} = o(1), \quad a.s.
    \end{eqnarray*}
\end{lemma}

\proof Consider
\begin{eqnarray}
        \nonumber \int H_p(z)dP_p(z)&=&\iint I(s_1\leq z_1,\ldots,s_K\leq z_K)d H_p(s) dP_p(z)\\
       \nonumber &=&\iint \left\{\prod^K_{k=1} I(z_k\geq s_k)\right\} dP_p(z)dH_p(s)\\
        &=&\int P(Z\geq s)dH_p(s),\quad \text{with $Z\sim P_p(\cdot)$,}\label{lemma1-a-proof-eq-1}
    \end{eqnarray}
    where for $a = (a_1, \ldots, a_K)^T , b = (b_1, \ldots, b_K)^T \in \mathbb{R}^K$, $a\leq b$ means $a_k \leq b_k$, for every $k=1,\ldots,K$. On the other hand, applying the inclusive-exclusive principle for probability events, we have 
    \begin{eqnarray}
        P(Z\geq s) &=& 1 - P\left(\cup_{k=1}^K\{Z_k < s_k\} \right) \nonumber \\
        &=& 1- \sum_{S\subset \{1,\ldots, K\}} (-1)^{|S|-1} P\left(\cap_{k\in S} \{Z_k < s_k\} \right) \nonumber \\
        &=& 1+ \sum_{S\subset \{1,\ldots, K\}} (-1)^{|S|} P_p(\widetilde{s}_S), \label{eq-lemma-2-1}
    \end{eqnarray}
    where $|S|$ is denoting the number of elements contained in the nonempty subset $S$ of $\{1,\ldots, K\}$,
    \begin{equation*}\label{lemma1-a-proof-eq-3}
        \tilde{s}_{S}=\left(\tilde{s}_{1},\ldots,\tilde{s}_K\right),\quad \text{with}\quad \tilde{s}_k=\begin{cases}
        s_{k}-& k \in S\\
        +\infty& k \notin S
    \end{cases}
    \end{equation*}
    and thus, for $Z\sim P_p(\cdot)$,
    \begin{eqnarray*}
        P\left(\cap_{k\in S} \{Z_k < s_k\} \right) = P\left(\cap_{k\in S} \{Z_k < s_k\} \cap_{k\notin S}\{Z_k \leq \infty \}\right) = P_p(\widetilde s_S). 
    \end{eqnarray*}

Combining \eqref{lemma1-a-proof-eq-1} and \eqref{eq-lemma-2-1}, we have 
\begin{eqnarray}
    \int H_p(z)dP_p(z) = 1+ \int \sum_{S\subset \{1,\ldots, K\}} (-1)^{|S|} P_p(\widetilde{s}_S) d H_p(s). \label{eq-lemma-2-2} 
\end{eqnarray}
Similarly, 
\begin{eqnarray}
        \int H_p(z)dQ_p(z) = 1+ \int \sum_{S\subset \{1,\ldots, K\}} (-1)^{|S|} Q_p(\widetilde{s}_S) d H_p(s). \label{eq-lemma-2-3}
\end{eqnarray}
Combining \eqref{eq-lemma-2-2} and \eqref{eq-lemma-2-3}, we have 
\begin{eqnarray*}
    \int H_p(z)dP_p(z) - \int H_p(z)dQ_p(z) &=& \left|\sum_{S\subset \{1,\ldots, K\}} (-1)^{|S|} \int \left\{ P_p(\widetilde s_S) - Q_p(\widetilde s_S)\right\} d H_p(s) \right| \\
    &\leq & \sum_{S\subset \{1,\ldots, K\}} \int \left| P_p(\widetilde s_S) - Q_p(\widetilde s_S) \right| d H_p(s) \\
    &\leq & (2^K-1) \sup_{s\in \mathbb{R}^K} |P_p(s) - Q_p(s)| = o(1), \quad a.s., 
\end{eqnarray*}
by noting the assumption \eqref{eq-lemma-2-0}. We complete the proof of this lemma. \epf

\begin{lemma}\label{lipschitz prod lemma}
    For $a_k, b_k \in [0,1], k=1,\ldots,K$, we have $$\left|\prod_{k=1}^K a_k-\prod\limits_{k=1}^K b_k\right|\leq \sum\limits_{k=1}^K |a_k-b_k|.$$
\end{lemma}
\proof
We show this lemma by applying mathematical induction. 
When $K=1$, $|a_1-b_1|\leq |a_1-b_1|$ holds; suppose when $K=j$ the inequality holds, then for $K=j+1$,
\begin{align*}
    \left|\prod_{k=1}^{j+1}a_k-\prod_{k=1}^{j+1}b_k\right|&=\left|(a_{j+1}-b_{j+1})\prod_{k=1}^j a_k+b_{j+1}\left(\prod_{k=1}^j a_k-\prod_{k=1}^jb_k\right)\right|\\
    &\leq |a_{j+1}-b_{j+1}|+\left|\prod_{k=1}^j a_k-\prod\limits_{k=1}^jb_k\right|\\
    &\leq \sum\limits_{k=1}^{j+1}|a_k-b_k|,
\end{align*}
which indicates when $K=j+1$, the inequality claimed by the lemma holds. We complete the proof of this lemma. \epf

\begin{lemma}\label{simple property for median of two sequence}
    For two sequence $\{a_k\}_{k=1}^K$, $\{b_k\}_{k=1}^K$ satisfying $\max\limits_{1\leq k \leq K} |a_k-b_k|\leq \epsilon$, we have $$ \left|\text{Median}\left\{a_i\right\}_{i=1}^K-\text{Median}\left\{b_i\right\}_{i=1}^K \right|\leq \epsilon.$$
\end{lemma}
\proof
We first show for each $i=1,\ldots, K$, 
\[a_{(i)}\le b_{(i)}+\epsilon,
\]
where $\{a_{(i)}\}_{i=1}^K$ and $\{b_{(i)}\}_{i=1}^K$ denote the corresponding order statistics.
Given an $i\in\{1,\dots ,K\}$, let
$
B_i:=\{k: b_k\le b_{(i)}\}
$, then $|B_i| \geq i$.  
For every $k\in B_i$, based the condition $\max\limits_{1\leq k \leq K} |a_k-b_k|\leq \epsilon$, we have
\[
a_k \le b_k+\epsilon \le b_{(i)}+\epsilon .
\]
Hence there are at least $i$ elements in $\{a_i\}_{i=1}^K$ that do not exceed $b_{(i)}+\epsilon$, and thus,
\[
a_{(i)} \leq b_{(i)}+\epsilon .
\]
Likewise, we have  
$$b_{(i)}\leq a_{(i)}+\epsilon.$$ 
If $K$ is an odd number, by taking $i = (p+1)/2$, we have 
\begin{eqnarray}
\left|\text{Median}\left\{a_i\right\}_{i=1}^K-\text{Median}\left\{b_i\right\}_{i=1}^K\right|=\left|a_{\left(\frac{p+1}{2}\right)}-b_{\left(\frac{p+1}{2}\right)}\right|\leq \epsilon. \label{eq-simple property for median of two sequence-1}
\end{eqnarray}
When $K$ is an even number, since $b_{(i)}-\epsilon\leq a_{(i)}\leq b_{(i)}+\epsilon$ holds for both $i=\frac{K}{2}$ and $\frac{K}{2}+1$, we have $$\frac{b_{\left(\frac{K}{2}\right)}+b_{\left(\frac{K}{2}+1\right)}-2\epsilon}{2}\leq \frac{a_{\left(\frac{K}{2}\right)}+a_{\left(\frac{K}{2}+1\right)}}{2}\leq \frac{b_{\left(\frac{p}{2}\right)}+b_{\left(\frac{p}{2}+1\right)}+2\epsilon}{2},$$ which indicates
\begin{eqnarray}
    \left|\text{Median}\left\{a_i\right\}_{i=1}^K-\text{Median}\left\{b_i\right\}_{i=1}^K\right| &=& \left| \frac{a_{\left(\frac{K}{2}\right)}+a_{\left(\frac{K}{2}+1\right)}}{2} - \frac{b_{\left(\frac{K}{2}\right)}+b_{\left(\frac{K}{2}+1\right)}}{2} \right|\nonumber \\
    &\leq & \epsilon \label{eq-simple property for median of two sequence-2}
\end{eqnarray}
Combining \eqref{eq-simple property for median of two sequence-1}
and \eqref{eq-simple property for median of two sequence-2}, we complete the proof of this lemma. \epf

\subsection{Proof of Theorem \ref{theorem-2}}

We first establish the $L_2$ convergence of the estimators for the mixture distribution, i.e., 
\[
\FF_{\widehat \theta}(z) =\widehat \lambda_0  \prod_{k=1}^K \widehat \FF_{0,k}(z_k) + \widehat \lambda_1 \prod_{k=1}^K \widehat \FF_{1,k}(z_k)
\]
to 
\[
\FF_{\theta_0}(z) = \lambda_{0,0}  \prod_{k=1}^K \FF_{0,k}(z_k) + \lambda_{1,0} \prod_{k=1}^K \FF_{1,k}(z_k),
\]
under the empirical measure $\FF(\cdot)$. We have the following lemma.

\begin{lemma} \label{lemma-mix-consistency}
    Assume Condition \ref{Condition-1}. We have, as $p\to \infty$,
    \begin{eqnarray*}
        \int \left\{ \FF_{\widehat \theta}(z) - \FF_{\theta_0}(z) \right\}^2 d \FF(z) = o(1), \quad a.s. \label{eq-lemma-mix-consistency-1}
    \end{eqnarray*}
\end{lemma}
\proof Recall that $\FF(\cdot)$ is the empirical c.d.f. of $\{Z_1, \ldots, Z_p\}$, we have 
\begin{eqnarray*}
    \ell(\theta) &=& \sum_{j=1}^p \sum_{i=1}^p \log \left[ \sum_{m=0}^1\lambda_m \prod_{k=1}^K F_{m,k}^{I_{i,j,k}}(Z_{j,k})\cdot \bar F_{m,k}^{1-I_{i,j,k}}(Z_{j,k}) \right] \\
    &=& \sum_{j=1}^p \sum_{i=1}^p \log f_{\theta}(Z_i, Z_j) = \frac{1}{p^2} \iint \log f_\theta(s,t) d\FF(s) d\FF(t), \label{eq-lemma-mix-consistency-2}
\end{eqnarray*}
where $f_\theta(s, t)$ is defined by \eqref{def-f-theta}. Since $\widehat \theta$ is the maximiser of $\ell(\theta)$, we have $\ell(\widehat \theta) \geq \ell(\theta_0)$, where 
\begin{eqnarray*}
    \theta_0 = \left\{p_0/p_1, \FF_{m,k}, m=1,2; k = 1,\ldots, K\right\}.  \label{eq-lemma-mix-consistency-3}
\end{eqnarray*}
Based on the concavity of the logarithm, we have 
\begin{eqnarray}
\iint
\log\!\Bigl\{\tfrac{f_{\hat{\theta}}(s,t)+f_{\theta_0}(s,t)}{2\,f_{\theta_0}(s,t)}\Bigr\}
\,d\FF(s)\,d\FF(t)
\;\ge\;0. \label{eq-lemma-mix-consistency-4}
\end{eqnarray}
Let 
\begin{eqnarray}
    M_n(\theta)= \iint
\log\!\Bigl\{\tfrac{f_{\theta}(s,t)+f_{\theta_0}(s,t)}{2\,f_{\theta_0}(s,t)}\Bigr\}
\,d\FF(s)\,d\FF(t) \equiv \mathI_1(\theta) + \mathI_2(\theta), \label{eq-lemma-mix-consistency-5}
\end{eqnarray}
where 
\begin{align*} \label{eq-lemma-mix-consistency-6}
\mathI_1(\theta)
&=
\iint
\log\!\Bigl\{\tfrac{f_{\theta}(s,t)+f_{\theta_0}(s,t)}{2\,f_{\theta_0}(s,t)}\Bigr\}
\,d\left\{\FF(s)-\FF_{\theta_0}(s)\right\} d\FF(t),
\\
\mathI_2(\theta)
&=
\iint
\log\!\Bigl\{\tfrac{f_{\theta}(s,t)+f_{\theta_0}(s,t)}{2\,f_{\theta_0}(s,t)}\Bigr\}
\,d\FF_{\theta_0}(s)\,d\FF(t).  
\end{align*}
We consider $\mathI_1(\theta)$ and $\mathI_2(\theta)$ separately. For $\mathI_2(\theta)$, based on the partition \( \mathbb{R}^K = \cup_{h=1}^Q A_h(t)\), and the inequality $-\frac12 \log x \geq 1-\sqrt{x}$ for any $x>0$, we have 
\begin{eqnarray} 
-\mathI_2(\theta)&=&\sum\limits_{h=1}^Q\iint_{s \in A_h(t)} -\log\!\Bigl\{\tfrac{f_{\theta}(s,t)+f_{\theta_0}(s,t)}{2\,f_{\theta_0}(s,t)}\Bigr\}\,d\FF_{\theta_0}(s)\,d\FF(t) \nonumber  \\
&=&
\sum_{h=1}^Q
\int
\FF_{\theta_0,h}(t)\left[
-\log\Bigl\{
\tfrac{\FF_{\theta,h}(t)+\FF_{\theta_0,h}(t)}{2\,\FF_{\theta_0,h}(t)}
\Bigr\}\right]
\,d\FF(t)
\nonumber \\
& \ge& \sum_{h=1}^Q
\int
\FF_{\theta_0,h}(t)\left[
2-2\sqrt{
\tfrac{\FF_{\theta,h}(t)+\FF_{\theta_0,h}(t)}{2\,\FF_{\theta_0,h}(t)}
\Bigr\}} \right]
\,d\FF(t) 
\nonumber \\
&=&
\sum_{h=1}^Q
\int
\left[
\sqrt{\tfrac{\FF_{\theta,h}(t)+\FF_{\theta_0,h}(t)}{2}}
- \sqrt{\FF_{\theta_0,h}(t)}
\right]^2
\,d\FF(t)
\label{eq-lemma-mix-consistency-7} \\
&\ge&
\frac{1}{16}
\sum_{h=1}^Q
\int
\left|\sqrt{\FF_{\theta,h}(t)}-\sqrt{\FF_{\theta_0,h}(t)}\right|^2
\,d\FF(t).
\label{eq-lemma-mix-consistency-8}\\
&\ge&
\frac{1}{64}
\sum_{h=1}^Q
\int
\bigl|\FF_{\theta,h}(t)-\FF_{\theta_0,h}(t) \bigr|^2
\,d\FF(t), \label{eq-lemma-mix-consistency-9} 
\end{eqnarray}
where from \eqref{eq-lemma-mix-consistency-7} to \eqref{eq-lemma-mix-consistency-8}, we have used the inequality that for any $x\geq 0, y\geq 0$, 
\[
\left|\sqrt{\frac{x+y}{2}} - \sqrt{x} \right| \geq \frac 14 \left|\sqrt{x} - \sqrt{y} \right|. 
\]

We proceed to consider $\mathI_1(\theta)$. We have 
\begin{align} \label{eq-lemma-mix-consistency-10}
\mathI_{1}(\theta)
&=
\sum_{h=1}^Q
\iint_{s\in A_{h}(t)}
\log\!\Bigl\{\tfrac{f_{\theta}(s,t)+f_{\theta_0}(s,t)}{2\,f_{\theta_0}(s,t)}\Bigr\}
d\left\{\FF(s)-\FF_{\theta_0}(s)\right\}\,d\FF(t)
\nonumber \\
&=\;
\sum_{h=1}^Q
\int
\left\{\FF_h(t)-\FF_{\theta_0,h}(t)\right\}
\log\!\Bigl\{\tfrac{\FF_{\theta,h}(t)+\FF_{\theta_0,h}(t)}{2\,\FF_{\theta_0,h}(t)}\Bigr\}
\,d\FF(t).
\end{align}
On the other hand, we are able to check that for each $h=1,\ldots, Q$, $Z_i \in \mathZ$, 
\begin{eqnarray}
    \FF_h(Z_i)>0\quad \text{implies}\quad \FF_{\theta_0, h}(Z_i) > 0  \label{eq-lemma-mix-consistency-10-1}
\end{eqnarray}
This can be done by checking every $h=1,\ldots, Q$, this statement holds. Here, we check only one, the others can be done follow the same strategy. In particular, we check the $h$, such that the corresponding $\Delta_h = (0,1,\ldots, 1)^T$. Based on \eqref{def-f-theta}, \eqref{def-F-theta-h-t}, \eqref{def-F-theta-h-t-0}, and \eqref{def-F-h-t},
\begin{eqnarray}
  \FF_{\theta_0, h}(z) &=& \lambda_{0,0} \left\{1-\FF_{0,1}(z_1)\right\}\prod_{k=2}^K  \FF_{0,k}(z_k)  + \lambda_{1,0} \left\{1-\FF_{1,1}(z_1)\right\}\prod_{k=2}^K  \FF_{1,k}(z_k) \label{eq-lemma-mix-consistency-11}\\
  \FF_h(z) &=& \frac 1p \sum_{j=1}^p I(Z_{j,1}>z_1)\prod_{k=2}^KI(Z_{j,k}\leq z_k). \label{eq-lemma-mix-consistency-12} 
\end{eqnarray}
Based on \eqref{eq-lemma-mix-consistency-12}, $\FF_h(Z_i)>0$ implies that there exists a $j$, such that 
\begin{eqnarray*}
I(Z_{j,1}>Z_{i,1}) =1, \quad \text{and for each }k=2,\ldots, K, I(Z_{j,k}\leq Z_{i,k}) = 1. 
\end{eqnarray*}
If $Z_j\in \mathX$, $I(Z_{j,1}>Z_{i,1})$ is a term in the summation of $1-\FF_{0,1}(Z_i) = \frac{1}{p_0} \sum_{j=1}^{p_0} I(X_{j,1} > Z_i)$, and thus $1-\FF_{0,1}(Z_i) > 0$; likewise, for each $k=2,\ldots, K$, $I(Z_{j,k}\leq Z_{i,k})$ is a term in the summation of $\FF_{0,k}(Z_i) = \frac{1}{p_0} \sum_{j=1}^{p_0} I(X_{j,1}\leq Z_i)$, and thus $\FF_{0,k}(Z_i) > 0$. As a consequence, in view of the structure of \eqref{eq-lemma-mix-consistency-11}, we conclude $\FF_{\theta_0, h}(Z_i) >0$, as its first term is greater than 0, i.e., 
\begin{eqnarray*}
    \left\{1-\FF_{0,1}(Z_{i,1})\right\}\prod_{k=2}^K  \FF_{0,k}(Z_{i,k}) > 0. \label{eq-lemma-mix-consistency-13} 
\end{eqnarray*}
Similarly, if $Z_j\in \mathY$, we can conclude $\FF_{\theta_0, h}(Z_i) >0$ by checking that its second term is greater than 0, i.e., 
\begin{eqnarray*}
    \left\{1-\FF_{1,1}(Z_{i,1})\right\}\prod_{k=2}^K  \FF_{1,k}(Z_{i,k})>0. \label{eq-lemma-mix-consistency-14} 
\end{eqnarray*}
In conclusion, we have checked $\FF_{\theta_0, h}(Z_i) >0$, and thus \eqref{eq-lemma-mix-consistency-10-1} is valid; or equivalently
\begin{eqnarray*}
\FF_{\theta_0, h}(Z_i) = 0 \quad \text{implies}\quad  \FF_h(Z_i)=0. \label{eq-lemma-mix-consistency-15} 
\end{eqnarray*}
For each $h$, consider the set $\widetilde \mathZ = \left\{\widetilde Z: \widetilde Z \in \mathZ\quad \text{and} \quad \FF_{\theta_0, h}(\widetilde Z) > 0\right\}$, and assume that $\widetilde \mathZ$ contains $p_h$ number of elements; denote by $\widetilde \FF_h(\cdot)$ the empirical c.d.f. based on $\widetilde \mathZ$. Furthermore, based on the structure of $\FF_{\theta_0,h}(\cdot)$, we have for any $\widetilde Z \in \widetilde \mathZ$, 
\begin{eqnarray*}
    \FF_{\theta_0, h}(\widetilde Z) \geq \min\left\{\frac{\lambda_{0,0}}{p_0^K}, \frac{\lambda_{1,0}}{p_1^K} \right\} \geq \frac{1}{p^K}. \label{eq-lemma-mix-consistency-16} 
\end{eqnarray*}

Based on the partition \( \mathbb{R}^K = \cup_{h=1}^Q A_h(t)\) and \eqref{eq-lemma-mix-consistency-10}, based on Condition \ref{Condition-1} and applying Lemma \ref{lemma-1}, and using the convention $0\log 0 = 0$, we have 
\begin{align} \label{eq-lemma-mix-consistency-17} 
    |\mathI_1(\theta)|&=\left|\sum_{h=1}^Q\int\bigl\{ \FF_h(t)-\FF_{\theta_0,h}(t)\}\log\!\Bigl\{\tfrac{\FF_{\theta,h}(t)+\FF_{\theta_0,h}(t)}{2\,\FF_{\theta_0,h}(t)}\Bigr\}\,d\FF(t) \right|.\nonumber \\
    &=\left|\sum_{h=1}^Q \frac{p_h}{p}\int\left\{\FF_h(t)-\FF_{\theta_0,h}(t)\right\}\log\!\Bigl\{\tfrac{\FF_h(t)+\FF_{\theta_0,h}(t)}{2\,\FF_{\theta_0,h}(t)}\Bigr\}\,d\tilde\FF_h(t)\right|.\nonumber \\
    &\leq \sum_{h=1}^Q \int
\left|\FF_h(t)-\FF_{\theta_0,h}(t)\right|
\;\left|\log\left\{\tfrac{\FF_{\theta,h}(t)+\FF_{\theta_0,h}(t)}{2\,\FF_{\theta_0,h}(t)}\right\}\right|
\,d\widetilde{\FF}_h(t).\nonumber \\
& \leq \sum_{h=1}^Q K \int
\left|\FF_h(t)-\FF_{\theta_0,h}(t)\right| (\log p)  d\widetilde{\FF}_h(t). \nonumber \\
& = o(1), \quad a.s.
\end{align}
Combining \eqref{eq-lemma-mix-consistency-4}, \eqref{eq-lemma-mix-consistency-5}, \eqref{eq-lemma-mix-consistency-9}, and \eqref{eq-lemma-mix-consistency-17} leads to 
\begin{eqnarray*}
    0 &\leq& \frac{1}{64}
\sum_{h=1}^Q
\int
\bigl|\FF_{\widehat \theta,h}(t)-\FF_{\theta_0,h}(t) \bigr|^2
\,d\FF(t) \leq -\mathI_2(\widehat \theta) \leq -\mathI_2(\widehat \theta) + M_n(\widehat \theta) \\
&=& \mathI_1(\widehat \theta) = o(1), \quad a.s.,
\end{eqnarray*}
which completes the proof of the lemma by choosing $h$ such that $\Delta_h = (1,\ldots, 1)$. \epf

With Lemma \ref{lemma-mix-consistency}, we proceed to show Theorem \ref{theorem-2}. We need to show that for each combination of $m_1, m_2\in\{0,1\}$, $k=1,\ldots, K$, we have, as $p\to \infty$,
\begin{eqnarray} \label{convergence of lambda}
   p_0/p = \widehat \lambda_0 + o(1), \quad a.s. 
\end{eqnarray}
and
\begin{equation}\label{marginal L2 with mixture}
    \int\left\{\widehat \FF_{m_1,k}(t)-\FF_{m_1,k}(t)\right\}^2 d \FF_{m_2,k}(t)=o(1),\quad a.s.
\end{equation}
To this end, let 
\begin{eqnarray}
\Omega=\left\{\omega: \lim\limits_{p\rightarrow \infty}\int \left\{\FF^\omega_{\widehat{\theta}}(t)-\FF_{\theta_0}^\omega(t)\right\}^2 d \FF^\omega(t)=0\right\}.  \label{def-Omega}
\end{eqnarray}
Then, based on Lemma \ref{lemma-mix-consistency}, $P(\Omega) = 1$. For each $\omega \in \Omega$ and $(m,k)$ combination, based on the Helly's extraction principle and Bolzano-Weierstrass theorem, there exists a subsequence of $\widehat \theta^{\omega}$ that converges to a proper limit, denoted by $\widetilde \theta^{\omega} = (\widetilde \lambda_0^{\omega}, \widetilde F_{0,1}^\omega, \ldots, \widetilde F_{1,K}^{\omega})$; along this subsequence, there exists a further subsequence, such that $\theta_0^{\omega}$ convergences to a proper limit, denoted by $\theta^{\omega} = \{ \lambda_0^\omega, F_{0,1}^\omega, \ldots, F_{1,K}^\omega\}$. Without loss of generality, we assume, as $p\to \infty$,
\begin{eqnarray*} \label{eq-theorem-2-2}
  \widehat \theta^{\omega} - \widetilde \theta^{\omega} = o(1) \quad \text{and}\quad \theta_0^\omega - \theta^\omega = o(1). 
\end{eqnarray*}
Based on $\widetilde \theta^{\omega}$ and $\theta^{\omega}$, we denote
\begin{eqnarray*} 
    \widetilde F^\omega(z) = \sum_{m=0}^1 \widetilde \lambda_m^{\omega} \prod_{k=1}^K \widetilde F_{m,k}^\omega (z) \quad \text{and} \quad F^\omega(z) = \sum_{m=0}^1 \lambda_m^{\omega} \prod_{k=1}^K F_{m,k}^\omega (z). \label{eq-theorem-2-3}
\end{eqnarray*}

Recalling $F^*(\cdot)$ defined in Condition \ref{Condition-4}, and based on Conditions \ref{Condition-2} and \ref{Condition-3}, we have 
\begin{eqnarray*}
    \sup_{z\in \mathbb{R}}|\FF(z) - F^*(z)| = o(1), \quad a.s. \label{eq-theorem-2-4}
\end{eqnarray*}
Without loss of generality, we assume for every $\omega \in \Omega$, 
\begin{eqnarray}
    \sup_{z\in \mathbb{R}}|\FF^\omega(z) - F^{*\omega}(z)| = o(1), \label{eq-theorem-2-5}
\end{eqnarray}
which together with the fact that $\FF_{\widehat \theta}^2(\cdot)$, $\FF_{\widetilde \theta^\omega}^2(\cdot)$, and $\FF_{\widehat \theta}(\cdot) \cdot \FF_{\widetilde \theta^\omega}(\cdot)$ are all c.d.f.s, and the conclusion in Lemma \ref{lemma-2}, leads to 
\begin{eqnarray*}
    \int \left\{\FF_{\widehat \theta}^\omega(z) -  \FF_{\widetilde \theta^\omega}(z) \right\}^2 d \left\{ \FF^\omega(z) - F^{*\omega}(z) \right\} = o(1),  \label{eq-theorem-2-6}
\end{eqnarray*}
Thus, we have 
\begin{eqnarray}
    \int \left\{\FF_{\widehat \theta}^\omega(z) -  \FF_{\widetilde \theta^\omega}(z) \right\}^2 d \FF^\omega(z) &=& \int \left\{\FF_{\widehat \theta}^\omega(z) -  \FF_{\widetilde \theta^\omega}(z) \right\}^2 d \left\{ \FF^\omega(z) - F^{*\omega}(z) \right\} \nonumber \\
    && + \int \left\{\FF_{\widehat \theta}^\omega(z) -  \FF_{\widetilde \theta^\omega}(z) \right\}^2 d  F^{*\omega}(z) \nonumber \\
    &=& o(1), \label{eq-theorem-2-7}
\end{eqnarray}
where the second term on the right hand side being $o(1)$ is based on the definition of $\widetilde \theta^\omega$ and the dominant convergence theorem. Similarly to the development of \eqref{eq-theorem-2-7}, we have  
\begin{eqnarray}
    \int \left\{\FF_{\theta_0}^\omega(z) -  \FF_{\theta^\omega}(z) \right\}^2 d \FF^\omega(z) = o(1).  \label{eq-theorem-2-8}
\end{eqnarray}
Moreover, based on the definition of $\Omega$ given by \eqref{def-Omega}, for each $\omega \in \Omega$,
    \begin{eqnarray}
        \int \left\{ \FF_{\widehat \theta}^\omega (z) - \FF_{\theta_0}^\omega (z) \right\}^2 d \FF^\omega(z) = o(1).  \label{eq-theorem-2-9}
    \end{eqnarray}
Combining \eqref{eq-theorem-2-7}--\eqref{eq-theorem-2-9}, we have 
\begin{eqnarray*}
    0\leq \|\FF_{\widetilde \theta^\omega} - \FF_{\theta^\omega}\|_{\FF^\omega, 2} &\leq& \|\FF_{\widehat \theta}^\omega -  \FF_{\widetilde \theta^\omega}\|_{\FF^\omega, 2} + \|\FF_{\theta_0}^\omega -  \FF_{\theta^\omega}\|_{\FF^\omega, 2} + \|\FF_{\widehat \theta}^\omega  - \FF_{\theta_0}^\omega\|_{\FF^\omega, 2} \\
    &=& o(1), \text{ as }p\to \infty,  \label{eq-theorem-2-10}
\end{eqnarray*}
which together with Lemma \ref{lemma-2} and \eqref{eq-theorem-2-5} leads to 
\begin{eqnarray*}
    \|\FF_{\widetilde \theta^\omega} - \FF_{\theta^\omega}\|_{F^{*\omega}, 2}^2 &=& \int \left\{\FF_{\widetilde \theta^\omega}(z) - \FF_{\theta^\omega}(z)\right\}^2 d\left\{ F^{*\omega}(z) - \FF^\omega(z)\right\} + \|\FF_{\widetilde \theta^\omega} - \FF_{\theta^\omega}\|_{\FF^\omega, 2}^2 \\
    &=& o(1), \text{ as } p\to \infty, \label{eq-theorem-2-11}
\end{eqnarray*}
or equivalently 
\begin{eqnarray*}
    \|\FF_{\widetilde \theta^\omega} - \FF_{\theta^\omega}\|_{F^{*\omega}, 2} = 0. \label{eq-theorem-2-12}
\end{eqnarray*}
This together with Condition \ref{Condition-4} concludes for every $m_1, m_2 \in \{0,1\}; k = 1,\ldots, K$, 
\begin{eqnarray}
\widetilde \lambda_0^\omega = \lambda_0^\omega \quad \text{and} \quad  \widetilde F_{m_1,k}^\omega(\cdot) = F_{m_1,k}^\omega(\cdot) \quad a.s. \text{ in }F_{m_2,k}^{*\omega}(\cdot). \label{eq-theorem-2-13}
\end{eqnarray}

Based on the definition of $\widetilde \lambda_0^\omega$ and $\lambda_0^\omega$, we immediately have 
\begin{eqnarray*}
    \widehat \lambda_0^\omega = \widetilde \lambda_0^\omega + o(1) = \lambda_0^\omega + o(1) = p_0/p + o(1),
\end{eqnarray*}
which verifies Part (a) of the theorem. 

Furthermore, based on Condition \ref{Condition-2}, we have, for $m=0,1$, 
\begin{eqnarray*}
\sup_{z} |\FF_m(z) - F_m^*(z)| = o(1), \quad a.s., \label{eq-theorem-2-14}
\end{eqnarray*}
which implies for every $m=0,1; k=1,\ldots, K$,
\begin{eqnarray}
    \sup_{z_k \in \mathbb{R}} |\FF_{m,k}(z_k) - F_{m,k}^*(z_k)| &=& \sup_{z_k \in \mathbb{R}} |\FF_m(\infty, \ldots, z_k, \ldots, \infty) - F_m^*(\infty, \ldots, z_k, \ldots, \infty)|\nonumber \\
    &\leq & \sup_{z} |\FF_m(z) - F_m^*(z)| = o(1), \quad a.s. \label{eq-theorem-2-15}
\end{eqnarray}

For $m_1, m_2 \in \{0,1\}$, $k = 1,\ldots, K$, we have
\begin{eqnarray}
    &&\int \left\{ \widehat \FF_{m_1,k}^\omega(z_k) - \FF_{m_1,k}^\omega(z_k) \right\}^2 d \FF_{m_2,k}^\omega(z_k) \nonumber \\
    &=& \int \left\{ \widehat \FF_{m_1,k}^\omega(z_k) - \FF_{m_1,k}^\omega(z_k) \right\}^2 d \left\{\FF_{m_2,k}^\omega(z_k) - F_{m_2,k}^{*\omega}(z_k) \right\} \nonumber \\
    && + \int \left\{ \widehat \FF_{m_1,k}^\omega(z_k) - \FF_{m_1,k}^\omega(z_k) \right\}^2 d F_{m_2,k}^{*\omega}(z_k).  \label{eq-theorem-2-16}
\end{eqnarray}
Consider the two terms on the right hand of \eqref{eq-theorem-2-16}. The first term is $o(1)$, based on \eqref{eq-theorem-2-15} and Lemma \ref{lemma-2}; the second term  is also $o(1)$, based on \eqref{eq-theorem-2-13}, the definition of $\widetilde F_{m_1,k}^\omega(\cdot)$ $F_{m_1,k}^\omega(\cdot)$, and by applying the dominant convergence theorem. As a consequence
\begin{eqnarray*}
    \int \left\{ \widehat \FF_{m_1,k}^\omega(z_k) - \FF_{m_1,k}^\omega(z_k) \right\}^2 d \FF_{m_2,k}^\omega(z_k) = o(1), 
\end{eqnarray*}
which together with the fact that $\omega \in \Omega$ is arbitrary and $P(\Omega) = 1$ verifies \eqref{marginal L2 with mixture}. We complete the proof of this theorem. 

\subsection{Proof of Theorem \ref{median fdr theorem}}

We first establish the convergence of the empirical distribution based on $\{M_i: i\in \mathH_0\}$ and $\{M_i: i\in \mathH_1\}$, where 
\begin{eqnarray*}
    M_i = \text{median}\{\FF_{0,1}(Z_{i,1}), \ldots, \FF_{0,K}(Z_{i,K})\}. 
\end{eqnarray*}
The intuition is that $M_i$ mimics the true value of $\widehat M_i$; thus, the corresponding empirical distributions mimic those of $\widehat M_i$. We have the following lemmas.

\begin{lemma} \label{key lemma in median approcah}
    Assume Condition \ref{Condition-1}. We have  
    \begin{eqnarray}
        \sup_{t\in \mathbb{R}} \left|M(t)-\frac{1}{p_0}\sum\limits_{i \in \mathcal{H}_0}I(M_i\leq t)\right|=o(1),\,a.s., \label{eq-key lemma in median approcah-1}
    \end{eqnarray}  
    where we recall that $M(\cdot)$ is the c.d.f. of the median of $K$ i.i.d. $\mathrm{U}[0,1]$ random variables. 
\end{lemma}
\proof Without loss of generality, we assume that $K$ is an odd number and thus we write $K= 2K_1+1$; the case that $K$ is an even number can be developed similarly. Let 
\begin{eqnarray*}
    \widetilde U = \text{median}\{U_1, \ldots, U_K\}, \quad \text{with } U_i\sim_{i.i.d.} \text{Uniform}[0,1]. \label{eq-key lemma in median approcah-2}
\end{eqnarray*}
Then,
\begin{eqnarray}
    M(t) &=& P(\widetilde U \le t) = P\left( \cup_{k=K_1+1}^K\cup_{S\subset \{1,\ldots, K\}; |S| = k} \left\{\cap_{j\in S} \{U_j\leq t\} \cap_{j\notin S}\{U_j>t\} \right\} \right) \nonumber \\
    &=& \sum_{k=K_1+1}^K \sum_{|S|=k} \left\{\prod_{j\in S}P(U_j\leq t) \prod_{j\notin S}P(U_j>t)\right\} \nonumber \\
    &=& \sum_{k=K_1+1}^K \sum_{|S|=k} \left\{\prod_{j\in S} t\prod_{j\notin S} (1-t)\right\}. \label{eq-key lemma in median approcah-3}
\end{eqnarray}
For $k=1,\ldots, K$, denote
\begin{eqnarray}
    H_k(t) = \frac{1}{p_0} \sum_{i\in \mathH_0} I\left\{\FF_{0,k}(Z_{i,k})\leq t\right\} = \frac{1}{p_0} \sum_{i=1}^{p_0} I\left\{\FF_{0,k}(X_{i,k})\leq t) \right\}. \label{eq-key lemma in median approcah-4}
\end{eqnarray}
The term on the left hand side of \eqref{eq-key lemma in median approcah-1} satisfies
\begin{eqnarray}
    \sup_{t\in \mathbb{R}}\left|M(t)-\frac{1}{p_0}\sum\limits_{i \in \mathcal{H}_0}I(M_i\leq t)\right| \leq \mathJ_1 +\mathJ_2, \label{eq-key lemma in median approcah-5}
\end{eqnarray}
where
\begin{eqnarray}
    \mathJ_1 &=& \sup\limits_{t\in \mathbb{R}}\left|M(t)-\sum\limits_{k=K_1+1}^K \sum\limits_{|S|=k}\left[\prod\limits_{j \in S}H_j(t)\prod\limits_{j\notin S}\{1-H_j(t)\}\right]\right| \label{eq-key lemma in median approcah-6-1} \\
    \mathJ_2 &=& \sup\limits_{t}\left|\sum\limits_{k=K_1+1}^K \sum\limits_{|S|=k}\left[\prod\limits_{j \in S}H_j(t)\prod\limits_{j\notin S}\{1-H_j(t)\}\right]-\frac{1}{p_0}\sum\limits_{i \in \mathcal{H}_0}I(M_i\leq t) \right|.  \label{eq-key lemma in median approcah-6}  
\end{eqnarray}
Next, we consider $\mathJ_1$ and $\mathJ_2$ separately. For $\mathJ_1$, note that $\FF_{0,k}(\cdot)$ is the empirical c.d.f. of $\{X_{i,k},i=1,\ldots, p_0\}$, thus, 
\begin{eqnarray*}
    \{\FF_{0,k}(X_{1,k}),\ldots, \FF_{0,k}(X_{p_0,k}) \} = \left\{\frac{1}{p_0},\ldots, \frac{p_0}{p_0} \right\}, \label{eq-key lemma in median approcah-7}
\end{eqnarray*}
which together with the definition of $H_k(t)$ given in \eqref{eq-key lemma in median approcah-4}, $H_k(t)$ is the empirical c.d.f. of the set $\{1/p_0,\ldots, p_0/p_0\}$. We have for every $t\in [0,1]$, 
\begin{eqnarray*}
    H_k(t) = \frac{\lfloor p_0 t\rfloor}{p_0},  \label{eq-key lemma in median approcah-8}
\end{eqnarray*}
which together with \eqref{eq-key lemma in median approcah-3} and \eqref{eq-key lemma in median approcah-6-1}, and applying Lemma \ref{lipschitz prod lemma}, leads to 
\begin{eqnarray}
    \mathJ_1 &=& \sup\limits_{t\in \mathbb{R}}\left|\sum\limits_{k=K_1+1}^K \sum\limits_{|S|=k}\left[\prod_{j\in S} t\prod_{j\notin S} (1-t) - \prod\limits_{j \in S}H_j(t)\prod\limits_{j\notin S}\{1-H_j(t)\}\right]\right| \nonumber \\
    &\leq& \sup\limits_{t\in \mathbb{R}}\sum\limits_{k=K_1+1}^K \sum\limits_{|S|=k} \left|\prod_{j\in S} t\prod_{j\notin S} (1-t) - \prod\limits_{j \in S}H_j(t)\prod\limits_{j\notin S}\{1-H_j(t)\}\right| \nonumber \\
    &\leq& \sup\limits_{t\in \mathbb{R}}\sum\limits_{k=K_1+1}^K \sum\limits_{|S|=k} \sum_{j=1}^K|t- H_j(t)| \nonumber \\
    &=& \sup\limits_{t\in \mathbb{R}}\sum\limits_{k=K_1+1}^K \sum\limits_{|S|=k} \sum_{j=1}^K \left|t -  \frac{\lfloor p_0 t\rfloor}{p_0}\right| \nonumber \\
    &\leq & \sum\limits_{k=K_1+1}^K \sum\limits_{|S|=k} \sum_{j=1}^K \frac{1}{p_0} = o(1), \quad a.s. \label{eq-key lemma in median approcah-9}
\end{eqnarray}
We proceed to consider $\mathJ_2$. We can write 
\begin{align*} \label{eq-key lemma in median approcah-10}
    \frac{1}{p_0}\sum\limits_{i \in H_0}I(M_i\leq t)&=\frac{1}{p_0}\sum\limits_{i \in \mathcal{H}_0}\sum\limits_{k=K_1+1}^K\sum\limits_{|S|=k}\prod\limits_{j\in S}I\{\FF_{0,j}(Z_{i,j})\leq t \}\prod_{j \notin S}I\{\FF_{0,j}(Z_{i,j})>t\}\\
    &=\sum\limits_{k=K_1+1}^K \sum\limits_{|S|=k}\left[\frac{1}{p_0}\sum\limits_{i=1}^{p_0}\prod\limits_{j\in S}I\{\FF_{0,j}(X_{i,j})\leq t \}\prod_{j \notin S}I\{\FF_{0,j}(X_{i,j})>t\}\right], 
\end{align*}
which, together with the definition of $\mathJ_2$ given by \eqref{eq-key lemma in median approcah-6}, results in
\begin{eqnarray}
    \mathJ_2 \leq \sum\limits_{k=K_1+1}^K \sum\limits_{|S|=k} \sup_{t\in \mathbb{R}} \mathJ_{2,S}(t), \label{eq-key lemma in median approcah-11}
\end{eqnarray}
where
\begin{eqnarray}
    \mathJ_{2,S}(t) = \left|\prod\limits_{j \in S}H_j(t)\prod\limits_{j\notin S}\{1-H_j(t)\} - \frac{1}{p_0}\sum\limits_{i=1}^{p_0}\prod\limits_{j\in S}I\{\FF_{0,j}(X_{i,j})\leq t \}\prod_{j \notin S}I\{\FF_{0,j}(X_{i,j})>t\} \right|.  \label{eq-key lemma in median approcah-12}
\end{eqnarray}
On the other hand, denote by $X_{(1),k} \leq \ldots \leq X_{(p_0),k}$ the order statistics of $X_{1,k}, \ldots, X_{p_0,k}$.
Then, for every $i=1,\ldots, p_0$ and $t\in[0,1]$, 
``$\FF_{0,k}(X_{i,k})\leq t$" if and only if  ``$X_{i,k} \leq X_{(\lfloor tp_0\rfloor), k}$", which implies
\begin{eqnarray}
    I\left\{\FF_{0,k}(X_{i,k})\leq t\right\} = I\left\{X_{i,k} \leq X_{(\lfloor tp_0\rfloor), k} \right\}.  \label{eq-key lemma in median approcah-13}
\end{eqnarray}
Based on \eqref{eq-key lemma in median approcah-13} and referring to the definition of $H_k(t)$ defined by \eqref{eq-key lemma in median approcah-4}, we have
\begin{eqnarray}
    H_k(t) = \frac{1}{p_0} \sum_{i=1}^{p_0} I\left\{\FF_{0,k}(X_{i,k})\leq t) \right\} = \frac{1}{p_0} \sum_{i=1}^{p_0} I\left\{ X_{i,k} \leq X_{(\lfloor tp_0\rfloor), k} \right\} = \FF_{0,k}\left(X_{(\lfloor tp_0\rfloor), k}\right). \label{eq-key lemma in median approcah-14}
\end{eqnarray}
Combining \eqref{eq-key lemma in median approcah-12}, \eqref{eq-key lemma in median approcah-13}, and \eqref{eq-key lemma in median approcah-14}, we have 
\begin{eqnarray*}
    \mathJ_{2,S}(t) &=& \left|\prod\limits_{j \in S}\FF_{0,j}\left(X_{(\lfloor tp_0\rfloor), j}\right) \prod\limits_{j\notin S}\left\{1-\FF_{0,j}\left(X_{(\lfloor tp_0\rfloor), j}\right) \right\} \right.\\
    &&- \left. \frac{1}{p_0}\sum\limits_{i=1}^{p_0}\prod\limits_{j\in S}I\{X_{i,j} \leq X_{(\lfloor tp_0\rfloor), j} \}\prod_{j \notin S}I\{X_{i,j} > X_{(\lfloor tp_0\rfloor), j}\} \right|. \label{eq-key lemma in median approcah-15}
\end{eqnarray*}
Note that $S$ is a subset of $\{1,\ldots, K\}$, and refer to the partition \( \mathbb{R}^K = \cup_{h=1}^Q A_h(t)\) given in Section \ref{section-notations}, consider the $h_S$, such that $\Delta_{h_S} = (\delta_1, \ldots, \delta_K)$ with $\delta_k= I(k\in S), k = 1,\ldots, K$. It is straightforward to check that 
\begin{eqnarray*}
    \mathJ_{2,S}(t) = \left| \FF_{0,\theta_0,h_S}(X_{(\lfloor tp_0\rfloor)}) - \FF_{0,h_S}(X_{(\lfloor tp_0\rfloor)}) \right|, \label{eq-key lemma in median approcah-16}
\end{eqnarray*}
where $X_{(\lfloor tp_0\rfloor)} = \left( X_{(\lfloor tp_0\rfloor),1}, \ldots, X_{(\lfloor tp_0\rfloor),K} \right)^T$
As a consequence, based on Condition \ref{Condition-1} and Lemma \ref{lemma-1}, we have
\begin{eqnarray}
\sup_{t\in \mathbb{R}}  \mathJ_{2,S}(t) &=& \sup_{t\in \mathbb{R}} \left| \FF_{0,\theta_0,h_S}(X_{(\lfloor tp_0\rfloor)}) - \FF_{0,h_S}(X_{(\lfloor tp_0\rfloor)}) \right| \nonumber \\
&\leq& \sup_{z\in \mathbb{R}^K} \left| \FF_{0,\theta_0,h_S}(z) - \FF_{0,h_S}(z) \right| = o(1), \quad a.s. \label{eq-key lemma in median approcah-17}
\end{eqnarray} 
Combining \eqref{eq-key lemma in median approcah-11} and \eqref{eq-key lemma in median approcah-17}, we have 
\begin{eqnarray*}
    \mathJ_2 = o(1), \quad a.s., 
\end{eqnarray*}
which further combined with \eqref{eq-key lemma in median approcah-5} and \eqref{eq-key lemma in median approcah-9} verifies \eqref{eq-key lemma in median approcah-1}. We complete the proof of this lemma. \epf

\begin{lemma} \label{key lemma in median approcah 1}
    Assume Condition \ref{Condition-1}. We have
\begin{eqnarray*}
    \sup_{t\in [0,1]} \left|\frac{1}{p_1}\sum\limits_{i \in \mathH_1}I(M_i\leq t) - G(t)\right| = o(1), \quad a.s.,
\end{eqnarray*}
where 
\begin{eqnarray}
    G(t) &=& \sum\limits_{k=K_1+1}^K\sum\limits_{|S|=k} \prod_{j\in S} G_j(t) \prod_{j \notin S}\left\{1-G_j(t) \right\} \nonumber \\
    G_j(t) &=& \frac{1}{p_1} \sum_{i\in \mathH_1} I\left\{\FF_{0,j}(Z_{i,j}) \leq t\right\}, \quad \text{for } j=1,\ldots, K. \label{eq-def-G-t}
\end{eqnarray}
\end{lemma}
\proof Let $X_{(1),k} \leq \ldots \leq X_{(p_0),k}$ be the order statistics of $X_{1,k}, \ldots, X_{p_0,k}$; likewise, let $Y_{(1),k} \leq \ldots \leq Y_{(p_1),k}$ be the order statistics of $Y_{1,k}, \ldots, Y_{p_1,k}$. For each $t\in [0,1]$ and $k=1,\ldots, K$, define
\begin{eqnarray*}
    s_{t,k} = \max\left\{i: Y_{(i),k} < X_{(\lfloor p_0t\rfloor + 1), k}\right\}, \label{eq-key lemma in median approcah 1-1}
\end{eqnarray*}
which indicates 
\begin{eqnarray*}
Y_{(1),k} \leq \ldots \leq Y_{(s_{t,k}), k}< X_{(\lfloor p_0t\rfloor + 1), k} \leq Y_{(s_{t,k}+1), k}\leq \ldots \leq Y_{(p_1), k}. \label{eq-key lemma in median approcah 1-2}
\end{eqnarray*}
Based on the definition of $\FF_{0,k}(\cdot)$, we have, for any $t\in[0,1]$, 
\begin{eqnarray}
    \FF_{0,k}(Y_{i,k}) \leq t &\Leftrightarrow& \FF_{0,k}(Y_{i,k}) \leq \frac{\lfloor p_0t \rfloor}{p_0} \Leftrightarrow Y_{i,k} < X_{(\lfloor p_0t \rfloor +1), k} \nonumber \\ &\Leftrightarrow& Y_{i,k} \leq Y_{(s_{t,k}), k}. \label{eq-key lemma in median approcah 1-3}
\end{eqnarray}
Referring to $G_j(t), j=1,\ldots, K$ defined in \eqref{eq-def-G-t},  we have 
\begin{eqnarray*}
    G_j(t) &=& \frac{1}{p_1} \sum_{i\in \mathH_1} I\left\{\FF_{0,k}(Z_{i,j}) \leq t \right\} = \frac{1}{p_1} \sum_{i=1}^{p_1} I\left\{\FF_{0,j}(Y_{i,j}) \leq t \right\}\\
    &=& \frac{1}{p_1} \sum_{i=1}^{p_1} I\left\{ Y_{i,j} \leq Y_{(s_{t,j}), j}\right\} = \FF_{1,j}\left(Y_{(s_{t,j}), j}\right) \label{eq-key lemma in median approcah 1-4}
\end{eqnarray*}
Similarly to the proof of Lemma \ref{key lemma in median approcah}, we assume that $K = 2K_1 + 1$ is an odd number, and $S$ denotes a subset of $\{1,\ldots, K\}$. We have 
\begin{eqnarray}
    G(t) &=& \sum\limits_{k=K_1+1}^K\sum\limits_{|S|=k} \prod_{j\in S} G_j(t) \prod_{j \notin S}\left\{1-G_j(t) \right\} \nonumber \\
    &=& \sum\limits_{k=K_1+1}^K\sum\limits_{|S|=k} \prod_{j\in S} \FF_{1,j}\left(Y_{(s_{t,j}), j}\right) \prod_{j \notin S}\left\{1-\FF_{1,j}\left(Y_{(s_{t,j}), j}\right) \right\} \nonumber \\
    &=& \sum\limits_{k=K_1+1}^K\sum\limits_{|S|=k}\FF_{1,\theta_0,h_S}(Y_{(s_t)}),
    \label{eq-key lemma in median approcah 1-5}
\end{eqnarray}
where $Y_{(s_t)} = (Y_{(s_{t,1}), 1}, \ldots, Y_{(s_{t,K}), K})^T$, ``$\FF_{1,\theta_0,h}$" is defined in \eqref{F-F-h-relation}, and $h_S$ refers to the partition defined in Section~\ref{section-notations}, such that $\Delta_{h_S} = (\delta_1, \ldots, \delta_K)$ with $\delta_k = I(k \in S)$ for $k = 1, \ldots, K$.
Furthermore, based on \eqref{eq-key lemma in median approcah 1-3}, we have 
\begin{eqnarray}
    \nonumber \frac{1}{p_1}\sum\limits_{i \in \mathH_1}I(M_i\leq t)&=&\frac{1}{p_1}\sum\limits_{i \in \mathcal{H}_1}\sum\limits_{k=K_1 +1}^K\sum\limits_{|S|=k}\prod\limits_{j\in S}I(\FF_{0,j}(Z_{i,j})\leq t )\prod_{j \notin S}I(\FF_{0,j}(Z_{i,j})>t)\nonumber \\
    &=&\sum\limits_{k=K_1 + 1}^K\sum\limits_{|S|=k}\left\{\frac{1}{p_1}\sum\limits_{i=1}^{p_1}\prod\limits_{j\in S}I(\FF_{0,j}(Y_{i,j})\leq t )\prod_{j \notin S}I(\FF_{0,j}(Y_{i,j})>t)\right\} \nonumber \\
    &=& \sum\limits_{k=K_1 + 1}^K\sum\limits_{|S|=k}\left\{\frac{1}{p_1}\sum\limits_{i=1}^{p_1}\prod\limits_{j\in S}I(Y_{i,j} \leq Y_{(s_{t,j}), j} )\prod_{j \notin S}I(Y_{i,j} > Y_{(s_{t,j}), j})\right\} \nonumber \\
    &=& \sum\limits_{k=K_1 + 1}^K\sum\limits_{|S|=k} \FF_{1,h_S}(Y_{(s_t)}), \label{eq-key lemma in median approcah 1-6}
\end{eqnarray}
with ``$\FF_{1,h}$" defined in \eqref{F-F-h-relation}. Combining \eqref{eq-key lemma in median approcah 1-5} and \eqref{eq-key lemma in median approcah 1-6}, we have 
\begin{eqnarray*}
    \sup_{t\in [0,1]} \left|\frac{1}{p_1}\sum\limits_{i \in \mathH_1}I(M_i\leq t) - G(t)\right| &\leq& \sum\limits_{k=K_1 + 1}^K\sum\limits_{|S|=k} \sup_{t\in[0,1]} \left| \FF_{1,h_S}(Y_{(s_t)}) - \FF_{1,\theta_0, h_S}(Y_{(s_t)}) \right|\\
    &\leq & \sum\limits_{k=K_1 + 1}^K\sum\limits_{|S|=k} \sup_{z\in \mathbb{R}^K} \left| \FF_{1,h_S}(z) - \FF_{1,\theta_0, h_S}(z) \right|\\
    &=& o(1), \quad a.s.,
\end{eqnarray*}
based on Condition \ref{Condition-1} and Lemma \ref{lemma-1}. We complete the proof of this lemma. \epf

We proceed to show Theorem \ref{median fdr theorem}. Referring to $\widehat \FDP(t)$, $\FDP(t)$, and $t_\alpha$ respectively defined in \eqref{eq-def-hat-FDP-t-median}, \eqref{eq-def-FDP-t-median}, and \eqref{eq-t-alpha-median}, we have,
\begin{eqnarray}
    \FDP(t_\alpha) \leq  \widehat \FDP(t_\alpha) + \left| \widehat \FDP(t_\alpha) - \FDP(t_\alpha) \right| \leq \alpha + \left| \widehat \FDP(t_\alpha) - \FDP(t_\alpha) \right|, \label{eq-median fdr theorem-1}
\end{eqnarray}
and based on Part (a) of Theorem \ref{theorem-2}, for any $t\in[0,1]$, we have 
\begin{eqnarray}
    && \widehat \FDP(t) - \FDP(t) = \frac{\widehat \lambda_0\cdot M(t) - \frac{1}{p} \sum_{i\in \mathH_0} I(\widehat M_i \leq t)}{\frac{1}{p}\cdot \max\left\{\sum\limits_{i=1}^pI(\widehat M_i\leq t),1\right\}} \nonumber\\
    &=& \frac{(\widehat \lambda_0 - \lambda_{0,0})\cdot M(t) -  \lambda_{0,0}\left\{\frac{1}{p_0} \sum_{i\in \mathH_0} I(\widehat M_i \leq t) - M(t)\right\}}{\frac{1}{p}\cdot \max\left\{\sum\limits_{i=1}^pI(\widehat M_i\leq t),1\right\}} \nonumber \\
    &=& \frac{o(1) -  \lambda_{0,0}\left\{\frac{1}{p_0} \sum_{i\in \mathH_0} I(\widehat M_i \leq t) - M(t)\right\}}{\frac{1}{p}\cdot \max\left\{\sum\limits_{i=1}^pI(\widehat M_i\leq t),1\right\}}, \quad a.s. \label{eq-median fdr theorem-2}
\end{eqnarray}
Furthermore, based on Lemma \ref{key lemma in median approcah}, we have 
\begin{eqnarray}
    && \sup_{t\in \mathbb{R}}\left|\frac{1}{p_0} \sum_{i\in \mathH_0} I(\widehat M_i \leq t) - M(t)\right| \nonumber \\ &\leq&   \sup_{t\in \mathbb{R}} \left|M(t)-\frac{1}{p_0}\sum\limits_{i \in \mathcal{H}_0}I(M_i\leq t)\right| + \sup_{t \in \mathbb{R}} \left| \frac{1}{p_0}\sum\limits_{i \in \mathcal{H}_0}I(M_i\leq t) - \frac{1}{p_0} \sum_{i\in \mathH_0} I(\widehat M_i \leq t) \right| \nonumber \\
    &=& \sup_{t \in \mathbb{R}} \left| \mathJ_3(t) \right| + o(1), \quad a.s. \label{eq-median fdr theorem-3}
\end{eqnarray}
where
\begin{eqnarray}
    \mathJ_3(t) = \frac{1}{p_0}\sum\limits_{i \in \mathcal{H}_0}I(M_i\leq t) - \frac{1}{p_0} \sum_{i\in \mathH_0} I(\widehat M_i \leq t). \label{eq-median fdr theorem-4}
\end{eqnarray}
Next, we derive the asymptotic property of $\sup_{t\in \mathbb{R}} |\mathJ_3(t)|$. For any $\epsilon >0$, let
\begin{eqnarray*}
    \delta_{p, k,\epsilon}=\frac{1}{p_0}\sum\limits_{i\in \mathH_0} I\left(\left|\widehat{\FF}_{0,k}(Z_{i,k})-\FF_{0,k}(Z_{i,k})\right|>\epsilon\right). \label{eq-median fdr theorem-5}
\end{eqnarray*}
Based on Part (b) of Theorem \ref{theorem-2} and Chebyshev's inequality, we have 
\begin{eqnarray}
    \delta_{p, k,\epsilon} & = & \frac{1}{p_0}\sum\limits_{i\in \mathH_0} I\left(\left|\widehat{\FF}_{0,k}(Z_{i,k})-\FF_{0,k}(Z_{i,k})\right|>\epsilon\right) \nonumber \\
    & = & P\left( \left|\widehat{\FF}_{0,k}(X)-\FF_{0,k}(X)\right| > \epsilon \right), \quad X \sim \FF_{0,k}(\cdot) \nonumber \\
    &\leq & \frac{E\left\{ \widehat{\FF}_{0,k}(X)-\FF_{0,k}(X)\right\}^2}{\epsilon^2} \nonumber \\
    &= & \frac{1}{\epsilon^2} \int \left\{ \widehat{\FF}_{0,k}(x)-\FF_{0,k}(x) \right\}^2 d \FF_{0,k}(x) = o(1), \quad a.s. \label{eq-median fdr theorem-6}
\end{eqnarray}
where ``$P$" and ``$E$" are computed conditioned on the data, in the sense $\widehat \FF_{0,k}(\cdot)$ and $\FF_{0,k}(\cdot)$ are treated as non-random functions.  Denote 
\begin{eqnarray}
    \mathS_{\epsilon, k} = \left\{i\in \mathH_0: \left| \widehat{\FF}_{0,k}(Z_{i,k})-\FF_{0,k}(Z_{i,k}) \right| > \epsilon \right\} \subset \mathH_0, \label{eq-median fdr theorem-7}
\end{eqnarray}
and let $\mathS_\epsilon = \cup_{k=1}^K \mathS_{\epsilon, k}$. Then $\mathS_\epsilon \subset \mathH_0$, and based on \eqref{eq-median fdr theorem-6} and \eqref{eq-median fdr theorem-7}, we have 
\begin{eqnarray}
\frac{|\mathS_{\epsilon}|}{p_0} \leq \frac{\sum_{k=1}^K |\mathS_{\epsilon, k}|}{p_0} = \sum_{k=1}^K \delta_{p, k,\epsilon} = o(1), \quad a.s. \label{eq-median fdr theorem-8}
\end{eqnarray}
Referring to \eqref{eq-median fdr theorem-4}, we have 
\begin{eqnarray}
    |\mathJ_3(t)| &\leq& \left| \frac{1}{p_0}\sum\limits_{i \in \mathS_\epsilon}I(M_i\leq t) - \frac{1}{p_0} \sum_{i\in \mathS_\epsilon} I(\widehat M_i \leq t) \right| \nonumber\\
    && + \left| \frac{1}{p_0}\sum\limits_{i \in \mathH_0 \setminus \mathS_\epsilon}I(M_i\leq t) - \frac{1}{p_0} \sum_{i\in \mathH_0 \setminus \mathS_\epsilon} I(\widehat M_i \leq t) \right| \nonumber \\
    &\leq & \frac{2|\mathS_\epsilon|}{p_0} + \left| \frac{1}{p_0}\sum\limits_{i \in \mathH_0 \setminus \mathS_\epsilon}I(M_i\leq t) - \frac{1}{p_0} \sum_{i\in \mathH_0 \setminus \mathS_\epsilon} I(\widehat M_i \leq t) \right| \nonumber  \\
    &=& o(1) + \mathJ_{3,1}(t), \quad a.s. \label{eq-median fdr theorem-9}
\end{eqnarray}
where $o(1)$ is uniform in $t\in [0,1]$, 
\begin{eqnarray}
    \mathJ_{3,1}(t) = \left| \frac{1}{p_0}\sum\limits_{i \in \mathH_0 \setminus \mathS_\epsilon}I(M_i\leq t) - \frac{1}{p_0} \sum_{i\in \mathH_0 \setminus \mathS_\epsilon} I(\widehat M_i \leq t) \right|.  \label{eq-median fdr theorem-10}
\end{eqnarray}
For $\mathJ_{3,1}(t)$, based on the definition of $\mathS_\epsilon$, for every $i \in \mathH_0 \setminus \mathS_\epsilon$,  
\begin{eqnarray*}
\left| \widehat \FF_{0,k}(Z_{i,k})  - \FF_{0,k}(Z_{i,k})\right| \leq \epsilon, \quad \text{for every }k=1,\ldots, K, \label{eq-median fdr theorem-11}
\end{eqnarray*}
and thus based on Lemma \ref{simple property for median of two sequence}, we have
\begin{eqnarray*}
    \left|\widehat M_i - M_i \right| \leq \epsilon, \label{eq-median fdr theorem-12}
\end{eqnarray*}
which further implies 
\begin{eqnarray}
    \frac{1}{p_0}\sum\limits_{i \in \mathcal{H}_0\setminus \mathS_\epsilon}I(M_i\leq t-\epsilon)\leq\frac{1}{p_0}\sum\limits_{i \in \mathcal{H}_0\setminus \mathS_\epsilon}I(\hat{M}_{i}\leq t)\leq \frac{1}{p_0}\sum\limits_{i \in \mathcal{H}_0\setminus \mathS_\epsilon}I(M_i\leq t+\epsilon). \label{eq-median fdr theorem-13}
\end{eqnarray}
Clearly
\begin{eqnarray}
    \frac{1}{p_0}\sum\limits_{i \in \mathcal{H}_0\setminus \mathS_\epsilon}I(M_i\leq t-\epsilon)\leq\frac{1}{p_0}\sum\limits_{i \in \mathcal{H}_0\setminus \mathS_\epsilon}I({M}_{i}\leq t)\leq \frac{1}{p_0}\sum\limits_{i \in \mathcal{H}_0\setminus \mathS_\epsilon}I(M_i\leq t+\epsilon). \label{eq-median fdr theorem-14}
\end{eqnarray}
Combining \eqref{eq-median fdr theorem-10}, \eqref{eq-median fdr theorem-13} and \eqref{eq-median fdr theorem-14}, we have 
\begin{eqnarray}
    \mathJ_{3,1}(t) &\leq& \left|\frac{1}{p_0}\sum\limits_{i \in \mathcal{H}_0\setminus \mathS_\epsilon}I(M_i\leq t+\epsilon)- \frac{1}{p_0}\sum\limits_{i \in \mathcal{H}_0\setminus \mathS_\epsilon}I(M_i\leq t-\epsilon) \right| \nonumber \\
    & = & \frac{1}{p_0} \sum_{i\in \mathcal{H}_0\setminus \mathS_\epsilon} I\left\{ M_i \in (t-\epsilon, t+\epsilon] \right\}  = o(1) + M(t+\epsilon) - M(t-\epsilon), \quad a.s. \label{eq-median fdr theorem-15} 
\end{eqnarray}
based on \eqref{eq-median fdr theorem-8} and Lemma \ref{key lemma in median approcah}, where $o(1)$ is uniform in $t\in[0,1]$.  Combining \eqref{eq-median fdr theorem-3}, \eqref{eq-median fdr theorem-9}, \eqref{eq-median fdr theorem-15}, and based on the fact that $\epsilon>0$ is arbitrary,  leads to 
\begin{eqnarray}
    \sup_{t\in \mathbb{R}}\left|\frac{1}{p_0} \sum_{i\in \mathH_0} I(\widehat M_i \leq t) - M(t)\right| = o(1), \quad a.s., \label{eq-median fdr theorem-16} 
\end{eqnarray}
which establishes the asymptotic property of the numerator of on the far right hand side of \eqref{eq-median fdr theorem-2}. We proceed to consider the asymptotic property of its denominator.  First, we consider the asymptotic property of 
\begin{eqnarray}
\mathJ_4(t) = \frac{1}{p_1} \sum_{i\in \mathH_1}I(\widehat M_i\leq t) - \frac{1}{p_1} \sum_{i\in \mathH_1} I(M_i \leq t) \label{eq-median fdr theorem-17} 
\end{eqnarray}
For any $\epsilon >0$, let
\begin{eqnarray*}
    \widetilde \delta_{p, k,\epsilon}=\frac{1}{p_1}\sum\limits_{i\in \mathH_1} I\left(\left|\widehat{\FF}_{0,k}(Z_{i,k})-\FF_{0,k}(Z_{i,k})\right|>\epsilon\right).  \label{eq-median fdr theorem-18} 
\end{eqnarray*}
Based on Part (b) of Theorem \ref{theorem-2} and Chebyshev's inequality, we have
\begin{eqnarray}
    \widetilde \delta_{p, k,\epsilon} & = & \frac{1}{p_1}\sum\limits_{i\in \mathH_1} I\left(\left|\widehat{\FF}_{0,k}(Z_{i,k})-\FF_{0,k}(Z_{i,k})\right|>\epsilon\right) \nonumber \\
    & = & P\left( \left|\widehat{\FF}_{0,k}(Y)-\FF_{0,k}(Y)\right| > \epsilon \right), \quad Y \sim \FF_{1,k}(\cdot) \nonumber \\
    &\leq & \frac{E\left\{ \widehat{\FF}_{0,k}(Y)-\FF_{0,k}(Y)\right\}^2}{\epsilon^2} \nonumber \\
    &= & \frac{1}{\epsilon^2} \int \left\{ \widehat{\FF}_{0,k}(y)-\FF_{0,k}(y) \right\}^2 d \FF_{1,k}(y) = o(1), \quad a.s. \label{eq-median fdr theorem-19} 
\end{eqnarray}
where ``$P$" and ``$E$" are computed conditioned on the data, in the sense $\widehat \FF_{0,k}(\cdot)$ and $\FF_{0,k}(\cdot)$ are treated as non-random functions. Denote 
\begin{eqnarray}
    \widetilde \mathS_{\epsilon, k} = \left\{i\in \mathH_1: \left| \widehat{\FF}_{0,k}(Z_{i,k})-\FF_{0,k}(Z_{i,k}) \right| > \epsilon \right\} \subset \mathH_1, \label{eq-median fdr theorem-20} 
\end{eqnarray}
and let $\widetilde \mathS_\epsilon = \cup_{k=1}^K \widetilde \mathS_{\epsilon, k}$. Then $\widetilde \mathS_\epsilon \subset \mathH_1$, and based on \eqref{eq-median fdr theorem-19} and \eqref{eq-median fdr theorem-20}, we have 
\begin{eqnarray}
\frac{|\widetilde \mathS_{\epsilon}|}{p_1} \leq \frac{\sum_{k=1}^K |\widetilde \mathS_{\epsilon, k}|}{p_1} = \sum_{k=1}^K \widetilde \delta_{p, k,\epsilon} = o(1), \quad a.s. \label{eq-median fdr theorem-21} 
\end{eqnarray}
Referring to the definition of $\mathJ_4(t)$ in \eqref{eq-median fdr theorem-17}, we have 
\begin{eqnarray}
    |\mathJ_4(t)| &\leq& \left| \frac{1}{p_1}\sum\limits_{i \in \widetilde \mathS_\epsilon}I(M_i\leq t) - \frac{1}{p_1} \sum_{i\in \widetilde \mathS_\epsilon} I(\widehat M_i \leq t) \right| \nonumber\\
    && + \left| \frac{1}{p_1}\sum\limits_{i \in \mathH_1 \setminus \widetilde \mathS_\epsilon}I(M_i\leq t) - \frac{1}{p_1} \sum_{i\in \mathH_1 \setminus \widetilde \mathS_\epsilon} I(\widehat M_i \leq t) \right| \nonumber \\
    &\leq & \frac{2|\widetilde \mathS_\epsilon|}{p_1} + \left| \frac{1}{p_1}\sum\limits_{i \in \mathH_1 \setminus \widetilde \mathS_\epsilon}I(M_i\leq t) - \frac{1}{p_1} \sum_{i\in \mathH_1 \setminus \widetilde \mathS_\epsilon} I(\widehat M_i \leq t) \right| \nonumber  \\
    &=& o(1) + \mathJ_{4,1}(t), \quad a.s. \label{eq-median fdr theorem-22} 
\end{eqnarray}
where $o(1)$ is uniform in $t\in [0,1]$, 
\begin{eqnarray}
    \mathJ_{4,1}(t) = \left| \frac{1}{p_1}\sum\limits_{i \in \mathH_1 \setminus \widetilde \mathS_\epsilon}I(M_i\leq t) - \frac{1}{p_1} \sum_{i\in \mathH_1 \setminus \widetilde \mathS_\epsilon} I(\widehat M_i \leq t) \right|.  \label{eq-median fdr theorem-23} 
\end{eqnarray}
For $\mathJ_{4,1}(t)$, based on the definition of $\widetilde \mathS_\epsilon$, for every $i \in \mathH_1 \setminus \widetilde \mathS_\epsilon$,  
\begin{eqnarray*}
\left| \widehat \FF_{0,k}(Z_{i,k})  - \FF_{0,k}(Z_{i,k})\right| \leq \epsilon, \quad \text{for every }k=1,\ldots, K, \label{eq-median fdr theorem-24} 
\end{eqnarray*}
and thus based on Lemma \ref{simple property for median of two sequence}, we have
\begin{eqnarray*}
    \left|\widehat M_i - M_i \right| \leq \epsilon, \label{eq-median fdr theorem-25} 
\end{eqnarray*}
which further implies 
\begin{eqnarray}
    \frac{1}{p_1}\sum\limits_{i \in \mathcal{H}_1\setminus \widetilde \mathS_\epsilon}I(M_i\leq t-\epsilon)\leq\frac{1}{p_1}\sum\limits_{i \in \mathcal{H}_1\setminus \widetilde\mathS_\epsilon}I(\hat{M}_{i}\leq t)\leq \frac{1}{p_1}\sum\limits_{i \in \mathcal{H}_1\setminus \widetilde \mathS_\epsilon}I(M_i\leq t+\epsilon). \label{eq-median fdr theorem-26} 
\end{eqnarray}
Clearly
\begin{eqnarray}
    \frac{1}{p_1}\sum\limits_{i \in \mathcal{H}_1\setminus \widetilde \mathS_\epsilon}I(M_i\leq t-\epsilon)\leq\frac{1}{p_1}\sum\limits_{i \in \mathcal{H}_1\setminus \widetilde \mathS_\epsilon}I({M}_{i}\leq t)\leq \frac{1}{p_1}\sum\limits_{i \in \mathcal{H}_1\setminus \widetilde \mathS_\epsilon}I(M_i\leq t+\epsilon). \label{eq-median fdr theorem-27} 
\end{eqnarray}
Combining \eqref{eq-median fdr theorem-23}, \eqref{eq-median fdr theorem-26} and \eqref{eq-median fdr theorem-27}, we have 
\begin{eqnarray}
    \mathJ_{4,1}(t) &\leq& \left|\frac{1}{p_1}\sum\limits_{i \in \mathcal{H}_1\setminus \widetilde \mathS_\epsilon}I(M_i\leq t+\epsilon)- \frac{1}{p_1}\sum\limits_{i \in \mathcal{H}_1\setminus \widetilde \mathS_\epsilon}I(M_i\leq t-\epsilon) \right| \nonumber \\
    & = & \frac{1}{p_1} \sum_{i\in \mathcal{H}_1\setminus \widetilde \mathS_\epsilon} I\left\{ M_i \in (t-\epsilon, t+\epsilon] \right\}  = o(1) + G(t+\epsilon) - G(t-\epsilon), \quad a.s. \label{eq-median fdr theorem-28}  
\end{eqnarray}
based on \eqref{eq-median fdr theorem-21} and Lemma \ref{key lemma in median approcah 1}, where $G(\cdot)$ is defined by \eqref{eq-def-G-t}. Based on Condition \ref{Condition-5},  \eqref{eq-median fdr theorem-28}, and that $\epsilon$ is arbitrary, we have 
\begin{eqnarray*}
    \sup_{t\in [0,1]}\mathJ_{4,1}(t) = o(1), \quad a.s. \label{eq-median fdr theorem-30}
\end{eqnarray*}
which together with \eqref{eq-median fdr theorem-22} implies that $\mathJ_4(t)$ defined by \eqref{eq-median fdr theorem-17} satisfies 
\begin{eqnarray}
    \sup_{t\in [0,1]} |\mathJ_4(t)| = \sup_{t\in [0,1]} \left|\frac{1}{p_1} \sum_{i\in \mathH_1}I(\widehat M_i\leq t) - \frac{1}{p_1} \sum_{i\in \mathH_1} I(M_i \leq t)\right| = o(1), \quad a.s.  \label{eq-median fdr theorem-31}
\end{eqnarray}
Combining \eqref{eq-median fdr theorem-16}, \eqref{eq-median fdr theorem-31}, Condition \ref{Condition-5}, and Lemma \ref{key lemma in median approcah 1} leads to 
\begin{eqnarray}
    &&\sup_{t\in[0,1]} \left|\frac{1}{p} \sum\limits_{i=1}^pI(\widehat M_i\leq t) - \left\{ \lambda_{0,0} M(t) + \lambda_{1,0} G^*(t) \right\} \right|\nonumber \\
    &\leq & \lambda_{0,0} \sup_{t\in [0,1]}\left| \frac{1}{p_0} \sum_{i\in \mathH_0} I(\widehat M_i \leq t) - M(t) \right| + \lambda_{1,0} \sup_{t \in [0,1]} \left| \frac{1}{p_1} \sum_{i\in \mathH_1} I(\widehat M_i \leq t) - G^*(t)\right| \nonumber \\
    &\leq & \lambda_{0,0} \sup_{t\in [0,1]}\left| \frac{1}{p_0} \sum_{i\in \mathH_0} I(\widehat M_i \leq t) - M(t) \right| \nonumber \\ 
    &&+ \lambda_{1,0} \sup_{t \in [0,1]} \left| \frac{1}{p_1} \sum_{i\in \mathH_1} I(\widehat M_i \leq t) - \frac{1}{p_1} \sum_{i\in \mathH_1} I(M_i \leq t)\right| \nonumber \\
    && +  \lambda_{1,0} \sup_{t \in [0,1]} \left|  \frac{1}{p_1} \sum_{i\in \mathH_1} I(M_i \leq t) - G(t)\right| + \lambda_{1,0} \sup_{t \in [0,1]} \left|  G(t) - G^*(t)\right| \nonumber \\
    &=& o(1), \quad a.s. \label{eq-median fdr theorem-32}
\end{eqnarray}
Thus, referring to the definition of $\widehat{\FDP}(t)$ in \eqref{eq-def-hat-FDP-t-median}, based on \eqref{eq-median fdr theorem-32}, Theorem \ref{theorem-2}, and Condition \ref{Condition-5},  we have 
\begin{eqnarray*}
    \widehat{\FDP}(t_\alpha^\infty) &=& \frac{\lambda_0^* M(t_\alpha^\infty)}{\lambda_0^* M(t_\alpha^\infty) + (1-\lambda_0^*)G^*(t_\alpha^\infty)} + o(1), \quad a.s. \\
    &<& \alpha -c + o(1), \quad a.s., \label{eq-median fdr theorem-33}
\end{eqnarray*}
for a universal constant $c>0$. 
Thus, as $p \to \infty $, we have 
\begin{eqnarray*}
    \widehat{\FDP}(t_\alpha^\infty)\le \alpha, \quad a.s. \label{eq-median fdr theorem-34}
\end{eqnarray*}
which together with the definition of $t_\alpha$ in \eqref{eq-t-alpha-median} concludes 
\begin{eqnarray}
    t_\alpha \geq t_\alpha^\infty > 0, \quad a.s.  \label{eq-median fdr theorem-35}
\end{eqnarray}
Combining \eqref{eq-median fdr theorem-32} and \eqref{eq-median fdr theorem-35}, we have, as $p \to \infty$, 
\begin{eqnarray}
    \frac{1}{p} \max\left\{ \sum\limits_{i=1}^pI(\widehat M_i\leq t_\alpha), 1 \right\}  \geq   \frac{1}{2}\lambda_{0,0} M(t_\alpha) + \frac{1}{2}\lambda_{1,0} G^*(t_\alpha)   \geq \frac{1}{2}\lambda_{0,0} M(t_\alpha^\infty) >0, \quad a.s. \label{eq-median fdr theorem-36}
\end{eqnarray}
Combining \eqref{eq-median fdr theorem-1}, \eqref{eq-median fdr theorem-16}, \eqref{eq-median fdr theorem-2}, and \eqref{eq-median fdr theorem-36} leads to 
\begin{eqnarray*}
    \FDP(t_\alpha) \leq  \alpha + \left| \widehat \FDP(t_\alpha) - \FDP(t_\alpha) \right| \leq \alpha + \frac{o(1) -\lambda_{0,0} o(1)}{\frac{1}{2}\lambda_{0,0} M(t_\alpha^\infty )} = \alpha + o(1), \quad a.s.
\end{eqnarray*}
We complete the proof of the theorem. 


\section{Details for Section 3.2 of the Main Article} \label{Section-Details Main 3.2}

In this section, we provide more details about the development of likelihood and the algorithm to estimate model parameters in Section 3.2 of the main article. 

We pretend that $Z_1, \ldots, Z_p$ are i.i.d., and
\begin{eqnarray*}
    Z_i \sim \FF_\theta(z) = \sum_{m=0}^1 \lambda_m \prod_{k=1}^K F_{m,k}(z_k). 
\end{eqnarray*}

For each $i,j=1,\ldots, p; \  i\neq j$, let $I_{i,j} = I(Z_i \leq Z_j) = (I_{i,j,1}, \ldots, I_{i,j,K})^T$, where $I_{i,j,k} = I(Z_{i,k} \leq Z_{j,k})$. Then, for any $z = (z_1, \ldots, z_K)^T \in \mathbb{R}^K$, we have:
\begin{eqnarray*}
    P\left(I(Z_i \leq Z_j) = I(z\leq Z_j) \Big| Z_j \right) = \sum_{m=0}^1\lambda_m \prod_{k=1}^K F_{m,k}^{I(z_k \leq Z_{j,k})}(Z_{j,k})\cdot \bar F_{m,k}^{I(z_k > Z_{j,k})}(Z_{j,k}),
\end{eqnarray*}
where $\bar F_{m,k}(\cdot) = 1-F_{m,k}(\cdot)$. Then, by treating $I_{i,j}, i = 1,\ldots, p$ as multivariate responses and $Z_j, j=1,\ldots, p$ as covariates, the conditional log-likelihood is given by 
\begin{eqnarray*}
    \ell_j(\theta) =  \sum_{i=1}^p \log \left[ \sum_{m=0}^1\lambda_m \prod_{k=1}^K F_{m,k}^{I_{i,j,k}}(Z_{j,k})\cdot \bar F_{m,k}^{1-I_{i,j,k}}(Z_{j,k}) \right]. 
\end{eqnarray*}
We then aggregate this conditional log-likelihood across the $Z_j$'s using the idea of composite log-likelihood \citep{Cristiano2011, Kwonsang2023}, and obtain our log-likelihood objective function: 
\begin{eqnarray*}
    \ell(\theta) = \sum_{j=1}^p \sum_{i=1}^p \log \left[ \sum_{m=0}^1\lambda_m \prod_{k=1}^K F_{m,k}^{I_{i,j,k}}(Z_{j,k})\cdot \bar F_{m,k}^{1-I_{i,j,k}}(Z_{j,k}) \right]. 
\end{eqnarray*}
The estimator for $\theta$ is defined as 
\begin{eqnarray}
    \widehat \theta = \{\widehat \lambda_0, \widehat \FF_{0,1}(\cdot), \ldots, \widehat \FF_{1,K}(\cdot)\} = \arg\max_{\theta \in \Theta} \ell(\theta), \label{eq-theta-hat-def}
\end{eqnarray}
where 
\[\Theta = \left\{\theta: \lambda_0 \in (0,1), F_{m,k}(\cdot) \text{ are c.d.f.s }, m =0,1; k=1,\ldots, K\right\}.\]

To solve the optimization in \eqref{eq-theta-hat-def} and ensure that the c.d.f. estimates represent genuine distributions, we incorporate the EM algorithm with the pool-adjacent-violator algorithm (PAVA) and active set methods \citep{Ayer1955, de2009}. For $s = 1,2,\ldots$, let $\theta^{(s)} = \left\{\lambda_m^{(s)}, \FF_{m,k}^{(s)}, m=0,1; k = 1,\ldots, K\right\}$ be the estimate in the $s$th step. Let 
\begin{eqnarray*}
    \lambda_{i,j,m}^{(s)} = \frac{\lambda_m^{(s)}\prod_{k=1}^K \left[ \left\{ \FF_{m,k}^{(s)}(Z_{j,k}) \right\}^{I_{i,j,k}} \left\{ \bar\FF_{m,k}^{(s)}(Z_{j,k}) \right\}^{1-I_{i,j,k}}  \right]}{\sum_{m_1=1}^2 \lambda_{m_1}^{(s)}\prod_{k=1}^K \left[ \left\{ \FF_{m_1,k}^{(s)}(Z_{j,k}) \right\}^{I_{i,j,k}} \left\{ \bar\FF_{m_1,k}^{(s)}(Z_{j,k}) \right\}^{1-I_{i,j,k}}  \right]}. 
\end{eqnarray*}
Then, $\lambda_m^{(s+1)}$ and $\FF^{(s+1)}$, for $m = 0,1$ and $k = 1, \ldots, K$, are updated as the maximizer of 
\begin{equation*}
\sum_{j=1}^p \sum_{i=1}^p \sum_{m=0}^1 \lambda_{i,j,m}^{(s)} \log \left[\lambda_m \prod_{k=1}^p \left\{F_{m,k}^{I_{i,j,k}}(Z_{j,k}) \bar F_{m,k}^{1-I_{i,j,k}}(Z_{j,k}) \right\}\right],
\end{equation*}
subject to $\lambda_0 + \lambda_1 = 1$ and $F_{m,k}$ is a c.d.f. for each $m = 0, 1; k = 1,\ldots, K$. This leads to 
\begin{eqnarray*}
    \lambda_0^{(s+1)} = \frac{1}{p^2}\sum_{j=1}^p\sum_{i=1}^p  \lambda_{i,j,0}^{(s)},
\end{eqnarray*}
$\lambda_1^{(s+1)} = 1 - \lambda_0^{(s+1)}$, and 
\begin{eqnarray}
    \FF_{m,k}^{(s+1)}(\cdot) &=& \arg\max_{F_{m,k}}\sum_{j=1}^p \sum_{i=1}^p \lambda_{i,j,m}^{(s)} \left\{ I_{i,j,k} \log F_{m,k}(Z_{j,k}) + (1-I_{i,j,k}) \log \bar F_{m,k}(Z_{j,k}) \right\} \nonumber \\
    &=& \arg\max_{F_{m,k}} \sum_{j=1}^p w_{m,j}^{(s)} \left[ \xi_{m,k,j}^{(s)} \log F_{m,k}(Z_{j,k}) + \left\{1-\xi_{m,k,j}^{(s)}\right\} \log \bar F_{m,k}(Z_{j,k}) \right], \label{eq-F-est-algorithm}
\end{eqnarray}
subject to $F_{m,k}(\cdot)$ being a c.d.f., where
\begin{eqnarray*}
    \xi_{m,k,j}^{(s)} 
    =\frac{\sum_{i=1}^p \lambda_{i,j,m}^{(s)} I_{i,j,k}}{\sum_{i=1}^p \lambda_{i,j,m}^{(s)}}, \qquad  w_{m,j}^{(s)} = \sum_{i=1}^p \lambda_{i,j,m}^{(s)}. 
\end{eqnarray*}
The solution for \eqref{eq-F-est-algorithm} is equivalent to that of the weighted isotonic regression: 
$$
\FF_{m,k}^{(s+1)}(\cdot) = \arg\min_{F_{m,k}}\sum_{j=1}^p w_{m,j}^{(s)}\left\{ \xi_{m,k,j}^{(s)} - F_{m,k}(t_{j,k}) \right\}^2,
$$
subject to $F_{m,k}(\cdot)$ being a c.d.f., which can be solved by the PAVA and active set methods. 

The next theorem shows that the above algorithm increases $\ell(\theta)$ in every updating step; the proof is similar to that of Theorem 1 in \citet{YuQinLi2026}. 

\begin{theorem}
For every $s\geq 1$ and initial value $\theta^{(1)} \in \Theta$, we have $\ell(\theta^{(s+1)}) \geq \ell(\theta^{(s)})$. 
\end{theorem}
\proof We can write 
\begin{eqnarray}
\ell(\theta) &=& \sum_{j=1}^p \sum_{i=1}^p \log \left[\sum_{m=0}^1 \lambda_m \prod_{k=1}^K \left\{F_{m,k}^{I(Z_{i,k} \leq Z_{j,k})}(Z_{j,k}) \bar F_{m,k}^{I(Z_{i,k}> Z_{j,k})}(Z_{j,k}) \right\}\right] \nonumber \\
&=& \sum_{j=1}^p \sum_{i=1}^p \log \left\{\sum_{m=0}^1 \lambda_m G_m(Z_i, Z_j)\right\}, \nonumber \label{eq-Appendix-A-1}
\end{eqnarray}
where 
\begin{eqnarray}
    G_m(Z_i, Z_j) = \prod_{k=1}^K \left\{F_{m,k}^{I_{i,j,k}}(Z_{j,k}) \bar F_{m,k}^{1-I_{i,j,k}}(Z_{j,k}) \right\}. \label{eq-Appendix-A-1-1}
\end{eqnarray}
Suppose that $\theta^{(1)}$ is updated from $\theta$ based on our algorithm. It suffices to show that $\ell\left(\theta^{(1)}\right) \geq  \ell\left(\theta\right)$. We consider
\begin{eqnarray}
\ell(\theta^{(1)}) -  \ell(\theta) &=& \sum_{j=1}^p \sum_{i=1}^p \log \left\{\frac{\sum_{m=0}^1 \lambda_m^{(1)} G_m^{(1)}(Z_i, Z_j)}{\sum_{m=0}^1 \lambda_m G_m(Z_i, Z_j)}\right\} \nonumber \\
&=& \sum_{j=1}^p \sum_{i=1}^p \log \sum_{m = 0}^1 \lambda_{i,j,m} \frac{\lambda_m^{(1)} G_m^{(1)}(Z_i, Z_j)}{\lambda_m G_m(Z_i, Z_j)} \nonumber \\
&\geq& \sum_{j=1}^p \sum_{i=1}^p \sum_{m = 0}^1 \lambda_{i,j,m} \log \frac{\lambda_m^{(1)} G_m^{(1)}(Z_i, Z_j)}{\lambda_m G_m(Z_i, Z_j)} \nonumber \\
&=& \left\{\ell_1\left(\lambda_0^{(1)}\right) - \ell_1(\lambda_0)\right\}  + \left\{\ell_2\left(G_0^{(1)}, G_1^{(1)}\right) - \ell_2\left( G_0, G_1\right), \label{eq-Appendix-A-2}\right\}
\end{eqnarray}
where
\begin{eqnarray}
\lambda_{i,j,m} &=& \frac{\lambda_m G_m(Z_i, Z_j)}{\sum_{m=0}^1 \lambda_m G_m(Z_i, Z_j)} \label{eq-Appendix-A-3}, \nonumber \\
\ell_1(\lambda_0) &=&  \sum_{j=1}^p \sum_{i=1}^p \sum_{m = 0}^1 \lambda_{i,j,m} \log \lambda_m \label{eq-Appendix-A-4},\\
\ell_2\left( G_0, G_1\right) &=& \sum_{j=1}^p \sum_{i=1}^p \sum_{m = 0}^1 \lambda_{i,j,m} \log G_m(Z_i, Z_j).  \label{eq-Appendix-A-5}
\end{eqnarray}
Applying the method of Lagrange multiplier, it is straightforward to verify that subject to $ \lambda_0 + \lambda_1 = 1$, $\ell_1(\lambda_0)$ given in \eqref{eq-Appendix-A-4} is maximized when 
\begin{eqnarray*}
    \lambda_m^* = \frac{\sum_{j=1}^p\sum_{i=1}^p \lambda_{i,j,m}}{p^2}, \label{eq-Appendix-A-6}
\end{eqnarray*}
which is equal to $\lambda_m^{(1)}$ given in our algorithm, and thus 
\begin{eqnarray}
    \ell_1\left(\lambda_0^{(1)}\right) - \ell_1(\lambda_0) \geq 0. \label{eq-Appendix-A-7}
\end{eqnarray}
Based on \eqref{eq-Appendix-A-1-1} and \eqref{eq-Appendix-A-5}, we have
\begin{eqnarray}
    &&\ell_2\left( G_0, G_1\right) \nonumber \\
    &=& \sum_{j=1}^p \sum_{i=1}^p \sum_{m = 0}^1 \lambda_{i,j,m} \log G_m(Z_i, Z_j) \nonumber \\
    &=& \sum_{j=1}^p \sum_{i=1}^p \sum_{m = 0}^1 \sum_{k=1}^p \lambda_{i,j,m} \Big\{I_{i,j,k}\log F_{m,k}(Z_{j,k}) + (1- I_{i,j,k})\log \bar F_{m,k}(Z_{j,k}) \Big\} \nonumber \\
    &=& \sum_{m=0}^1 \sum_{k=1}^p \ell_{2, m,k}(F_{m,k}), \label{eq-Appendix-A-8}
\end{eqnarray}
where 
\begin{eqnarray*}
    \ell_{2, m,k}(F_{m,k}) = \sum_{j=1}^p\sum_{i=1}^p \lambda_{i,j,m} \Big\{I_{i,j,k}\log F_{m,k}(Z_{j,k}) + (1- I_{i,j,k})\log \bar F_{m,k}(Z_{j,k}) \Big\}.  \label{eq-Appendix-A-9}
\end{eqnarray*}
Based on our algorithm, $\FF_{m,k}^{(1)}$ is the maximizer of $\ell_{2, m,k}(F_{m,k})$, and therefore $\ell_{2, m,k}(\FF_{m,k}^{(1)})\geq \ell_{2, m,k}(F_{m,k})$, which together with \eqref{eq-Appendix-A-8} implies 
\begin{eqnarray}
    \ell_2\left( G_0^{(1)}, G_1^{(1)}\right) - \ell_2\left( G_0, G_1\right) \geq 0. \label{eq-Appendix-A-10}
\end{eqnarray}
Combining \eqref{eq-Appendix-A-2}, \eqref{eq-Appendix-A-7}, and \eqref{eq-Appendix-A-10}, we conclude that 
$
\ell\left(\theta^{(1)}\right) \geq  \ell\left(\theta\right), 
$
completing the proof of this theorem. \epf

\section{Additional Simulation Examples} \label{Simulation-one-sided}


In addition to Studies 1--6 in Section 5 of the main article, 
we investigate additional examples under one-sided hypotheses:
\begin{eqnarray*}
    H_{0,j}: \mu_{0,j} \geq \mu_{1,j} \quad \text{versus} \quad H_{1,j}: \mu_{0,j} < \mu_{1,j}, \quad j=1,\ldots,p.
\end{eqnarray*}

The simulation settings largely follow those in Section 5.1 of the main article, except that the distributions $F(\cdot)$ and $G(\cdot)$ are generated differently. Specifically, for the first group, we generate $U_{i,j} \sim F_j$ for $i=1,\ldots,n_0$ and $j=1,\ldots,p$, allowing $F_j$ to vary across genes. For the second group, we generate $V_{i,j} \sim \mathcal{N}(1,1)$ for $i=1,\ldots,n_1$ and $j=1,\ldots,p$, so that $\mu_{1,j} = 1$ for all $j$. The specifications of $F_j$ are given below.

\begin{itemize}
    \item Study 7: For each $j=1,\ldots,p$, if $j \in \mathcal{H}_0$, generate $\mu_{0,j} \sim \text{Uniform}(1.2, 1.25)$; if $j \in \mathcal{H}_1$, generate $\mu_{0,j} \sim \text{Uniform}(0.75, 0.8)$. Then $F_j \sim  \mathcal{N}(\mu_{0,j},1)$. The covariance matrix $\Sigma_\rho$ follows a compound symmetry structure with $\sigma_{i,j} = \rho$ for $i \neq j$.

    \item Study 8: The setup for $\mu_{0,j}$ and $F_j$ is the same as in Study 7. The covariance matrix $\Sigma_\rho$ follows an $\text{AR}(1)$ structure with $\sigma_{i,j} = \rho^{|i-j|}$.

    \item Study 9: For each $j=1,\ldots,p$, if $j \in \mathcal{H}_0$, generate $\mu_{0,j} \sim 1 + 0.12\,\text{Beta}(9,1)$; if $j \in \mathcal{H}_1$, generate $\mu_{0,j} \sim 0.6 + 0.12\,\text{Beta}(9,1)$. Then $F_j \sim \mathcal{N}(\mu_{0,j},1)$. The covariance matrix $\Sigma_\rho$ follows a compound symmetry structure with $\sigma_{i,j} = \rho$ for $i \neq j$.

    \item Study 10: The setup for $\mu_{0,j}$ and $F_j$ is the same as in Study 9. The covariance matrix $\Sigma_\rho$ follows an $\text{AR}(1)$ structure with $\sigma_{i,j} = \rho^{|i-j|}$.
\end{itemize}

We note that the SDA method proposed by \citet{DuGuoSunZou2023SDA} is developed for two-sided testing. The PFA method of \citet{fan2012estimating} can, in principle, be applied to one-sided tests; however, its available implementation in \texttt{R} package \texttt{pfa} \citep{RpackagePFA} is developed for two-sided hypotheses. Therefore, in this section, we compare our method only with the BH procedure of \citet{BenjaminiHochberg1995}.

The simulated data are processed in the same manner as in Section 5.2 of the main article to obtain the test statistics. The results for Studies 7--10 are presented in Figure~\ref{Figure-Examples-7--10}. From the figure, we have the following observations. For all the examples presented, the BH procedure is overly conservative, with FDP values almost approaching zero. In contrast, our method maintains reasonable FDP control, with all examples showing FDP values very close to the nominal level across all values of \( \rho \). Regarding TDP, our method outperforms BH in every example. For Studies 7, 8, and 9, the TDP of our method is notably high, with values approaching 1. For Study 10, the TDP is around 0.83. In comparison, the TDP of the BH method is much lower, around 0.3 for Studies 7 and 8, and about 0.6 for Studies 9 and 10.

The observed differences between our method and the BH procedure are expected. Under the null hypothesis, the means of the resulting two-sample test statistics are not centered around zero, but rather greater than zero. Consequently, the empirical distributions of the p-values under the null hypothesis are not uniformly distributed but are skewed to the left. This skewness causes the BH procedure to be overly conservative, resulting in a loss of power. In fact, we expect that similar behavior might occur in any multiple testing procedure that relies on the theoretical (asymptotic) distributions of individual test statistics, regardless of how well the method handles dependence structures. The root cause of this problem lies not in the dependence structure, but in the empirical performance of the test statistics under the null hypothesis. In contrast, our method directly estimates the empirical distributions of the null statistics, allowing it to bypass this skewness problem. As a result, our method provides reasonable control of the FDP at the nominal level of 0.2, while maintaining good power, as evidenced by the significantly higher TDP values.

    \begin{figure}[H]
    \centering
    \begin{minipage}{0.49\linewidth}
        \centering
        \includegraphics[width=\linewidth]{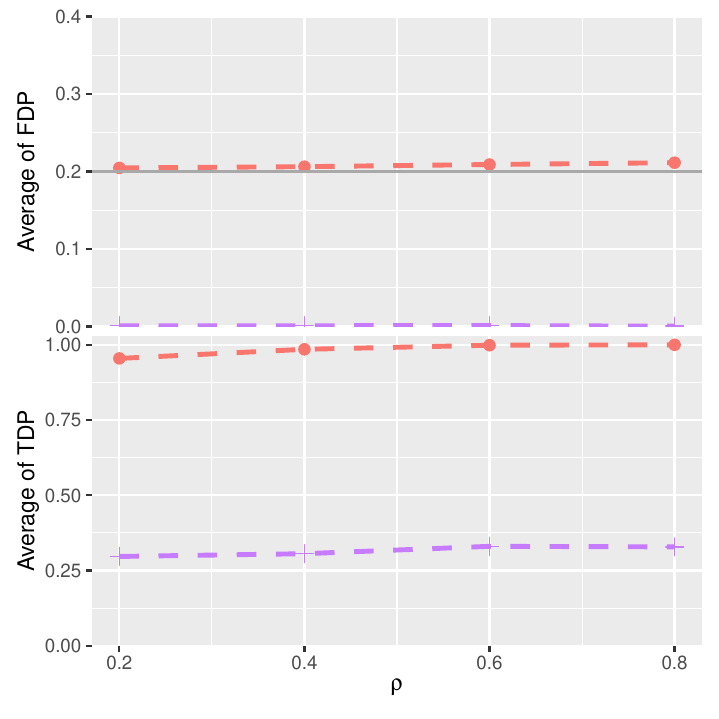}
    \end{minipage}\hfill
    \begin{minipage}{0.49\linewidth}
        \centering
        \includegraphics[width=\linewidth]{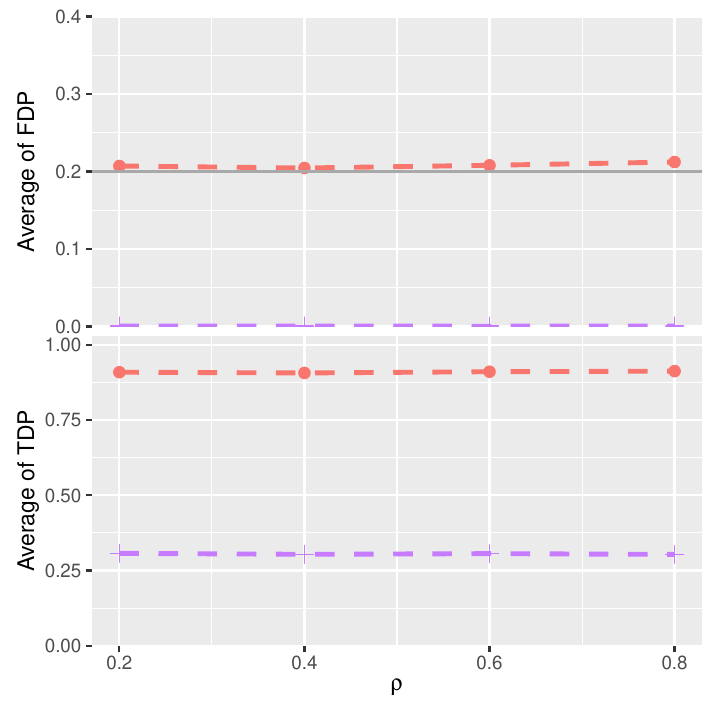}
    \end{minipage}

    \vspace{0.3em}

    \begin{minipage}{0.49\linewidth}
        \centering
        \includegraphics[width=\linewidth]{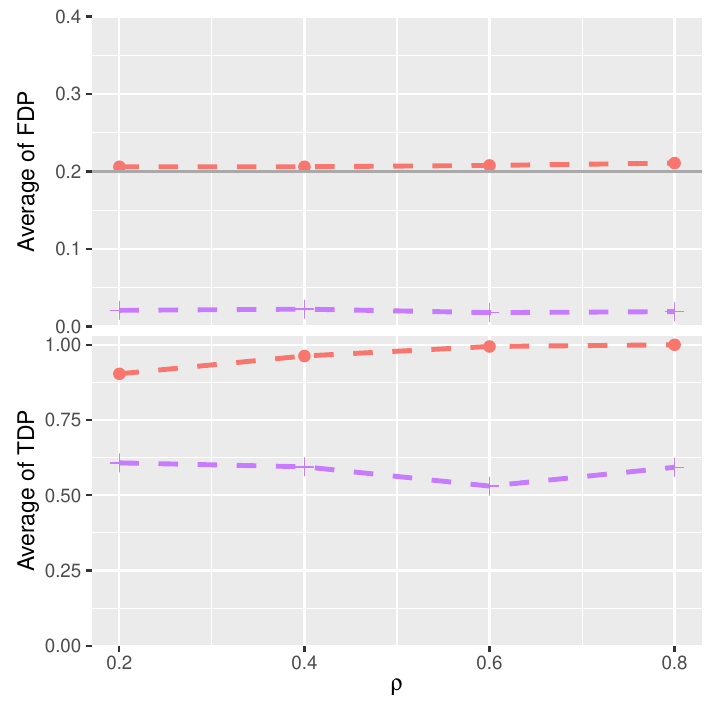}
    \end{minipage}\hfill
    \begin{minipage}{0.49\linewidth}
        \centering
        \includegraphics[width=\linewidth]{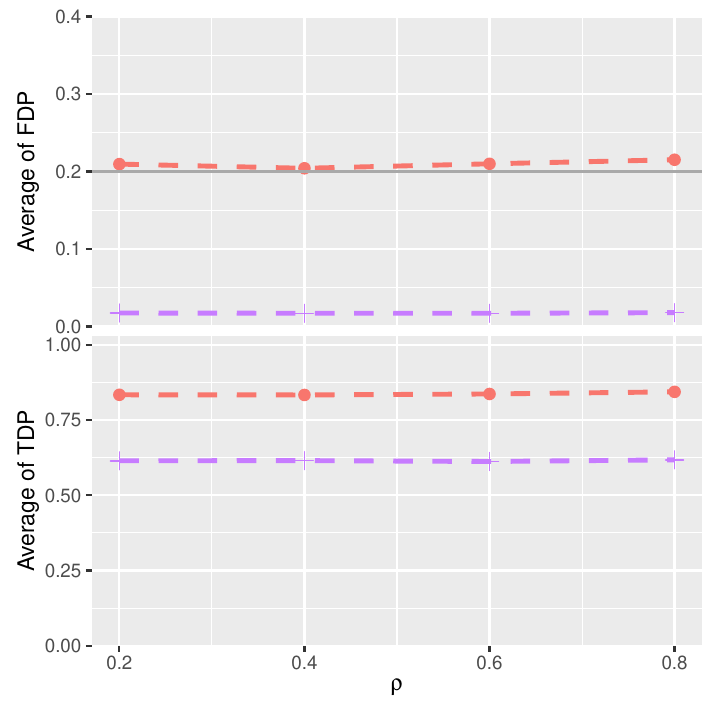}
    \end{minipage}

    \vspace{0.4em}

    \includegraphics[width=0.7\linewidth]{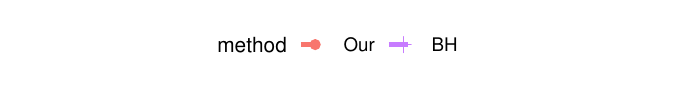}
    \vspace{0.3em}
    \caption{Results of Studies 7--10: top left panel: Study 7; top right panel: Study 8; bottom left panel: Study 9; bottom right panel: Study 10.}
    \label{Figure-Examples-7--10}
\end{figure}

\bibliographystyle{apalike}
\bibliography{ref}